\newcommand{\proc}[1]{\textnormal{\scshape#1}}
\newcommand{\kw}[1]{{\textbf{#1}}} 
\newcommand{\To}{{\kw{to}}} 
\newcommand{\func}[1]{%
  \ensuremath{\mathop{#1}\nolimits}}
\providecommand{\abs}[1]{\ensuremath{\lvert #1 \rvert}}
\providecommand{\nat}[0]{\ensuremath{\mathbb N}}
\providecommand{\seq}[3]{\ensuremath{#1_{#2}, \dotsc, #1_{#3}}}
\spnewtheorem{clm}{Claim}{\itshape}{\rmfamily}
\DeclareMathOperator{\lca}{lca}
\DeclareMathOperator{\level}{level}
\DeclareMathOperator{\father}{f}
\DeclareMathOperator{\values}{{value}}
\DeclareMathOperator{\state}{{state}}
\DeclareMathOperator{\rank}{{rank}}
\DeclareMathOperator{\sig}{{sig}}
\renewcommand{\O}{{\mathcal{O}}}
\newcommand{\frowny}{{\ensuremath{\frownie}}}
\DeclareMathOperator{\Ker}{Ker}
\DeclareMathOperator{\minlevel}{min-level}
\newcommand{\kminimise}{\proc{$k$-Minimise}}
\newcommand{\distancetree}{\proc{Distance-Tree}}
\newcommand{\inlevel}{{\ensuremath{\mathop{in\text{-}level}\nolimits}}}
\newcommand{\makeset}[2]{\ensuremath{ \{ #1 \: | \: #2 \} }}
\newcommand{\krowne}[1][k]{{\sim_{#1}}}
\newcommand{\knierowne}[1][k]{{\not \sim_{#1}}}
\newcommand{\automat}[1][q_0]{{\langle Q, \Sigma , \delta, #1 , F \rangle}}
\providecommand{\simdiff}[0]{\ensuremath{\mathop{\bigtriangleup}}}
\definecolor{myBlue}{rgb}{0.5,0.5,1}
\definecolor{myRed}{rgb}{0.9,0.3,0.1}
\definecolor{myYellow}{rgb}{0.9,0.9,0}
\definecolor{myGreen}{rgb}{0.1,1,0}
\newcommand{\artur}[1]{\noindent\colorbox{myYellow}{Artur: #1}}
\title{On minimising automata with errors
    }
\author{Pawe\l{} Gawrychowski \inst{1} \fnmsep \thanks{Supported by
    MNiSW grant number N~N206 492638, 2010--2012.} \and Artur Je\.z
  \inst{1} \fnmsep ${}^{\star}$ \and Andreas Maletti \inst{2}
  \fnmsep \thanks{Supported by the \emph{Ministerio de Educaci\'on y
      Ciencia} (MEC) grant JDCI-2007-760 and the German Research
    Foundation~(DFG) grant MA/4959/1-1.}}
\authorrunning{P.~Gawrychowski \and A.~Je\.z \and A.~Maletti}
\institute{Institute of Computer Science, University of Wroc{\l}aw \\
  ul.\ Joliot-Curie~15, 50-383 Wroc{\l}aw, Poland \\
  \email{\{gawry,aje\}@cs.uni.wroc.pl} \and
  Institute for Natural Language Processing, Universit\"at Stuttgart \\
  Azenbergstra\ss e~12, 70174 Stuttgart, Germany \\
  \email{andreas.maletti@ims.uni-stuttgart.de}}
\begin{document}
 
\maketitle
\begin{abstract}
  The problem of $k$-minimisation for a DFA~$M$ is the computation of
  a smallest DFA~$N$ (where the size~$\abs M$ of a DFA~$M$ is the size
  of the domain of the transition function) such that $L(M) \simdiff
  L(N) \subseteq \Sigma^{<k}$, which means that their recognized
  languages differ only on words of length less than~$k$.  The
  previously best algorithm, which runs in time~$\mathcal{O}(\abs M \log^2 n)$
  where $n$~is the number of states, is extended to DFAs with partial
  transition functions.  Moreover, a faster $\mathcal{O}(\abs M \log
  n)$~algorithm for DFAs that recognise finite languages is presented.
  In comparison to the previous algorithm for total DFAs, the new
  algorithm is much simpler and allows the calculation of
  a~$k$-minimal DFA for each~$k$ in parallel.  Secondly, it is
  demonstrated that calculating the least number of introduced errors
  is hard: Given a DFA~$M$ and numbers $k$~and~$m$, it is
  \mbox{NP-hard} to decide whether there exists a \mbox{$k$-minimal}
  DFA~$N$ with $\abs{L(M) \simdiff L(N)} \leq m$.  A~similar result
  holds for hyper-minimisation of DFAs in general:  Given a DFA~$M$ and
  numbers $s$~and~$m$, it is NP-hard to decide whether there exists a
  DFA~$N$ with at most $s$~states such that $\abs{L(M) \simdiff L(N)}
  \leq m$.

  \keywords{finite automaton, minimisation, lossy compression}
\end{abstract}


\section{Introduction}
\label{sec:Intro}
Deterministic finite automata~(DFAs) 
are one of the simplest devices recognising languages.  The study of
their properties is motivated by (i)~their simplicity, which yields
efficient operations, (ii)~their wide-spread applications, (iii)~their
connections to various other areas in theoretical computer science,
and (iv)~the apparent beauty of their theory.  A DFA~$M$ is a
quintuple~$\automat$, where $Q$~is its finite state-set, $\Sigma$~is
its finite alphabet, $\delta \colon Q \times \Sigma \to Q$ is its
partial transition function, $q_0 \in Q$ is its starting state, and $F
\subseteq Q$ is its set of accepting states.  The DFA~$M$ is total if
$\delta$ is total.  The transition function~$\delta$ is extended to
$\delta \colon Q \times \Sigma^* \to Q$ in the standard
way.  
The language~$L(M)$
that is \emph{recognised by the DFA}~$M$ is $L(M) =
\makeset{w}{\delta(q_0, w) \in F}$.

Two DFAs $M$~and~$N$ are \emph{equivalent} (written as~$M \equiv N$) if
$L(M) = L(N)$.  A DFA~$M$ is \emph{minimal} if all equivalent DFAs are
larger.  One of the classical DFA~problems is the \emph{minimisation
  problem}, which given a DFA~$M$ asks for the (unique) minimal
equivalent DFA.  The asymptotically fastest DFA minimisation algorithm
runs in time~$\mathcal{O}(\abs \Sigma\, n \log n)$ and is due to
\textsc{Hopcroft}~\cite{DFAminimisation,DFAGries}, where $n = \abs Q$;
its variant for partial DFAs is known to run in time~$\mathcal{O}(\abs M \log
n)$.


Recently, minimisation was also considered for
hyper-equivalence~\cite{Badr,BadrRAIRO}, which allows a finite
difference in the languages.  Two languages $L$~and~$L'$ are
\emph{hyper-equivalent} if $\abs{L \simdiff L'} < \infty$, where
$\simdiff$~denotes the symmetric difference of two sets.  The DFAs
$M$~and~$N$ are hyper-equivalent if their recognised languages are.
The DFA~$M$ is \emph{hyper-minimal} if all hyper-equivalent DFAs are
larger.  The algorithms for hyper-minimisation~\cite{BadrRAIRO,Badr}
were gradually improved over time to the currently best run-time
$\mathcal{O}(\abs M \log^2 n)$~\cite{HolzerMalettifull,GawrychowskiJezMFCS},
which can be reduced to $\mathcal{O}(\abs M \log n)$ using a~strong
computational model (with randomisation or special memory access).
Since classical DFA minimisation linearly reduces to
hyper-minimisation~\cite{HolzerMalettifull}, an algorithm that is
faster than $\mathcal{O}(\abs M \log n)$ seems unlikely.  Moreover, according
to the authors' knowledge, randomisation does not help
\textsc{Hopcroft}'s~\cite{Restivo} or any other
DFA~minimisation algorithm.  Thus, the randomised hyper-minimisation
algorithm also seems to be hard to improve.

Already~\cite{BadrRAIRO} introduces a stricter notion of
hyper-equivalence. Two languages $L$~and~$L'$ are $k$-\emph{similar}
if they only differ on words of length less than~$k$.  Analogously, DFAs
are $k$-similar if their recognised languages are.  A DFA~$M$ is
\mbox{\emph{$k$-minimal}} if all $k$-similar DFAs are larger, and the
\emph{$k$-minimisation problem} asks for a $k$-minimal DFA that is
$k$-similar to the given DFA~$M$.  The known
algorithm~\cite{GawrychowskiJezMFCS} for $k$-minimisation of total
DFAs runs in time~$\mathcal{O}(\abs M \log^2 n)$, however it is quite
complicated and fails for non-total DFAs.

In this contribution, we present a simpler $k$-minimisation algorithm
for general DFAs, which still runs in time~$\mathcal{O}(\abs M \log^2 n)$.
This represents a significant improvement compared to the complexity
for the corresponding total DFA if the transition table of~$M$ is
sparse.  Its running time can be reduced if we allow a stronger
computational model.  In addition, the new algorithm runs in
time~$\mathcal{O}(\abs M \log n)$ for every DFA~$M$ that recognises a finite
language.  Finally, the new algorithm can calculate (a compact
representation of) a $k$-minimal DFA for each possible~$k$ in a
single run (in the aforementioned run-time).  Outputting all the
resulting DFAs might take time $\Omega(n \abs M \log ^2 n)$.

Although $k$-minimisation can be efficiently performed, no uniform
bound on the number of introduced errors is provided.  In the case of
hyper-minimisation, it is known~\cite{mal10c} that the \emph{optimal}
(i.e., the DFA committing the least number of errors) hyper-minimal
DFA and the number of its errors~$m$ can be efficiently computed.
However, this approach does not generalise to $k$-minimisation.
We show that this is for a reason: already the problem of calculating the number~$m$ of
errors of an optimal $k$-minimal automaton is \mbox{NP-hard}.

Finally, for some applications it would be beneficial if we could
balance the number~$m$ of errors against the compression
rate~$\frac{\abs N}{\abs M}$.  Thus, we also consider the question
whether given a DFA~$M$ and two integers $s$~and~$m$ there is a
DFA~$N$ with at most $s$~states that commits at most $m$~errors (i.e.,
$\abs{L(M) \simdiff L(N)} \leq m$).  Unfortunately, we show that this
problem is also \mbox{NP-hard}.


\section{Preliminaries}
\label{sec:prel}
We usually use the two DFAs $M = \automat$ and $N =
\langle P, \Sigma, \mu, p_0, F'\rangle$.  We also write~$\delta(w)$
for~$\delta(q_0,w)$.  The \emph{right-language} $L_M(q)$ of a state~$q
\in Q$ is the language $L_M(q) = \makeset{w}{\delta(q, w) \in F}$
recognised by~$M$ starting in state~$q$.  Minimisation of DFAs is
based on calculating the equivalence~$\equiv$ between states, which is
defined by $q \equiv p$ if and only if $L_M(q) = L_N(p)$.  Similarly,
the \emph{left language} of~$q$ is the language
$\delta^{-1}(q) = \makeset{w}{\delta(w) = q}$ of words leading to~$q$
in~$M$.

For two languages $L$~and~$L'$, we define their \emph{distance}~$d(L,
L')$ as
\[ d(L,L') = \min{} \makeset{\ell}{L \cap \Sigma^{\geq \ell} = L' \cap
  \Sigma^{\geq \ell}} \enspace, \] where $\min{} \emptyset = \infty$.
Actually, $d$~is an ultrametric.
The distance~$d$
can be extended to states: $d(q, p) = d(L_M(q), L_N(p))$ for $q \in Q$
and $p \in P$.  It satisfies the simple recursive formula:
\begin{equation}
  \label{eq:distance recursion}
  d(q, p) = \begin{cases} 0 & \text{ if } q \equiv p,\\
    1 + \max{} \makeset{d(\delta(q, a), \mu(p, a))}{a \in
      \Sigma} & \text{ otherwise}. \end{cases}
\end{equation}
Since $d$~is an ultrametric on languages, \eqref{eq:distance
  recursion} yields that the distance~$d(q_1, q_2)$ between $q_1, q_2
\in Q$ in the DFA~$M$ is either infinite or small.  Formally, $d(q_1,
q_2) = \infty$ or $d(q_1, q_2) < \abs Q$.

The minimal DFAs considered in this paper are obtained mostly by state
merging.  We say that the DFA~$N$ is the result of \emph{merging
  state~$q$ to state~$p$} (assuming $q \neq p$) in~$M$ if $N$~is
obtained from~$M$ by changing all transitions ending in~$q$ to
transitions ending in~$p$ and deleting the state~$q$.  If $q$~was the
starting state, then $p$~is the new starting state.  Formally, $P = Q
\setminus \{q\}$, $F' = F \setminus \{q\}$, and
\begin{align*}
  \mu(r, a) &=
  \begin{cases}
    p & \text{if } \delta(r, a) = q \\
    \delta(r, a) & \text{otherwise,}
  \end{cases} &
  p_0 &=
  \begin{cases}
    p & \text{if } q_0 = q \\
    q_0 & \text{otherwise.}
  \end{cases}
\end{align*}
The process is illustrated in Fig.~\ref{fig:merge}.

\begin{figure}[h]
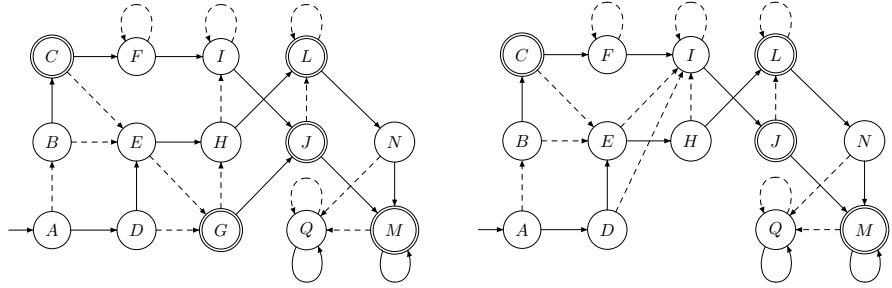

  \begin{center}
    \includegraphics[width=0.45 \textwidth]{merge3.mps} \qquad 
    \includegraphics[width=0.45 \textwidth]{merge4.mps}
  \end{center}
  \caption{Merging state~$G$ into~$I$.}
  \label{fig:merge}
\end{figure}

Finally, let~\emph{$\inlevel_M(q)$} be the length of the longest word
leading to~$q$ in~$M$.  If there is no such longest word, then
$\inlevel_M(q) = \infty$.  Formally, $\inlevel_M(q) = \sup{}
\makeset{\abs w} {w \in \delta^{-1}(q)}$ for every $q \in Q$.  The
structural characterisation of hyper-minimal
DFAs~\cite[Sect.~3.2]{BadrRAIRO} relies on a state classification into
\emph{kernel} and \emph{preamble} states.  The set~$\Ker(M)$ of kernel
states consists of all states~$q \in Q$ with $\inlevel_M(q) = \infty$,
whereas the remaining states are preamble states.  Roughly speaking,
the kernels of two hyper-equivalent and hyper-minimal automata are
isomorphic in the standard sense, and their preambles are also
isomorphic except for acceptance values.

\section{Efficient $k$-minimisation}
\label{k-minimisation}
\subsection{$k$-similarity and $k$-minimisation}
\label{k-minimisation2}
Two languages $L$~and~$L'$ are \emph{$k$-similar} if they only differ
on words of length smaller than~$k$, and the two DFAs $M$~and~$N$ are
$k$-similar if their recognised languages are.  The DFA~$M$ is
\emph{$k$-minimal} if all $k$-similar DFAs are larger.  In this
section, we first give a general simple algorithm~\kminimise{} that
computes a~$k$-minimal DFA that is $k$-similar to the input DFA~$M$.
Then we present a data structure that allows a fast, yet simple
implementation of this algorithm.

\begin{definition}
  \label{df:kEQ}
  For two languages $L$~and~$L'$, we let
  \( L \sim_k L' \iff d(L, L') \leq k  \).
\end{definition}

The hyper-equivalence relation~\cite{BadrRAIRO} can be now defined as
$\mathord{\sim} = \bigcup_k \mathord{\sim_k}$.  
Next, we extend $k$-similarity to states.

\begin{definition}
  \label{df:kEQstate}
  Two states $q \in Q$ and $p \in P$ are $k$-similar, denoted by
  $q \sim_k p$, if 
  \[ d(q, p) + \min(k, \inlevel_M(q), \inlevel_N(p)) \leq k
  \enspace. \]
\end{definition}

While $\sim_k$ is an equivalence relation on languages, it is, in
general, only~a compatibility relation (i.e., reflexive and symmetric)
on states.  On states the hyper-equivalence is not a direct
generalisation of $k$-similarity.  Instead, $p \sim q$ if and only if
$L_M(q) \sim L_N(p)$.  We use the $k$-similarity relation to give~%
a simple algorithm $\kminimise(M)$, which constructs a
$k$-minimal DFA (see Algorithm~\ref{alg:kMin}).
In Section~\ref{subsec:distance forests}
we show how to implement it efficiently.

\begin{algorithm}
  \caption{$\kminimise(M)$ with minimal~$M$}
  \label{alg:kMin}
  \begin{algorithmic}[1]
    \State calculate $\krowne$ on~$Q$
    \State $N \gets M$
    \While{$q \sim_k p$ for some $q, p \in P$ and $q \neq p$}
      \If{$\inlevel_M(q) \geq \inlevel_M(p)$}
        \State swap $q$~and~$p$
      \EndIf
    \State $N \gets \proc{Merge}(N, q, p)$
    \EndWhile
  \end{algorithmic}
\end{algorithm}

\begin{theorem}
  \label{thm:kminimise is proper}
  \kminimise{} returns a $k$-minimal DFA that is
  $k$-similar to~$M$.
\end{theorem}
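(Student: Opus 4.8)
The plan is to split the statement into \emph{soundness} (the returned DFA~$N$ is $k$-similar to~$M$) and \emph{optimality} ($N$ is $k$-minimal), and to treat them separately.

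\textbf{Soundness.} Write $M = N_0, N_1, \dotsc, N_t = N$ for the sequence of automata produced by the loop, where $N_{i+1}$ arises from~$N_i$ by merging a state~$q_i$ into a state~$p_i$ with $q_i \sim_k p_i$, $q_i \neq p_i$, and $\inlevel_M(q_i) \leq \inlevel_M(p_i)$. Since $d$ is an ultrametric, $L(M) \simdiff L(N) \subseteq \bigcup_i \bigl(L(N_i) \simdiff L(N_{i+1})\bigr)$, so it suffices to show $L(N_i) \simdiff L(N_{i+1}) \subseteq \Sigma^{<k}$ for each~$i$. A word in this symmetric difference must, read in~$N_i$, visit the deleted state~$q_i$; writing it as~$w\,x$ with $\delta(w) = q_i$ on the first such visit gives $\abs w \leq \inlevel_{N_i}(q_i)$ and $x \in L_{N_i}(q_i) \simdiff L_{N_{i+1}}(p_i)$. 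The key is a joint invariant kept over the whole run: for every surviving state~$r$,
\[ \inlevel_{N_i}(r) = \inlevel_M(r) \quad\text{and}\quad d\bigl(L_{N_i}(r), L_M(r)\bigr) \leq k - \min\bigl(k, \inlevel_M(r)\bigr) \enspace . \]
The in-level equality holds because the algorithm always merges the lower in-level state into the higher one, so (using that $q_i \neq p_i$ in the \emph{minimal}~$M$ forces $\inlevel_M(q_i) < k$) no new cycle is created and~$q_i$ stays unreachable from~$p_i$; the latter also yields $L_{N_{i+1}}(p_i) = L_{N_i}(p_i)$. The distance bound is propagated inductively: an error word entering~$r$'s sub-automaton through~$q_i$ has a prefix of length at most $\inlevel_M(q_i) - \inlevel_M(r)$ and a suffix shorter than $d\bigl(L_{N_i}(q_i),L_{N_i}(p_i)\bigr)$, and combining this with the defining inequality of $q_i \sim_k p_i$ preserves the bound. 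The invariant then gives $d\bigl(L_{N_i}(q_i),L_{N_i}(p_i)\bigr) \leq k - \inlevel_M(q_i)$, hence $\abs{w\,x} < k$; it also shows that the statically computed relation~$\sim_k$ remains a correct certificate at every step, so no merge performed by the loop is illegitimate.

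\textbf{Optimality.} First one checks that~$N$ is minimal and has no two distinct $k$-similar states. The loop terminates precisely when no such pair remains (with respect to the relation computed on~$Q$), and the invariant shows that any two distinct equivalent states of~$N$ would in fact be $k$-similar in~$M$, contradicting termination; moreover the kernel of~$N$ coincides with that of~$M$ (no transition leaves a kernel state for a preamble one, so merges never alter kernel right-languages), and is therefore minimal. The heart of the argument is then the structural lemma---in the spirit of the hyper-minimal characterisation of~\cite{BadrRAIRO}---that \emph{every} minimal DFA with no two distinct $k$-similar states is $k$-minimal. Given any $k$-similar DFA~$N'$, which we may assume minimal, define a map~$\phi$ from the states of~$N$ to those of~$N'$: send a preamble state~$q$ to $\mu(u_q)$ for a \emph{longest} word~$u_q$ with $\delta(u_q)=q$, and send a kernel state~$q$ to the (unique, by minimality of~$N'$) state of~$N'$ whose right-language equals~$L_N(q)$ --- which exists because reading sufficiently long words into~$q$, together with $L(N)\simdiff L(N')\subseteq\Sigma^{<k}$, forces equality. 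Using longest representatives is exactly what makes the in-level term of Definition~\ref{df:kEQstate} reappear: a collision $\phi(q_1)=\phi(q_2)$ forces $d(q_1,q_2)\leq k-\min\bigl(k,\inlevel_N(q_1),\inlevel_N(q_2)\bigr)$, i.e.\ $q_1\sim_k q_2$, so $q_1=q_2$. Hence~$\phi$ is injective, and a refinement that tracks transitions $(q,a)$ along longest words and uses minimality of~$N$ to rule out superfluous transitions upgrades this to $\abs{N'}\geq\abs N$.

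\textbf{Where the difficulty lies.} I expect the optimality direction, and specifically this structural lemma, to be the hardest part, for two reasons: one must pick the right (longest, not shortest) left-representatives so that the in-level terms of~$\sim_k$ are captured, and one must reason \emph{without} referring to the order of merges, since $\sim_k$ is not transitive on states, so there is no \emph{a priori} guarantee that a different merge order could not produce a strictly smaller $k$-similar DFA --- the injective map above is precisely what rules that out. On the soundness side the conceptual content is light (the ultrametric inequality does the work), but setting up and verifying the joint in-level/distance invariant across the entire merge sequence, so that the one-step bound may be reapplied with the statically computed relation~$\sim_k$, is the real technical burden.
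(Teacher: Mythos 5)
Your soundness half is essentially the paper's own argument: the per-merge bookkeeping you describe is exactly the combination of Lemma~\ref{lem:leading word} (path lengths in the current automaton bounded by differences of $\inlevel_M$) and Lemma~\ref{lem:distance is small} (the invariant $d(L_{N_i}(r),L_M(r))\leq\max(0,k-\inlevel_M(r))$), together with the observation that the deleted state is unreachable from the surviving one. One correction: your invariant $\inlevel_{N_i}(r)=\inlevel_M(r)$ is false in general. A merge can reroute the unique longest path into a surviving state $r$ (the path enters the deleted state and then follows the transitions of the state it was merged into, which need not lead back to $r$), so in-levels can strictly drop. Only the inequality $\inlevel_{N_i}(r)\leq\inlevel_M(r)$ (for $\inlevel_M(r)<k$) holds, and since that is all you actually use, the soundness argument survives with this restatement.

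The genuine gap is in the optimality half, and it is caused by the same phenomenon. Your structural lemma is fine --- its proof is the contrapositive of the paper's Lemma~\ref{lm:notsim} with $M$ replaced by $N$ --- but its hypothesis is that no two distinct states of $N$ are $k$-similar \emph{as states of $N$}, i.e.\ with Definition~\ref{df:kEQstate} evaluated on $L_N$ and $\inlevel_N$. Termination of \kminimise{} only guarantees that the surviving states are pairwise $k$-dissimilar with respect to the statically computed relation, i.e.\ with respect to $L_M$ and $\inlevel_M$. The right-language part transfers (your distance invariant plus the ultrametric property give $d(L_N(q_1),L_N(q_2))=d(L_M(q_1),L_M(q_2))$ for dissimilar pairs), but the in-level part does not: precisely because in-levels can drop under merges, a survivor $q$ may have $\inlevel_N(q)<\min(k,\inlevel_M(q))$, and then $q$ can satisfy $q\sim_k q'$ within $N$ for another survivor $q'$ even though $q\not\sim_k q'$ in $M$. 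Nothing in your set-up excludes this, and I do not see how to exclude it except a posteriori from $k$-minimality itself, which would be circular. The paper sidesteps the issue entirely: it never needs an intrinsic property of $N$ beyond its state count. The surviving set $Q'$ is pairwise $k$-dissimilar \emph{in $M$}; choosing for each $q\in Q'$ a longest word of $M$ leading to $q$ and applying Lemma~\ref{lm:notsim} (exactly your collision argument, but with source DFA $M$) shows these words reach pairwise distinct states in \emph{every} DFA $k$-similar to $M$, so every competitor has at least $\abs{Q'}=\abs{N}$ states (this is also Corollary~\ref{cor:kSet}). Redirecting your injective map so that it starts from the survivors viewed as states of $M$, with longest $M$-words as representatives, closes the gap and recovers the paper's proof.
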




  

\subsection{Distance forests}
\label{subsec:distance forests}
In this section we define distance forests, which capture the
information of the distance between states of a given minimal DFA~$M$.
We show that $k$-minimisation can be performed in linear time, when a
distance forest for~$M$ is supplied.  We start with a total DFA~$M$
because in this case the construction is fairly easy.  In
Section~\ref{subsec:partial transition function} we show how to extend
the construction to non-total DFAs.

Let $\mathcal F$~be a forest (i.e., set of trees) whose leaves are
enumerated by~$Q$ and whose edges are weighted by elements of~$\nat$.
For convenience, we identify the leaf vertices with their label.  For
every $q \in Q$, we let $\mathrm{tree}(q) \in \mathcal F$ be the
(unique) tree that contains~$q$.  The level~$\level(v)$ of a
vertex~$v$ in~$t \in \mathcal F$ is the maximal weight of all paths
from~$v$ to a leaf, where the weights are added along a path.
Finally, given two vertices $v_1, v_2$ of the same tree~$t \in
\mathcal F$, the lowest common ancestor of $v_1$~and~$v_2$ is the
vertex~$\lca(v_1, v_2)$.

\begin{definition}[Distance forest]
  Let $\mathcal F$ be a forest whose leaves are enumerated by~$Q$.
  Then $\mathcal F$ is a \emph{distance forest} for~$M$ if for every
  $q, p \in Q$ we have
  \[ d(q, p) = \begin{cases}
    \level(\lca(q, p)) & \text{ if } \mathrm{tree}(q) =
    \mathrm{tree}(p),\\
    \infty & \text{ otherwise}.
  \end{cases} \]
\end{definition}


\begin{figure}
  \begin{center}
    \includegraphics{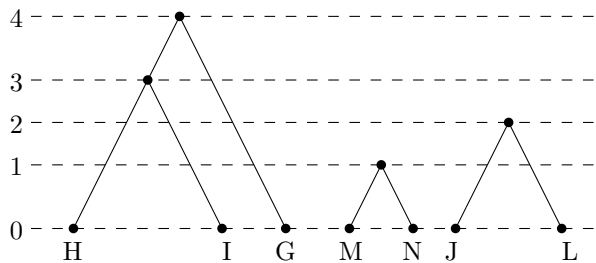} 
  \end{center}
  \caption{A distance forest for the left DFA
    of~\protect{Fig.~\ref{fig:merge}}.  Single-node trees
    are omitted.}
  \label{fig:forest}
\end{figure}
In order to construct a distance forest we use~\eqref{eq:distance
  recursion} to calculate the distance.  Mind that~$M$ is minimal, so
there are no states with distance~$0$.  In phase~$\ell$, we merge all
states at distance exactly~$\ell$ into one state.  Since we merged all
states of distance at most~$\ell - 1$ in the previous phases, we only
need to identify the states of distance~$1$ in the merged DFA.  Thus
we simply group the states according to their vectors of transitions
by letters from $\Sigma = \{\seq a1m\}$.  To this end we store these
vectors in a dictionary, organised as a trie of depth $m$.
The leaf of a trie corresponding to a~path~$(\seq q1m)$
keeps a list of all states~$q$ such that $\delta(q,
a_i) = q_i$ for every $1 \leq i \leq m$.  For each node~$v$ in the
trie we keep a \emph{linear dictionary} that maps a~state~$q$ into a
child of~$v$.  We demand that this linear dictionary supports search,
insertion, deletion, and enumeration of all elements.

\begin{theorem}
  \label{thm:linear dictionary log n times}
  Given a total DFA~$M$, we can build a distance forest for~$M$ using
  $\mathcal{O}(\abs M \log n)$ linear-dictionary operations.
\end{theorem}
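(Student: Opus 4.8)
The plan is to analyse the phase-by-phase construction already sketched before the theorem, where in phase~$\ell$ we merge all pairs of states at distance exactly~$\ell$. First I would make the invariant precise: after phase~$\ell$ the current DFA~$M_\ell$ is obtained from~$M$ by merging every class of states at distance~$\le\ell$, and the partially built forest records, for every original pair $q,p$ with $d(q,p)\le\ell$, the correct value $\level(\lca(q,p))=d(q,p)$. In phase~$\ell+1$ we need only detect pairs at distance~$1$ in~$M_\ell$; by~\eqref{eq:distance recursion} (applied with the equivalence already collapsed, so no state has distance~$0$) two states are at distance~$1$ in~$M_\ell$ exactly when they have the same vector $(\delta(q,a_1),\dots,\delta(q,a_m))$ of successors. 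Grouping states by this vector is done by inserting each state into the trie of depth~$m$ via the linear dictionaries at its nodes; states landing in the same leaf are then merged, a new internal forest vertex at level~$\ell+1$ is created as their common parent, and the trie is emptied for the next phase.

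The correctness half of the argument is that this process terminates and produces a genuine distance forest: once a tree in~$\mathcal F$ has a single root it is never touched again, and if $q,p$ end up in different trees then they were never merged, i.e. $d(q,p)=\infty$ by the ``infinite or $<\abs Q$'' dichotomy noted earlier; if they end up in the same tree, the level of their $\lca$ is the phase in which their blocks were first merged, which is $d(q,p)$ by the invariant. The number of phases is at most~$\abs Q=n$ because distances are bounded by~$\abs Q$, but this crude bound is not good enough for the running time.

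The heart of the proof — and the step I expect to be the main obstacle — is the amortised counting of linear-dictionary operations to $\mathcal{O}(\abs M\log n)$ rather than the naive $\mathcal{O}(\abs M\cdot n)$. The idea is the standard ``merge the smaller half'' / weighted-union argument: when two or more blocks are merged in a phase, only the transitions incident to states of the \emph{smaller} merged blocks need to be re-inserted into the trie in the next phase, since a block that does not change its outgoing-transition vector stays in its leaf. More carefully, I would charge each trie operation to a transition $(r,a)$ of~$M$ whose target endpoint got relabelled by a merge; a given target-block can be the ``non-surviving'' side of a merge only $\mathcal{O}(\log n)$ times over the whole run because each such event at least halves the number of blocks it could still belong to. Hence every one of the $\abs M$ transitions triggers $\mathcal{O}(\log n)$ re-insertions (and matching deletions, searches, enumerations), giving $\mathcal{O}(\abs M\log n)$ linear-dictionary operations in total. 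The only subtlety to nail down here is that a single phase may perform many simultaneous merges, so the halving argument must be applied to the size of the block each transition's endpoint lies in, tracked across phases, rather than to individual pairwise merges; once that bookkeeping is set up, the bound follows by summing the geometric series.
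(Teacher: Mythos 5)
Your overall strategy is the same as the paper's: proceed in phases, detect distance-$1$ pairs in the merged DFA by grouping states according to their successor vectors in the trie, and bound the number of relabellings of a transition entry by a weighted-union argument (always keep the label of the heavier block, so each relabelling at least doubles the weight of the target's class, hence at most $\log n$ relabellings per entry of $\delta$). The correctness invariant you state is also the one the paper proves. However, there is a genuine gap in the cost accounting for maintaining the trie itself. Your charging scheme assumes that each relabelled transition can be reflected in the trie at $\mathcal{O}(1)$ amortised dictionary operations, but this is exactly the delicate point: when the target label of some coordinate changes from $p$ to $q$, the affected state moves to a different subtrie at that depth. If you handle this by deleting the state and re-inserting its whole successor vector, each event costs $\Theta(\abs\Sigma)$ dictionary operations and the total becomes $\mathcal{O}(\abs\Sigma\,\abs M\log n)$, not $\mathcal{O}(\abs M\log n)$. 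The paper instead relabels the edge in place (delete $p$, insert $q$ pointing to the same child), which is constant cost \emph{only} when the node has no $q$-child; when both a $p$-child and a $q$-child exist, the two subtries must be physically merged, recursively down to the leaves, and the cost of these merges is not covered by charging to relabelled transitions. The paper needs a second, independent amortisation for this: each linear dictionary carries a counter of how many vertices were ever inserted into it, the smaller dictionary is merged into the larger, and since a re-inserted vertex at least doubles its dictionary's counter, every trie vertex is re-inserted at most $\log n$ times. Without this second argument (or an equivalent one) your bound of $\mathcal{O}(\abs M\log n)$ dictionary operations does not follow.

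Two smaller points. First, your earlier remark that ``the trie is emptied for the next phase'' contradicts the persistent-trie amortisation you rely on later; rebuilding the trie each phase costs $\Theta(\abs M)$ per phase, i.e.\ $\Theta(\abs M\,n)$ overall, so the trie must persist and only the entries whose labels changed may be touched. Second, to make the relabelling argument work you also need bookkeeping that the paper makes explicit: an up-to-date list, for each current label $p$, of all its occurrences in all dictionary vectors, so that a merge of $p$ into $q$ can locate exactly the entries to update without scanning $\delta$.
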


\label{subsec:linear dictionary}
We now shortly discuss some possible implementations of the linear
dictionary.  An implementation using balanced trees would have linear
space consumption and the essential operations would run in
time~$\mathcal{O}(\log n)$.  If we allow randomisation, then we can use dynamic
hashing.  It has a worst-case constant time look-up and an amortised
expected constant time for updates~\cite{pagh}.
Since it is natural to assume that $\log n$~is proportional to the
size of a machine word, we can hash in constant time.  We can obtain
even better time bounds by turning to more powerful models.  In the
RAM model, we can use exponential search trees~\cite{andersson}, whose
time per operation is $\mathcal{O}(\frac{(\log\log n)^2}{\log\log\log n})$ in
linear space.  Finally, if we allow a quadratic space consumption,
which is still possible in sub-quadratic time, then we can allocate
(but not initialise) a table of size~$\abs M \times n$.  Standard
methods can be used to keep track of the actually used table entries,
so that we obtain a constant run-time for each operation, but at the
expense of $\Theta(\abs M \, n)$~space; i.e., quadratic memory
consumption.

\label{subsec:kminimisation and distance forest}
We can now use a distance forest to efficiently implement
\kminimise{}.  For each state $q$ we locate its highest ancestor~$v_q$
with $\level(v_q) \leq k - \inlevel(q)$. Then $q$~can be merged into
any state that occurs in the subtree rooted in~$v_q$ (assuming it has
a smaller~$\inlevel$).  This can be done using a depth-first traversal
on the trees of the distance forest.  A~more elaborate construction
based on this approach yields the following.
%
%
\begin{theorem}
  \label{thm:k minimisation implementation}
  Given a distance forest for~$M$, we can compute the size of a
  \mbox{$k$-minimal} DFA that is $k$-similar to~$M$ for all~$k$ in
  time $\mathcal{O}(\abs M)$.  For a fixed~$k$, we can also compute a
  $k$-minimal DFA in time $\mathcal{O}(\abs M)$.  Finally, we can run the
  algorithm in time~$\mathcal{O}(\abs M \log n )$ such that it has a
  $k$-minimal DFA stored in memory in its $k$-th phase.
\end{theorem}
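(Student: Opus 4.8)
The plan rests on reading off from the distance forest~$\mathcal F$ which merges Algorithm~\ref{alg:kMin} performs. Fix two states $q, p \in Q$ with $q \neq p$. If they lie in different trees of~$\mathcal F$, then $d(q,p) = \infty$ and they are never $k$-similar; otherwise $d(q,p) = \level(\lca(q,p)) \geq 1$ since $M$~is minimal. Hence, assuming without loss of generality that $\inlevel_M(q) \leq \inlevel_M(p)$, Definition~\ref{df:kEQstate} simplifies to
\[ q \sim_k p \iff \mathrm{tree}(q) = \mathrm{tree}(p),\ \inlevel_M(q) < k,\ \text{and}\ \level(\lca(q,p)) \leq k - \inlevel_M(q) \enspace. \]
Since \kminimise{} always merges the state of smaller in-level into the one of larger in-level, this says: for parameter~$k$, the admissible targets of a state~$q$ with $\inlevel_M(q) < k$ are exactly the states of in-level at least~$\inlevel_M(q)$ lying in the subtree rooted at the highest ancestor~$v_q$ of~$q$ with $\level(v_q) \leq k - \inlevel_M(q)$.

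\emph{Fixed~$k$.} I would simulate the whole \textbf{while}-loop by a single sweep that visits the states of each tree in order of non-decreasing in-level: a state~$q$ with $\inlevel_M(q) < k$ is deleted if and only if, at the moment it is visited, the subtree rooted at~$v_q$ still contains another, not-yet-deleted state, and in that case $q$~is merged into such a state (conveniently, into one of larger in-level, hence not yet visited). A short argument shows that this reproduces a legal terminal configuration of Algorithm~\ref{alg:kMin}: every performed merge is between two $k$-similar states that are currently present, and at the end no two surviving states are $k$-similar. By Theorem~\ref{thm:kminimise is proper} the resulting DFA is $k$-minimal and $k$-similar to~$M$; since $\sim_k$ is an equivalence on languages ($d$~being an ultrametric), all such DFAs have the same size, so the sweep deletes the right number of states. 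Implementing the sweep on~$\mathcal F$ takes $\mathcal{O}(n)$~time and yields a partial map~$\tau \colon Q \rightharpoonup Q$ sending each deleted state to the state it was merged into; compressing the (acyclic) $\tau$-chains to surviving states costs $\mathcal{O}(n)$, after which one pass over all transitions and the start state, following~$\tau$, produces the $k$-minimal DFA in time~$\mathcal{O}(\abs M)$.

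\emph{All~$k$, sizes only.} Here the relevant quantity is, for every~$q$, the threshold $\kappa(q) \in \nat \cup \{\infty\}$ defined as the least~$k$ for which \kminimise{} with parameter~$k$ deletes~$q$ (so $\kappa(q) = \infty$ exactly for kernel states). I would first establish the monotonicity statement that $q$~is deleted for parameter~$k$ iff $k \geq \kappa(q)$; then the size of the $k$-minimal DFA equals $\abs M - \sum_{q :\, \kappa(q) \leq k} \abs{\makeset{a \in \Sigma}{\delta(q,a)\text{ defined}}}$, because each deletion removes precisely the out-transitions of the deleted state and leaves the out-degrees of surviving states unchanged. Thus the task reduces to (i)~computing all thresholds~$\kappa(q)$, which I expect to be doable by a single depth-first traversal of~$\mathcal F$ that, for each~$q$, weighs the levels of its ancestors against the in-levels occurring in the corresponding subtrees, and (ii)~forming the weighted histogram of the~$\kappa(q)$ and taking prefix sums. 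Both steps run in $\mathcal{O}(\abs M)$, since the total weight is~$\abs M$ and the finite thresholds lie in a range of size $\mathcal{O}(n)$. For a fixed~$k$ this data gives the size at once, and together with the sweep above also a $k$-minimal DFA.

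\emph{All~$k$, DFAs in memory.} To keep a $k$-minimal DFA available during the $k$-th phase without incurring $\Omega(n \abs M)$ overall cost, I would not store transitions explicitly but represent the current DFA by the unchanged transition table of~$M$ together with a union--find structure whose classes are the current merge groups (each class carrying its surviving representative); a transition query then costs near-constant amortised time. Passing from phase~$k$ to phase~$k+1$ only requires, for each state~$q$ with $\kappa(q) = k+1$ (read from a list of deletion events sorted by threshold), one union of~$q$'s class with its target's class, which is $\mathcal{O}(n)$ union operations over all phases, so with the precomputation the whole run takes $\mathcal{O}(\abs M \log n)$. I expect the main obstacle to be the correctness bookkeeping behind all three parts: proving that the forest sweep and the threshold function~$\kappa$ faithfully track Algorithm~\ref{alg:kMin} for every~$k$ simultaneously --- in particular that the partition of~$Q$ into merge groups only coarsens as~$k$ grows and that admissible targets can be chosen consistently across phases --- whereas the stated running times then follow from routine amortised analysis of tree traversals and union--find.
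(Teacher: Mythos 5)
Your fixed-$k$ part is essentially sound: the characterisation of $\mathord{\sim_k}$ through the forest is correct, and your in-level-ordered sweep does simulate one legal run of Algorithm~\ref{alg:kMin} (every merge is between two $k$-similar surviving states, smaller in-level into larger, and no two survivors are $k$-similar at the end), so Theorem~\ref{thm:kminimise is proper} gives $k$-minimality; this is close in spirit to the paper's construction, which merges each~$q$ into a canonical target read off the tree.

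The genuine gap is in the two ``all~$k$'' claims, which carry the actual content of the theorem. Your threshold $\kappa(q)$ is not well-defined as stated, since \kminimise{} is nondeterministic and which states get deleted depends on tie-breaking; and once you pin it to your deterministic sweep, the three facts you need --- that the deleted set is monotone in~$k$, that merge targets can be chosen consistently across phases, and that all thresholds are computable in $\mathcal{O}(\abs M)$ by one traversal --- are exactly the nontrivial steps, and you explicitly defer all of them (``I would first establish\dots'', ``I expect to be doable\dots''). The paper supplies precisely this machinery: it labels every internal node~$v$ of the distance tree with $\state(v)$, a leaf of maximal $\inlevel_M$ in its subtree, defines for each~$q$ its submit node and $\values(q) = \inlevel_M(q) + d(q)$, and proves (Lemma~\ref{lem:values to merges}, Corollary~\ref{cor:merge-tree}) that every~$q$ with $\values(q) \leq k$ may be merged into its $k$-ancestor state while the states with $\values(q) > k$ are pairwise $k$-dissimilar. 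Hence the survivor set is $\makeset{q}{\values(q) > k}$ --- monotone in~$k$ by construction, with canonical targets that move up the tree consistently as $k$~grows --- which yields the sizes for all~$k$ after one counting sort of the values, and the phase-by-phase algorithm. Without an analogue of this lemma your plan for parts (i) and (iii) does not go through. A smaller point: in the third part the paper keeps the actual transition lists of the current DFA, redirecting transitions with a rank-doubling argument so that each entry is touched at most $\log n$ times; your union--find scheme only represents the $k$-minimal DFA implicitly (every transition access needs a find), which is a weaker reading of ``stored in memory'' --- and in any case it also rests on the missing threshold/monotonicity argument.
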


\subsection{Finite languages and partial transition functions}
\label{subsec:partial transition function}
\label{subsec:finite languages}
The construction of a distance forest was based on a total transition
function~$\delta$, and the run-time was bounded by the size
of~$\delta$.  We now show a modification for the non-total case.  The
main obstacle is the construction of a distance forest for an acyclic
DFA.  The remaining changes are relatively straightforward.
\begin{theorem}\label{thm:finite distance}
  For every acyclic DFA~$M$ we can build a distance forest in time
  $\mathcal{O}(\abs M \log n)$.
\end{theorem}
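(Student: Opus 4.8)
The plan is to reduce the acyclic case to something resembling the total case by exploiting the structure of $\inlevel_M$ on an acyclic DFA. Since $M$ is acyclic, $\inlevel_M(q) < \infty$ for every $q \in Q$, and we may process states in order of increasing $\inlevel$ (equivalently, a topological order of the transition graph). The key observation is that, by \eqref{eq:distance recursion}, $d(q,p)$ for distinct minimal states depends only on the distances $d(\delta(q,a),\mu(p,a))$ over $a \in \Sigma$, and in an acyclic DFA every successor has strictly larger\,---\,no, strictly \emph{smaller} right-language height; more usefully, the relevant quantity $d(q,p)$ is bounded by $\max(\level_M(q),\level_M(p))$ where $\level_M(q)$ denotes the length of the longest word in $L_M(q)$ (which is finite, as $M$ is acyclic). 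So rather than running $n$ merging phases as in the total case, I would bucket states by $\level_M(q)$ and process buckets from the smallest, merging states of equal distance exactly as before, but now the total work telescopes.

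First I would compute $\level_M(q)$ and $\inlevel_M(q)$ for all $q$ by two linear passes over the (acyclic) transition graph. Second, I would group states into buckets $B_0, B_1, \dots$ where $B_j = \{q : \level_M(q) = j\}$. Within the distance-forest construction from Theorem~\ref{thm:linear dictionary log n times}, the phase-$\ell$ step identifies states at distance exactly $\ell$ by hashing their transition vectors $(\delta(q,a_1),\dots,\delta(q,a_m))$ using the trie-of-linear-dictionaries structure; in the acyclic setting a state $q$ with $\level_M(q) = j$ can only be involved in distance comparisons at levels $\ell \le j$, and once we have merged everything of distance $\le j$ a state in $B_j$ that is still not identified with another will never merge again, so it can be removed from all further dictionaries. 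The total number of linear-dictionary operations is then $\sum_q \O(\abs{\delta|_q} \cdot (\text{number of phases } q \text{ survives}))$; the point is that a state survives at most $\level_M(q)+1$ phases, but more importantly the transition vectors shrink as their targets get merged, and a careful amortisation\,---\,charging each dictionary operation either to an edge of $M$ that is relabelled or to a distinct leaf of the trie\,---\,gives a bound of $\O(\abs M \log n)$ operations, matching the total case. Combined with an $\O(\log n)$-time balanced-tree linear dictionary (as discussed after Theorem~\ref{thm:linear dictionary log n times}), this yields the claimed $\O(\abs M \log n)$ time.

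The main obstacle is the amortisation argument: in the total case the $\O(\log n)$ factor comes from the standard "process the smaller half" trick (each state changes membership in a merged class $\O(\log n)$ times), and one must verify that the same bound survives when transitions are partial. The danger is that a partial transition function could make some trie paths short, causing many states to collide at a shallow trie node and forcing repeated re-bucketing; I would handle this by treating an undefined transition $\delta(q,a)$ as a distinguished sink symbol $\bot$ in the transition vector (consistent with the fact that, after minimisation, a dead/absent transition behaves like an edge to the empty-language state), so that the trie still has depth exactly $m = \abs\Sigma$ and the collision structure is identical to the total case. Once this encoding is in place, the merging/identification dynamics are literally those of Theorem~\ref{thm:linear dictionary log n times} restricted to the acyclic DFA, and since an acyclic DFA has at most $n$ states at each level with the levels partitioning $Q$, the $\O(\abs M \log n)$ bound follows. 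A secondary, purely bookkeeping obstacle is maintaining the forest edges with the correct weights $\level(\lca(q,p)) = d(q,p)$ when states are deleted early; this is the same issue as in the total construction and is resolved by attaching, at the moment two classes merge at distance $\ell$, an internal forest node at level $\ell$ over the two subtrees, exactly as before.
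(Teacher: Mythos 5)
There is a genuine gap, in fact two. First, your final step does not add up: you bound the work by $\O(\abs M \log n)$ \emph{linear-dictionary operations} and then implement each operation with a balanced tree at $\O(\log n)$ time, which yields $\O(\abs M \log^2 n)$ time, not the claimed $\O(\abs M \log n)$. Getting the theorem's bound out of the dictionary-based phase algorithm would require constant-time dictionaries, i.e.\ randomised hashing or a stronger memory model --- exactly the assumptions the statement does not make. This is the whole reason the paper gives a separate construction for the acyclic case: it avoids dictionaries entirely. It buckets states by $m(p)=\max\{\abs w : w\in L_M(p)\}$, uses the fact that states on different levels have the trivial distance $\max(m(p),m(q))+1$, and within one level $Q_t$ runs a divide-and-conquer on the ``father vectors'' of successor states (jumping to $2^k$-th ancestors in the already-built part of the tree, sorting vectors by these ancestors with local names, and answering $\lca$-queries via $\O(\log n)$-size ancestor lists), which costs $\O(\abs{\delta_t}\log\abs{Q_t})$ per level and hence $\O(\abs M\log n)$ deterministic time overall.

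Second, your fix for partial transitions --- padding every transition vector with a sink symbol $\bot$ so that the trie has depth exactly $\abs\Sigma$ --- destroys the dependence on $\abs M$: merely writing these vectors costs $\Theta(\abs\Sigma\, n)$, which is not $\O(\abs M\log n)$ when the transition table is sparse, and handling sparse tables within a bound in $\abs M$ is the stated point of this section. The paper instead keeps vectors sparse (sorted lists of defined pairs $(a,p)$) and computes the distance of two vectors with $\O(\abs{I_1}+\abs{I_2})$ $\lca$-queries. Relatedly, your amortisation claim (``charging each dictionary operation either to an edge of $M$ that is relabelled or to a distinct leaf of the trie'') is asserted rather than proved, and without the $\bot$-padding it must also cope with the asymmetry between an undefined transition and a defined transition whose target class has already merged with the empty-language class; neither version of your argument resolves both this correctness issue and the sparsity issue simultaneously.
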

\begin{proof}[sketch]
  Since $L(M)$~is finite, we have that $m(p) = \max{} \makeset{\abs w}
  {w \in L_M(p)}$ is a natural number for every state~$p$.  Let $Q_i =
  \makeset{p}{m(p) = i}$ and $Q_{<\infty} = \bigcup_i Q_i$.  Every
  state has a finite right-language, and thus every distance forest
  consists of a single tree.  We iteratively construct the fragments
  of this tree by starting from a single leaf~$\bot$, which represents
  the empty language and ``undefinedness'' of the transition function.
  Before we start to process~$Q_t$, we have already constructed the
  distance tree for~$\bigcup_{i<t} Q_i$.  The constructed fragments
  are connected to a~single path, called the \emph{spine}, which ends
  at the leaf~$\bot$ (see Fig.~\ref{fig:trie}).

  Let $Q_t = \{\seq p1s\}$, and let $v \in Q_t$.  Moreover, let
  $\father(v)$ be the vector of states ${\bf v} = (\delta(v, a) )_{a
    \in \Sigma}$, where the coordinates are sorted by a fixed order
  on~$\Sigma$.  Define the distance between those vectors as
  \begin{equation*}
    d((p_a)_{a \in \Sigma}, (p'_a)_{a \in \Sigma}) = \max{}
    \makeset{d(p_i, p'_i) + 1} {a \in \Sigma} \enspace,
  \end{equation*}
  where we know that $d(p_i, \bot) = m(p_i)$ and $d(\bot, p'_i) =
  m(p'_i)$.
  Similarly to the distance, we can define the father~$\father({\bf
    v})$ of a vector ${\bf v} = (p_a)_{a \in \Sigma}$ as $\father({\bf
    v}) = (\father(p_a))_{a \in \Sigma}$.  Then
  \begin{equation*}
    \father^{\ell+1}(v) = \father^{\ell+1}(v')
    \iff \father^{\ell}({\bf v}) = \father^{\ell}({\bf v'}).
  \end{equation*}
  We can now use a divide-and-conquer approach: First, for each vector
  we calculate its $2^k$-th ancestor, where $k = \lceil \log s /2
  \rceil$.  Then all such vectors are sorted according to their
  ancestors, in particular they are partitioned into blocks with the
  same ancestors.  After that we recurse onto those (bottom) blocks
  that have more than two entries and onto the upper block, which
  consists of the different $2^k$-ancestors.  The recursion ends for
  blocks containing at most two vectors, for which we calculate the
  distance tree directly. \qed
\end{proof}
\vspace{-0.5cm}
\begin{figure}
  \begin{center}
    \begin{minipage}[t]{0.5\linewidth}
		\includegraphics[scale=1.1]{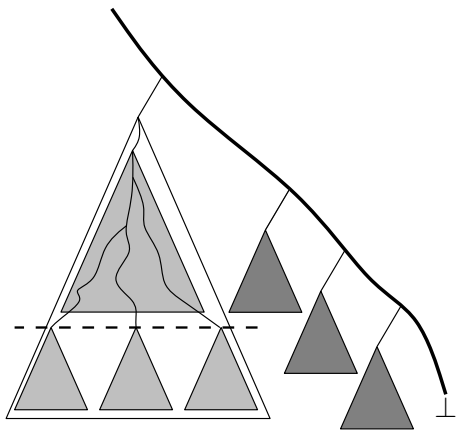} 
    \end{minipage}\hfill
    \begin{minipage}[t]{0.46\linewidth}
\vspace{-4cm}
	\caption{Illustration for the construction of the distance
          tree.  The spine is depicted using with a thicker
          line. Splitting one fragment into smaller recursive calls is
          shown.}
        \label{fig:trie}
    \end{minipage}
  \end{center}
\end{figure}
\vspace{-0.8cm}

For every state~$q \in Q$, its \emph{signature}~$\sig(q)$ is
$\makeset{a} {L_M(\delta(q, a)) \text{ is infinite}}$.  If $\sig(q)
\neq \sig(p)$, then $d(q, p) = \infty$, which allows us to keep a
separate dictionary for each signature.  Let us fix such a trie. To
take into account also the transitions by letters outside the
signature, we introduce a fresh letter~$\$$, whose transitions are
represented in the trie as well.  We organize them such that in
phase~$\ell$ the $\$$-transitions for the states $q$~and~$p$ are the
same if and only if $\max{} \makeset{d(\delta(q, a), \delta(p, a))} {a
  \notin \sig(q)} \leq \ell - 1$.  This is easily organised if the
distance forest for all states with a finite right-language is
supplied.

\begin{theorem}
  \label{thm:kminimise partial}
  Given a (non-total) DFA~$M$ we can build a distance forest for it
  using $\mathcal{O}(\abs M \log n)$~linear-dictionary operations.
\end{theorem}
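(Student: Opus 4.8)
The plan is to reduce the general non-total case to the two cases already handled: the cyclic total case (Theorem~\ref{thm:linear dictionary log n times}) and the acyclic case (Theorem~\ref{thm:finite distance}). The key structural observation is that the distance~$d(q,p)$ is infinite unless $\sig(q) = \sig(p)$, so we may treat each signature~$S \subseteq \Sigma$ entirely separately, keeping one trie (dictionary) per signature. Within a fixed signature~$S$, the transitions by letters in~$S$ lead to states with infinite right-language and hence behave like the ``cyclic part'' of the automaton, while the transitions by letters in~$\Sigma \setminus S$ lead to states with finite right-language. First I would invoke Theorem~\ref{thm:finite distance} once on the sub-automaton induced by~$Q_{<\infty}$ (the states with finite right-language) to obtain, in time $\mathcal{O}(\abs M \log n)$, a distance tree for all those states; this also gives us the values $m(p) = d(p,\bot)$ for each such~$p$, which is exactly the data needed to evaluate distances involving the sink~$\bot$.

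Next I would run the phased merging construction of Theorem~\ref{thm:linear dictionary log n times}, but over the restricted transition table: in phase~$\ell$ we group states~$q$ (within a fixed signature~$S$) by the vector of their current (post-merge) targets under the letters of~$S$, \emph{augmented by} the single auxiliary coordinate~$\$$ whose value records the common value of $\max \makeset{d(\delta(q,a),\delta(p,a))}{a \notin \sig(q)}$. The point of the~$\$$-coordinate is that two states~$q,p$ with equal $S$-vectors satisfy $d(q,p) \le \ell$ iff additionally all their $(\Sigma\setminus S)$-successors are already within distance~$\ell-1$; and whether that holds can be read off the already-built finite-language distance tree (take the \lca of the relevant successor pairs, compare its level to~$\ell-1$) without any new linear-dictionary cost beyond a constant per letter. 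So each state contributes $\abs{\sig(q)} + 1 \le \abs\Sigma$ trie-coordinates per phase, the number of phases is $\mathcal{O}(\log n)$ by the ultrametric bound ($d < \abs Q$, halving of the number of surviving groups each time exactly as in the total case), and the total is $\mathcal{O}(\abs M \log n)$ linear-dictionary operations, since $\sum_q (\abs{\sig(q)}+1) = \mathcal{O}(\abs M)$ (each defined transition contributes~$1$, plus one~$\$$ per state, and states with at least one outgoing transition dominate).

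The remaining verification is that the forest so produced really is a distance forest: the per-signature tries produce separate trees, giving $d = \infty$ across signatures as required; within a signature the level of $\lca(q,p)$ is the phase at which $q$ and $p$ first land in the same trie leaf, which by construction is $1 + \max(\text{max over } a\in S \text{ of level of } \lca(\delta(q,a),\delta(p,a)),\ \text{max over } a\notin S \text{ of } d(\delta(q,a),\delta(p,a)))$, matching the recursion~\eqref{eq:distance recursion} once we confirm by induction on the phase that the inner \lca-levels already equal the true distances; the base case $d=0$ cannot occur since~$M$ is minimal. I would also handle the bookkeeping for the~$\$$-coordinate carefully — it must be recomputed (or shown stable) as merges happen in earlier phases — but since merging only ever identifies states that are already at distance~$0$ in the merged automaton, the $(\Sigma\setminus S)$-successor distances are monotone and the update is consistent.

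The main obstacle I anticipate is the accounting for the $\$$-coordinate: one must argue that maintaining it across phases costs only $\mathcal{O}(1)$ amortised work per state per phase (so $\mathcal{O}(\abs M \log n)$ total) rather than $\mathcal{O}(\abs{\Sigma \setminus S})$ per phase, which naively would blow the bound up to $\mathcal{O}(\abs M \abs\Sigma \log n)$. The fix is that the value of the $\$$-coordinate only changes for state~$q$ when some successor $\delta(q,a)$ with $a \notin \sig(q)$ participates in a merge, and the number of such events, summed over all states and phases, is bounded by the number of merges times the maximum in-degree contribution, which telescopes to $\mathcal{O}(\abs M)$ — essentially the same charging argument that makes the total-DFA construction run in $\mathcal{O}(\abs M \log n)$ rather than $\mathcal{O}(\abs M \cdot (\text{phases}))$. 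Combined with the clean separation by signature and the black-box use of Theorem~\ref{thm:finite distance} for the finite part, the bound follows. \qed
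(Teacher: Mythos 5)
Your high-level plan is the same as the paper's: split the dictionary by signature, handle the letters inside the signature exactly as in the total case, and compress all transitions into the finite part into one extra \(\$\)-coordinate whose equality in phase \(\ell\) encodes \(\max\makeset{d(\delta(q,a),\delta(p,a))}{a \notin \sig(q)} \leq \ell-1\), using Theorem~\ref{thm:finite distance} as a black box. The gap is in the one place where the real work lies, namely how the \(\$\)-coordinate is maintained and what it costs. First, your accounting is based on a false premise: the phased construction of Theorem~\ref{thm:linear dictionary log n times} does \emph{not} have \(\mathcal{O}(\log n)\) phases, and the number of groups does not halve per phase \textemdash{} finite distances (hence phases) can be \(\Theta(n)\), e.g.\ for a chain. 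The \(\log n\) factor in the total case comes from a different mechanism: representatives are chosen by weight, so each entry of \(\delta\) (and each dictionary element) is rewritten at most \(\log n\) times because the counter of its representative at least doubles on every rewrite (Lemma~\ref{lem:delta entry replaced log n times}). Consequently your ``coordinates per state per phase times number of phases'' estimate does not give \(\mathcal{O}(\abs M \log n)\), and your fallback claim that the \(\$\)-update events ``telescope to \(\mathcal{O}(\abs M)\)'' is unsubstantiated \textemdash{} a state's finite-successor cluster can change its representative in many phases, and the true bound one can prove is again \(\mathcal{O}(\abs M\log n)\), and only with the weighting argument.

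Second, a pairwise \(\lca\)-test against the distance tree of \(Q_{<\infty}\) is not a grouping mechanism: the trie needs, for every state and every phase, a single canonical \emph{value} for the \(\$\)-coordinate, not a predicate on pairs. The missing device is exactly the paper's Lemma~\ref{lem:finite states cost}: apply Theorem~\ref{thm:finite distance} not to the sub-automaton induced by \(Q_{<\infty}\) alone, but to an auxiliary acyclic DFA \(M'\) whose states are \(Q_{<\infty}\) together with all direct predecessors of \(Q_{<\infty}\) (and \(\bot\)), with transitions restricted to targets in \(Q_{<\infty}\). Then the finite-successor vector of each infinite-right-language state is itself represented in the resulting distance tree \(\mathcal T'\); sorting the nodes of \(\mathcal T'\) by \(\level\), in phase \(\ell\) one merges the \(\$\)-values of all leaves below nodes of level at most \(\ell-1\), always keeping the representative of maximal rank and updating ranks additively. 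Each \(\$\)-entry in the tries is then rewritten at most \(\log n\) times (its representative's rank doubles on every rewrite), which is what actually yields the \(\mathcal{O}(\abs M \log n)\) bound you claim. Without this construction \textemdash{} or an equivalent per-state representative scheme with the same amortised guarantee \textemdash{} the proof of the stated complexity is not complete.
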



\section{Hyper-equivalence and hyper-minimisation}
\label{sec:hyper}

When considering minimisation with errors, it is natural that one
would like to impose a bound on the total number of errors introduced
by minimisation.  In this section, we investigate whether given $m, s
\in \nat$ and a DFA~$M$ we can construct a DFA~$N$ such that:
\begin{enumerate}
\item $N$~is hyper-equivalent to~$M$; i.e., $N \sim M$,
\item $N$ has at most $s$~states, and
\item $N$ commits at most $m$~errors compared to~$M$; i.e., $\abs{L(N)
    \simdiff L(M)} \leq m$. 
\end{enumerate}
Let us call the general problem `error-bounded hyper-minimisation'.
We show that this problem is intractable (NP-hard).  Only having a
bound on the number of errors allows us to return the original DFA,
which commits no errors.

To show NP-hardness of the problem we reduce the 
3-colouring problem 
to it.  Roughly speaking, we construct the DFA~$M$ from a graph~$G =
\langle V, E\rangle$ as follows.  Each vertex~$v \in V$ is represented
by a state~$v \in Q$, and each edge~$e \in E$ is represented by a
symbol~$e \in \Sigma$. We introduce additional states in a way such
that their isomorphic copies are present in any minimal DFA that is
hyper-equivalent to~$M$.  The additional states are needed to ensure
that for every edge $e = \{v_1, v_2\} \in E$ the languages
$L_M(\delta(v_1,e))$~and~$L_M(\delta(v_2, e))$ differ.  Now we assume
that $m = \abs E \cdot (\abs V - 2)$ and $s = 14$.  We construct the
DFA~$M$ such that all vertices of $V \subseteq Q$ are hyper-equivalent
to each other and none is hyper-equivalent to any other state.  We can
save $\abs V - 3$~states by merging all states of~$V$ into at most
$3$~states.  These merges will cause at least $\abs E \cdot (\abs V -
2)$~errors.  Additionally, $3$~states will become superfluous after
the merges, so that we can save $\abs V$~states.  There are two cases:
\begin{itemize}
\item If the input graph~$G$ is $3$-colourable by~$c \colon V \to
  [3]$, then we can merge all states of~$c^{-1}(i)$ into a~single
  state for every $i \in [3]$.  Since $c$~is proper, we never merge
  states $v_1, v_2 \in Q$ with $\{v_1, v_2\} \in E$, which avoids
  further errors.
\item On the other hand, if $G$~is not $3$-colourable, then we merge
  at least two states~$v_1, v_2 \in Q$ such that $e = \{v_1, v_2\} \in
  E$.  This merge additionally introduces $2$~errors caused by the
  difference $L(\delta(v_1, e)) \simdiff L(\delta(v_2, e))$.
\end{itemize}
Consequently, a DFA that (i)~is hyper-equivalent to~$M$, (ii)~has at
most $s$~states, and (iii)~commits at most $m$~errors exists if and
only if $G$~is $3$-colourable.  This shows that error-bounded
hyper-minimisation is NP-hard.

\begin{figure}
  \centering
  \includegraphics[width=\textwidth]{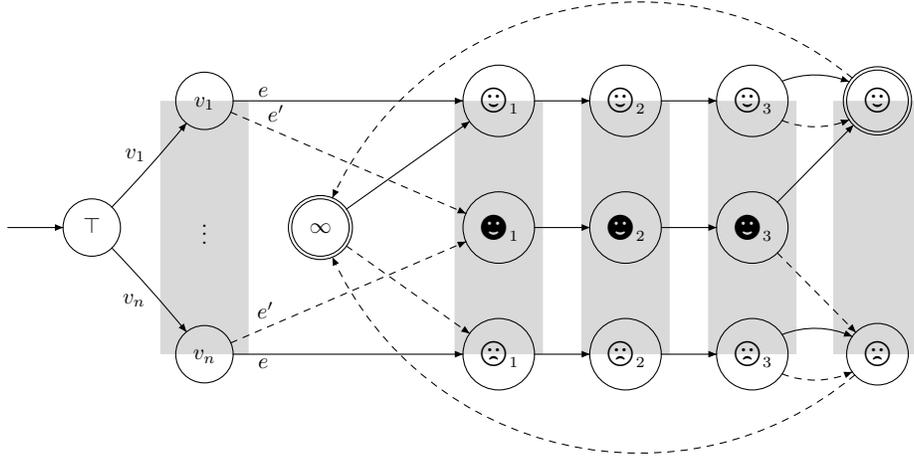}
  \caption{DFA~$M$ constructed in~\protect{Section~\ref{sec:hyper}},
    where $a$-transitions are represented by unbroken lines (unless
    noted otherwise), $b$-transitions by dashed lines,
    and $e = \{v_1, v_n\}$ and $e' = \{v_2, v_3\}$ with $v_1 < v_2 <
    v_3 < v_n$.  The hyper-equivalence~$\sim$ is indicated.}
  \label{fig:Hyper}
\end{figure}

\label{sec:HCon}

\begin{definition}
  \label{df:HInp}
  We construct a DFA~$M = \langle Q, \Sigma, \delta, \top, F \rangle$
  as follows:
  \begin{itemize}
  \item $Q = \{\top, \bot, \infty, \smiley, \frowny\} \cup V \cup
    \{\Circle_j \mid \Circle \in \{\smiley, \blacksmiley, \frowny \},
    j \in [3]\}$,
  \item $\Sigma = \{a, b \} \cup V \cup E$,
  \item $F = \{\infty, \smiley\}$,
  \item for every $v \in V$, $e = \{v_1, v_2\} \in E$ with $v
    \notin e$ and $v_1 < v_2$, $\Circle \in \{\smiley, \frowny \}$
    \begin{align*}
      \delta(\top, v) &= v & \delta(\infty, a) &= \smiley_1 &
      \delta(\infty, b) &= \frowny_1 \\
      \delta(v, e) &= \blacksmiley_1 & \delta(v_1, e) &= \smiley_1 &
      \delta(v_2, e) &= \frowny_1 \\[1ex]
      \delta(\blacksmiley_1, a) &= \blacksmiley_2 &
      \delta(\blacksmiley_2, a) &= \blacksmiley_3 &
      \delta(\blacksmiley_3, a) &= \smiley & \delta(\blacksmiley_3, b)
      &= \frowny & \\
      \delta(\Circle_1, a) &= \Circle_2 & \delta(\Circle_2, a) &=
      \Circle_3 & \delta(\Circle_3, a) &= \Circle & \delta(\Circle_3,
      b) &= \Circle & \delta(\Circle, b) &= \infty 
    \end{align*}
  \item For all remaining cases, we set $\delta(q, \sigma) = \bot$.
  \end{itemize}
\end{definition}

Consequently, the DFA~$M$ has $14 + \abs V$~states (see
Figure~\ref{fig:Hyper}).  Next, we show how to collapse
hyper-equivalent states using a proper $3$-colouring~$c \colon V \to
[3]$ to obtain only 14~states.

\begin{definition}
  \label{df:HOut}
  Let $c \colon V \to [3]$ be a proper $3$-colouring for~$G$.  We
  construct the DFA $c(M) = \langle P, \Sigma, \mu, \top, F \rangle$
  where
  \begin{itemize}
  \item $P = \{\top, \bot, \infty, \smiley, \frowny\} \cup [3]
    \cup \{\Circle_j \mid \Circle \in \{\smiley, \frowny\}, j \in
    [3]\}$,
  \item $\mu(p, \sigma) = \delta(p, \sigma)$ for all $p \in P
    \setminus \{\top, 1, 2, 3\}$ and $\sigma \in \Sigma$, and
  \item for every $v \in V$, $i \in [3]$, and $e = \{v_1, v_2\} \in E$
    with $v_1 < v_2$
    \begin{align*}
      \mu(\top, v) &= c(v) & \mu(i, e) &= \begin{cases}
        \smiley_1 & \text{, if } c(v_2) \neq i \\
        \frowny_1 & \text{, otherwise.}
      \end{cases}
    \end{align*}
  \end{itemize}
\end{definition}

\label{sec:HProof}
\begin{lemma}
  \label{lm:hmain}
  There exists a DFA that has at most $14$~states and commits at most
  $\abs E \cdot (\abs V - 2)$ errors when compared to~$M$ if and only
  if $G$~is $3$-colourable.
\end{lemma}

\begin{corollary}
  \label{cor:hyper}
  `Error-bounded hyper-minimisation' is NP-complete. More formally,
  given a DFA~$M$ and two integers $m, s \in \mathrm{poly}(\abs M)$,
  it is NP-complete to decide whether there is a DFA~$N$ with at most
  $s$~states and $\abs{L(M) \simdiff L(N)} \leq m$.
\end{corollary}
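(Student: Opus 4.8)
The plan is to prove the two halves of NP-completeness separately: NP-hardness will follow immediately from Lemma~\ref{lm:hmain} together with the observation that the associated reduction is polynomial, and membership in NP will follow from a standard product-automaton argument, using the target automaton itself as a polynomial-size certificate.

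For the hardness direction, I would note that the map sending a graph $G = \langle V, E \rangle$ to the triple $(M, m, s)$, where $M$ is the DFA of Definition~\ref{df:HInp}, $m = \abs E \cdot (\abs V - 2)$, and $s = 14$, is computable in polynomial time: the DFA~$M$ has $14 + \abs V$ states over the alphabet $\{a, b\} \cup V \cup E$, so $\abs M$ is polynomial in~$\abs G$, while $m$ is polynomially bounded and $s$ is constant; in particular $m, s \in \mathrm{poly}(\abs M)$ as required. By Lemma~\ref{lm:hmain}, $(M, m, s)$ is a positive instance precisely when $G$ is $3$-colourable, and since graph $3$-colourability is NP-complete, error-bounded hyper-minimisation is NP-hard.

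For membership in NP, I would take as a certificate the DFA~$N$ itself. Any witness has at most $s$~states and may be assumed to be over the alphabet~$\Sigma$, so its transition table has at most $s \cdot \abs \Sigma$~entries; since $s \in \mathrm{poly}(\abs M)$, the certificate has size polynomial in~$\abs M$. The verifier first checks that $N$ has at most $s$~states. It then forms the product DFA on the state set $Q \times P$ (after completing $\delta$ and~$\mu$ with a dead, non-accepting sink state to deal with partiality) in which a state $(q, p)$ is accepting exactly when precisely one of $q \in F$ and $p \in F'$ holds; this DFA recognises $L(M) \simdiff L(N)$ and has at most $\abs Q \cdot \abs P$~states, which is polynomial. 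The verifier discards states that are unreachable or from which no accepting state is reachable, and tests --- by a textbook strongly-connected-component or cycle-detection computation --- whether the resulting graph is acyclic. If it is not, then $L(M) \simdiff L(N)$ is infinite and the verifier rejects. Otherwise the verifier counts, by dynamic programming over the resulting DAG, the number of words accepted, and accepts iff this count is at most~$m$. Correctness is immediate, and every step runs in time polynomial in $\abs M + \abs N + \log m$.

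The one step that needs care, and the only place a naive implementation could fail, is the final counting: the cardinality of $L(M) \simdiff L(N)$ could in principle be enormous. This is handled by the observation that a DFA recognising a finite language accepts only words shorter than its number of states, so once the product automaton has been pruned and shown to be acyclic, the count is bounded by $\abs \Sigma$ raised to the (polynomial) number of states and therefore has only polynomially many bits; alternatively one simply caps each partial count at $m + 1$. All remaining ingredients --- forming the product, reachability and co-reachability, and acyclicity testing --- are elementary and clearly polynomial, so no further obstacle arises, and together with the hardness direction this yields NP-completeness.
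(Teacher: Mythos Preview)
Your argument is correct and matches the paper's intended route: the paper states Corollary~\ref{cor:hyper} as an immediate consequence of Lemma~\ref{lm:hmain} and gives no separate proof, so hardness via the polynomial-time graph-to-DFA construction of Definition~\ref{df:HInp} together with Lemma~\ref{lm:hmain} is exactly what is meant. Your explicit verification that $m = \abs E \cdot (\abs V - 2)$ and $s = 14$ lie in $\mathrm{poly}(\abs M)$ and your product-automaton argument for NP membership (including the care about counting the possibly large but finite symmetric difference) are details the paper leaves implicit, and they are handled correctly.
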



\section{Error-bounded $k$-minimisation}
\label{sec:kMinHardness}
In Section~\ref{k-minimisation} the number of errors between $M$~and
the constructed \mbox{$k$-minimal} DFA was not calculated.  In
general, there is no unique $k$-minimal DFA for~$M$ and the various
$k$-minimal DFAs for~$M$ can differ in the number of errors that they
commit relative to~$M$.  Since several dependent merges are performed
in the course of \mbox{$k$-minimisation}, the number of errors between
the original DFA~$M$ and the resulting \mbox{$k$-minimal} DFA is not
necessarily the sum of the errors introduced for each merging step.
This is due to the fact that errors made in one merge might be
cancelled out in a subsequent merge.
It is natural to ask, whether it is nevertheless possible to
\emph{efficiently} construct an \emph{optimal $k$-minimal} DFA for~$M$
(i.e., a~$k$-minimal DFA with the least number of errors introduced).
In the following we show that the construction of an optimal
$k$-minimal DFA for~$M$ is intractable (NP-hard).


\label{sec:kOverview}
The intractability is shown by a reduction from the $3$-colouring
problem for a graph $G = \langle V, E\rangle$ in a~similar, though
much more refined, way as in Section~\ref{sec:hyper}.  We again
construct a~DFA~$M$ with one state~$v$ for every vertex~$v \in V$ and
one letter~$e$ for each edge~$e \in E$.  We introduce three additional
states $\{1_0, 2_0, 3_0\}$ (besides others) to represent the
$3$~colours.  For the following discussion, let $N = \langle P,
\Sigma, \mu, p_0, F' \rangle$~be a $k$-minimal DFA for~$M$.  Let us
fix an edge $e = \{v_1, v_2\} \in E$.  The DFA~$M$ is constructed such
that the languages $L_M(\delta(v_1, e))$~and~$L_M(\delta(v_2, e))$
have a large but finite symmetric difference; as in the previous
section, if a proper $3$-colouring $c \colon V \to [3]$ exists the
DFA~$N$ can be obtained by merging each state $v$ into $c(v)_0$.  In
addition, for every edge $e = \{v_1, v_2\} \in E$ and vertex~$v \in
e$, we let~$\mu(c(v)_0, e) = \delta(v, e)$.  On the other hand, if
$G$~admits no proper $3$-colouring, then the DFA~$N$ is still obtained
by state merges performed on~$M$.  However, because $G$~has no proper
$3$-colouring, in the constructed DFA $M$ there exist $2$ states $v_1$,
$v_2$ such that $e = \{v_1, v_2\} \in E$ and that both $v_1$ and $v_2$
are merged into the same state~$p \in P$.  Then the transition $\mu(p,
e)$ cannot match both $\delta(v_1, e)$~and~$\delta(v_2, e)$.  In order
to make such an error costly, the left languages of $v$~and~$v'$ are
designed to be large, but finite.  In contrast, we can easily change
the transitions of states~$\{1_0, 2_0, 3_0\}$ by letters~$e$ because
the left-languages of the states~$\{1_0, 2_0, 3_0\}$ are small.


To keep the presentation simple, we will use two gadgets.  The first
one will enable us to make sure that two states cannot be merged:
$k$-similar states are also hyper-equivalent, so we can simply avoid
undesired merges by making states hyper-inequivalent.  Another gadget
will be used to increase the in-level of certain states to a desired
value.


\begin{lemma}
  \label{lm:kCong}
  For every congruence~$\mathord{\simeq} \subseteq Q \times Q$ on~$M$,
  there exists a DFA~$N$ 
  such that
  (i)~$p_1 \not\sim p_2$ for every $p_1 \in P \setminus Q$ and $p_2
  \in P$ with $p_1 \neq p_2$, and
  (ii)~$q_1 \not\sim q_2$ in~$N$ for all $q_1 \not\simeq q_2$.
\end{lemma}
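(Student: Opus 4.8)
The plan is to attach to every state a periodic ``fingerprint'' by a small gadget and thereby force the required hyper-inequivalences. The only property of hyper-equivalence I would use is that, for states $p,q$ of a DFA, $p \sim q$ holds if and only if $L(p) \simdiff L(q)$ is finite: this follows from $p \sim q \iff L(p) \sim L(q)$, from $\mathord{\sim} = \bigcup_k \mathord{\sim_k}$, and from the fact that over a finite alphabet $d(L,L') < \infty$ is equivalent to $L \simdiff L'$ being finite (if $d(L,L') \le \ell$ then $L$ and $L'$ agree on $\Sigma^{\ge \ell}$, and $\Sigma^{<\ell}$ is finite). Hence it suffices to arrange that $L_N(p)$ and $L_N(q)$ disagree on infinitely many words whenever $p \not\sim q$ is wanted.

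\textbf{The gadget.} Let $C_1,\dots,C_r$ be the classes of $\simeq$ (I would use only that $\simeq$ partitions~$Q$), and put $m = \max(r,2)$. I would add two fresh letters $\#$ and~$c$, fresh states $z_0,\dots,z_{m-1}$ forming a $c$-cycle with $\delta_N(z_j,c) = z_{(j+1)\bmod m}$ and with $z_0$ the only new accepting state, and a fresh rejecting sink~$\bot$. All transitions of $M$ are kept, $\delta_N(q,\#) = z_{i-1}$ for $q \in C_i$, and every remaining transition leads to~$\bot$. One checks directly that $L_N(\bot) = \emptyset$, that $L_N(z_j) = \{\,c^n \mid n \equiv -j \pmod m\,\}$, that $L_N(q) \cap \Sigma^* = L_M(q)$, and that $L_N(q) \cap \#c^* = \{\,\#c^n \mid n \equiv -(i-1) \pmod m\,\}$ for every $q \in C_i$.

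\textbf{Verification.} For~(ii): if $q_1 \not\simeq q_2$, say $q_1 \in C_i$ and $q_2 \in C_j$ with $i \ne j$, then $i-1 \not\equiv j-1 \pmod m$ since both lie in $\{0,\dots,r-1\}$; so $L_N(q_1)$ and $L_N(q_2)$ differ on infinitely many words of the form $\#c^n$, whence $L_N(q_1) \simdiff L_N(q_2)$ is infinite and $q_1 \not\sim q_2$ in~$N$. For~(i): the states of $P \setminus Q$ are $z_0,\dots,z_{m-1}$ and~$\bot$. Distinct $z_j$ have right-languages that are distinct infinite residue classes of $c^*$, hence differ infinitely; $L_N(\bot) = \emptyset$ differs infinitely from each (infinite) $L_N(z_j)$ and from each $L_N(q)$ with $q \in Q$, the latter containing infinitely many words $\#c^n$; and no $L_N(z_j)$ contains a word starting with~$\#$, whereas every $L_N(q)$ with $q \in Q$ does, infinitely often. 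So each state of $P \setminus Q$ is hyper-inequivalent to every other state of~$P$.

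\textbf{Main obstacle.} The one non-routine point is the shape of the fingerprint: a unary tag $c^{n_i}$ per class would make two classes disagree on only finitely many words, so a \emph{periodic} fingerprint (a $c$-cycle of length at least~$2$ with pairwise distinct per-class entry residues) is what makes the disagreements infinite; everything else is bookkeeping. I would also remark in passing that the gadget keeps the intended merges available — if $\simeq$ additionally saturates~$F$, then $q_1 \simeq q_2$ gives $L_M(q_1) = L_M(q_2)$ and hence $L_N(q_1) = L_N(q_2)$ because $\delta_N(q_1,\#) = \delta_N(q_2,\#)$ — and that $\abs P = \abs Q + \mathcal O(\abs Q)$, so the construction has polynomial size.
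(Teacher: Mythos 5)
Your construction is essentially the paper's own: the paper adds a tag letter $t$ sending each state to a representative of its $\simeq$-class on an $x$-cycle of new states with a single accepting state, which is exactly your $\#$/$c$-cycle fingerprint, and the verification via infinitely many disagreements on words $\mathord{\text{tag}}\cdot(\text{cycle letter})^n$ is the same. The proposal is correct; the only cosmetic differences are your explicit sink $\bot$ and the padding of the cycle length to at least $2$.
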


In graphical illustrations, we use different shapes for
$q_1$~and~$q_2$ to indicate that $q_1 \not\sim q_2$, because of the
gadget of Lemma~\ref{lm:kCong}.  Note that 
states with the same shape need not be $k$-similar.

\begin{lemma}
  \label{lm:kInlevel}
  For every subset~$S \subseteq Q \setminus \{q_0\}$ of states and
  map~$\mathord{\minlevel} \colon S \to \nat$, there exists a DFA~$N =
  \langle Q \cup I, \Sigma \cup \Delta, \mu, q_0, F \rangle$ such that
$\abs{\mu^{-1}(i)} = 1$ for every $i \in I$ and 
$\inlevel_N(s) \geq \minlevel(s)$ for every $s \in S$.
\end{lemma}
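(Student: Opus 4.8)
The plan is to attach to each state of~$S$ a fresh incoming simple path, built entirely from brand-new states and a private fresh alphabet letter, made long enough to force the prescribed in-level, while leaving everything already present in~$M$ untouched.  Concretely, for each $s \in S$ write $\ell_s = \minlevel(s)$; if $\ell_s = 0$ there is nothing to do for~$s$, so assume $\ell_s \geq 1$.  I would introduce one fresh letter~$\sigma_s$ (the set of all these letters being~$\Delta$) and, when $\ell_s \geq 2$, fresh states $i^s_1, \dots, i^s_{\ell_s - 1}$, whose union over all $s \in S$ is~$I$.  The transition function~$\mu$ is taken to coincide with~$\delta$ on $Q \times \Sigma$, while on the fresh letters it routes a path
\[
  q_0 \;\xrightarrow{\sigma_s}\; i^s_1 \;\xrightarrow{\sigma_s}\; \cdots
  \;\xrightarrow{\sigma_s}\; i^s_{\ell_s - 1} \;\xrightarrow{\sigma_s}\; s
\]
of length~$\ell_s$ (for $\ell_s = 1$ this degenerates to the single transition $\mu(q_0, \sigma_s) = s$).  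All remaining transitions on letters of~$\Delta$ are left undefined, which is permissible since the model allows partial transition functions; the start state stays~$q_0$ and the accepting set stays~$F$.  This yields $N = \langle Q \cup I, \Sigma \cup \Delta, \mu, q_0, F \rangle$, which is deterministic (no state acquires two transitions on one letter of~$\Delta$) and which extends~$M$, since $\mu$ restricted to $Q \times \Sigma$ equals~$\delta$.

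Then I would verify the two properties.  For the in-levels, when $\ell_s \geq 1$ the word~$\sigma_s^{\ell_s}$ labels the displayed path and therefore leads to~$s$ in~$N$, so $\inlevel_N(s) \geq \ell_s = \minlevel(s)$; the case $\ell_s = 0$ is trivial.  (As a by-product, since $\mu$ extends~$\delta$, no in-level of an original state decreases.)  For the cardinality condition, $\mu^{-1}(i)$ here must be read as the set of transitions pointing to~$i$, rather than as the left-language $\makeset{w}{\mu(w) = i}$ (which in general cannot be forced to be a singleton, for instance when $q_0$ lies on a cycle).  With that reading, every new state~$i^s_j$ is the target of exactly one transition, namely $(q_0, \sigma_s)$ if $j = 1$ and $(i^s_{j-1}, \sigma_s)$ if $j \geq 2$; there is no other transition into it because $\sigma_s \notin \Sigma$ occurs only along the path attached to~$s$.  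Hence $\abs{\mu^{-1}(i^s_j)} = 1$.

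The construction is short, so there is no deep obstacle; the one point that genuinely needs to be pinned down is the reading of $\abs{\mu^{-1}(\cdot)} = 1$, and it is precisely the ``simple path driven by a private fresh letter'' layout that makes it hold (one could instead use a single shared fresh letter for all of~$S$, at the cost of threading the paths through the states of~$S$ themselves, which is messier).  The remainder is bookkeeping: checking that $N$ leaves $\delta$, $q_0$ and~$F$ in place so that it genuinely extends~$M$, which is what the surrounding reduction relies on; the lemma deliberately claims nothing about~$L(N)$ or about~$\sim$ on~$N$, so those are not a concern here.
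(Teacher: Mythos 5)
Your construction is correct and is essentially the paper's own: the full appendix version (Lemma~\ref{lm:kInlevelII}) likewise attaches fresh states reachable from~$q_0$ via fresh letters, except that it uses a \emph{single} shared path $q_0 \xrightarrow{d} 1 \xrightarrow{d} \cdots \xrightarrow{d} n$ with $n = \max_{s \in S} \minlevel(s)$ and one branch letter~$d_s$ taking node~$\minlevel(s)-1$ to~$s$ (for $\minlevel(s) \geq 2$), rather than one private path per element of~$S$; that is merely an economy in states and changes nothing essential. The one point where you deviate is the reading of $\abs{\mu^{-1}(i)} = 1$: in this paper $\mu^{-1}(i)$ is the \emph{left language} $\makeset{w}{\mu(w) = i}$ as defined in Section~\ref{sec:prel}, and the paper's proof accordingly asserts $\mu^{-1}(i) = \{d^i\}$. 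Your worry that the left-language reading cannot be forced when $q_0$ lies on a cycle is legitimate, but it afflicts the paper's gadget exactly as much as yours; the lemma is only ever applied to the DFA of Definition~\ref{df:kInp} (plus the congruence gadget), whose start state~$0$ has no incoming transitions, and under that tacit proviso your private paths give $\mu^{-1}(i^s_j) = \{\sigma_s^j\}$, so the left-language reading holds verbatim for your construction as well and the reinterpretation as ``incoming transitions'' is unnecessary.
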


We will indicate the level~$i$ below the state name in
graphical illustrations.  Moreover, we add a special feathered arrow
to the state~$q$, whenever the gadget is used for the state~$q$ to
increase its level.

\label{sec:kCons}
Next, let us present the formal construction.  Let $G = \langle V,
E\rangle$ be an undirected graph.  Select $k, s \in \nat$ such that $s
> \log(\abs V) + 2$ and $k > 4s$.  Moreover, let $\ell = k - 2s$.
\vspace{-0.7cm}
\begin{figure}
  \centering
  \includegraphics[scale=0.61]{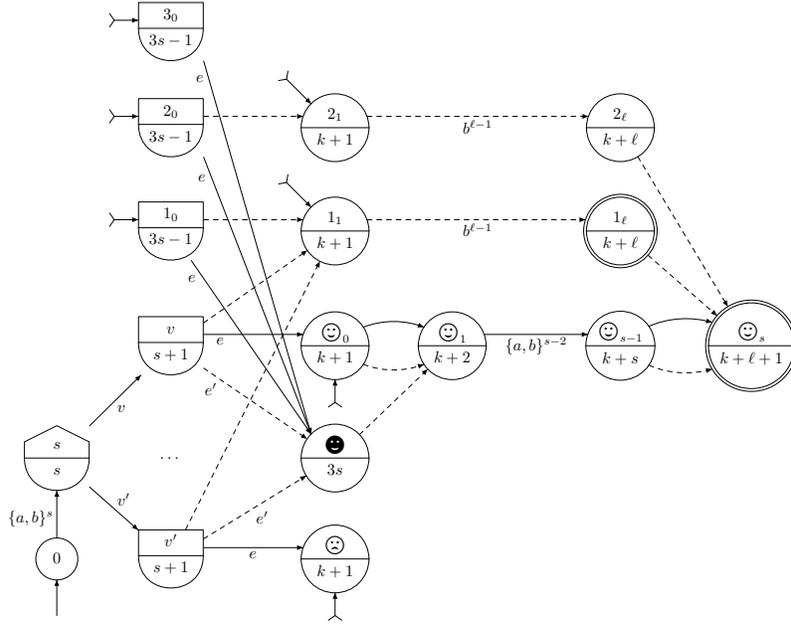}
  \caption{Illustration of the DFA~$M$ of
    Section~\protect{\ref{sec:kMinHardness}}}
  \label{fig:main_states}
\end{figure}
\vspace{-0.7cm}
\begin{definition}
  \label{df:kInp}
  We construct the DFA~$M = \langle Q, \Sigma, \delta, 0, F \rangle$
  as follows:
  \begin{itemize}
  \item $Q = \{\bot, \blacksmiley, \frowny, 3_0\} \cup \{i_j \mid i
    \in [2], j \in [\ell]\} \cup V \cup [0, s] \cup \{ \smiley_i \mid
    0 \leq i \leq s\}$,
  \item $\Sigma = \{a, b\} \cup V \cup E$,
  \item $F = \{\smiley_s, 1_{\ell}\}$, and
  \item for every $v \in V$, $e = \{v_1, v_2\} \in E$ with $v \notin
    e$ and $v_1 < v_2$, $i \in [s]$, and $j \in [\ell]$
    \begin{align*}
      \delta(i-1, a) &= i & \delta(v_1, e) &= \smiley_0 & \delta(1_0,
      e) &= \blacksmiley & \delta(1_{j-1}, b) &= 1_j \\
      \delta(i-1, b) &= i & \delta(v_2, e) &= \frowny & \delta(2_0, e) &=
      \blacksmiley & \delta(2_{j-1}, b) &= 2_j \\
      \delta(\smiley_{i-1}, a) &= \smiley_i & \delta(v, e) &=
      \blacksmiley & \delta(3_0, e) &= \blacksmiley & \delta(1_\ell,
      b) &= \smiley_s \\ 
      \delta(\smiley_{i-1}, b) &= \smiley_i & \delta(v, a) &= 1_1 & &&
      \delta(2_\ell, b) &= \smiley_s \\
      \delta(\blacksmiley, a) &= \smiley_1 & \delta(s, v) &= v
    \end{align*}
  \item For all remaining cases, we set $\delta(q, \sigma) = \bot$.
  \end{itemize}
\end{definition}

Finally, we show how to collapse $k$-similar states using a proper
\mbox{$3$-colouring} $c \colon V \to [3]$.  We obtain the $k$-similar
DFA~$c(M) = \langle P, \Sigma, \mu, 0, F \rangle$ from~$M$ by merging
each state~$v$ into~$c(v)_0$.  In addition, for every edge $e = \{v_1,
v_2\} \in E$, we let $\mu(c(v_1)_0, e) = \delta(v_1, e)$ and
$\mu(c(v_2)_0, e) = \delta(v_2, e)$.  Since the colouring~$c$ is
proper, we have that $c(v_1) \neq c(v_2)$, which yields that $\mu$~is
well-defined.  For the remaining $i \in [3] \setminus \{c(v_1),
c(v_2)\}$, we let $\mu(i_0, e) = \smiley_0$.
All equivalent states (i.e., $\bot$~and~$\frowny$) are merged.  The
gadgets that were added to~$M$ survive and are added to~$c(M)$.
Naturally, if a~certain state does no longer exist, then all
transitions leading to or originating from it are deleted too.  This
applies for example to~$\blacksmiley$.

\label{sec:kProof}
\begin{lemma}
  \label{lm:kMain}
  There exists a $k$-minimal DFA~$N$ for~$M$ with at most
  \[ 2^{2s-1} \cdot \abs E \cdot (\abs V - 2) + 3 \cdot 2^{s-1} \cdot
  \abs E + 2^{s+1} \cdot \abs V \]
  errors if and only if the input graph~$G$ is $3$-colourable.
\end{lemma}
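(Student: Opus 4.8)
The plan is to prove the two implications of the equivalence separately, in both cases working with the explicit automaton~$c(M)$. For the ``if'' direction I would take a proper $3$-colouring $c\colon V\to[3]$ and show that $c(M)$ (or the $k$-minimal automaton obtained from it by the further merges that~$k$-minimisation enforces, which will turn out to cost nothing extra) witnesses the bound. First I would check $L(M)\simdiff L(c(M))\subseteq\Sigma^{<k}$: every altered transition sits at in-level at most $s+1$, and the right-languages of the target states involved ($\smiley_0$, $\blacksmiley$, $\frowny$, and the $b$-chains leading to~$\smiley_s$) only contain words of length at most $\max(s,\ell)$, so each error word has length at most $(s+1)+1+\ell = k-s+2 < k$ by $\ell=k-2s$ and $k>4s$; hence $c(M)\sim_k M$. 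Next I would argue $k$-minimality via Lemma~\ref{lm:kCong}: every surviving state other than $1_0,2_0,3_0$ is pairwise non-$k$-similar to the rest by construction, the three colour states are made pairwise non-$k$-similar by the same gadget, and no vertex state remains, so no further merge is possible.

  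The computational core of this direction is the error count, which I would obtain by grouping the words of $L(M)\simdiff L(c(M))$ according to the state at which the two computations diverge. (i)~For every edge $e=\{v_1,v_2\}$ and every vertex $v\notin e$, merging $v$ into $c(v)_0$ reroutes the $2^s$ words of $\delta^{-1}(v)$; reading~$e$ these now reach $\smiley_0$ or $\frowny$ instead of~$\blacksmiley$, and $\abs{L_M(\blacksmiley)\simdiff L_M(\smiley_0)} = \abs{L_M(\blacksmiley)} = 2^{s-1}$ in either case, contributing $2^s\cdot 2^{s-1} = 2^{2s-1}$ words per pair~$(v,e)$, that is, $2^{2s-1}\abs E(\abs V-2)$ in total. (ii)~For every vertex~$v$, the same $2^s$ rerouted words read~$b$ into a surviving chain state whose right-language has size~$2$, whereas the $b$-transition of~$v$ was undefined in~$M$; this contributes $2^s\cdot 2 = 2^{s+1}$ per vertex, i.e.\ $2^{s+1}\abs V$ in total. (iii)~For every edge~$e$, the three transitions $\mu(1_0,e),\mu(2_0,e),\mu(3_0,e)$ are each changed from the common value~$\blacksmiley$ they had in~$M$, and each source state is reachable by a single word, contributing $2^{s-1}$ each, i.e.\ $3\cdot 2^{s-1}\abs E$ in total. (The uniform $a$-transitions of the vertices can be absorbed into $\mu(c(v)_0,a)$ at negligible cost.) Summing the three quantities gives exactly the stated bound.

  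For the ``only if'' direction I would start from any $k$-minimal automaton~$N$ for~$M$ meeting the bound. By Lemma~\ref{lm:kCong} the image in~$N$ of every state of~$M$ outside~$V$ is pinned down up to isomorphism, so in particular $1_0,2_0,3_0$ survive; since $N$ is $k$-minimal it contains no vertex state, and each $v\in V$ is $k$-similar only to states among $\{1_0,2_0,3_0\}$, so the merges induce a map $c\colon V\to[3]$. The decisive step is to show that $c$ is proper. Assuming a monochromatic edge $e=\{v_1,v_2\}$ with $c(v_1)=c(v_2)=i$, both $v_1$ and $v_2$ are merged into~$i_0$, and the single transition $\mu(i_0,e)$ cannot agree with both $\delta(v_1,e)=\smiley_0$ and $\delta(v_2,e)=\frowny$, whose right-languages differ on $2^s$ words; the at least $2\cdot 2^s$ words entering~$i_0$ through $\delta^{-1}(v_1)\cup\delta^{-1}(v_2)$, extended by~$e$ and a suitable suffix, then force $\Theta(2^{2s})$ additional errors, which, after comparison with the strictly smaller error that the remaining colour classes and edges unavoidably contribute, push $\abs{L(M)\simdiff L(N)}$ above the bound --- a contradiction. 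Hence $c$ is a proper $3$-colouring and $G$ is $3$-colourable.

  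The hard part will be exactly that last comparison: one must show that the extra $\Theta(2^{2s})$-sized error of a monochromatic edge strictly dominates every saving it could create elsewhere. This is genuinely delicate because an intermediate state such as~$\blacksmiley$ is itself $k$-similar to~$\smiley_0$ and therefore cannot simply be retained to provide ``free'' transition targets; the argument has to track which states survive $k$-minimisation and how the enforced further merges propagate errors. The precise numerology --- the product of a left-language of size $\ge 2^s$ with a right-language symmetric difference of size~$2^s$, weighed against the per-edge and per-vertex contributions above, and controlled through $s>\log\abs V+2$ and $k>4s$ --- is what makes the reduction go through. The remaining verifications (the sizes of $L_M(\smiley_0),L_M(\blacksmiley),L_M(\frowny)$ and the chain right-languages, the in-levels of vertices and colour states, and that the gadgets of Lemmas~\ref{lm:kCong} and~\ref{lm:kInlevel} interact with the merges as intended) are routine but lengthy. \qed
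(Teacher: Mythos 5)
Your ``if'' direction is essentially the paper's: exhibit $c(M)$, check $k$-similarity and $k$-minimality, and count the errors in three groups; your totals $2^{2s-1}\cdot\abs E\cdot(\abs V-2)$, $3\cdot 2^{s-1}\cdot\abs E$ and $2^{s+1}\cdot\abs V$ agree with the paper's (the question of whether the per-vertex $2^{s+1}$ term arises from the $a$- or the $b$-successor of the merged vertex states is immaterial). One remark: ``no further merge is possible'' is not by itself $k$-minimality, since $\sim_k$ on states is not transitive; the paper concludes minimality from the fact that the states of $c(M)$ form a \emph{maximal} set of pairwise $k$-dissimilar states of $M$, combined with Corollary~\ref{cor:kSet}.

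The genuine gap is in the ``only if'' direction. You reason about ``the merges'' producing $N$ and about ``the $2\cdot 2^s$ words entering $i_0$'', i.e.\ you treat an arbitrary $k$-minimal DFA $N$ that is $k$-similar to $M$ as if it were obtained by merging states of $M$; nothing licenses this, and Lemma~\ref{lm:kCong}, which you cite for it, only makes designated states of $M$ hyper-inequivalent --- it says nothing about the shape of an arbitrary $N$. The paper instead invokes the structural characterisation of Corollary~\ref{cor:multisets}: a bijection $h$ from a maximal set of pairwise $k$-dissimilar states onto $P$ with $q\sim h(q)$, and --- because the in-level gadget pushes the relevant states above level $k$ --- $q\equiv h(q)$ for all $q\in S$; this is what pins down $\mu(w)$ for the relevant prefixes, makes the colouring $c(v)=i\iff\mu(uv)=h(i_0)$ well defined (independent of $u\in\{a,b\}^s$), and allows error counting inside $N$ at all. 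Secondly, the step you explicitly defer (``the hard part will be exactly that last comparison'') is precisely what must be proved, and it is not a delicate global optimisation: Claim~\ref{clm:kErrors} shows that \emph{every} such $N$ commits at least $m=2^{2s-1}\cdot\abs E\cdot(\abs V-2)+3\cdot 2^{s-1}\cdot\abs E$ errors on words whose prefixes have $v\notin e$ or pass through the level gadget, whereas a monochromatic edge forces at least $2^{2s}$ further errors on words with prefixes $uv_ie$, $v_i\in e$ --- a disjoint family --- so there is no ``saving elsewhere'' to exclude; one only needs $2^{2s}>2^{s+1}\cdot\abs V$, guaranteed by $s>\log\abs V+2$, to exceed the slack between the bound and $m$. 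Without establishing the unavoidable lower bound $m$ for arbitrary $N$ (which again rests on the structural corollary) and the disjointness of the two families of error words, your contradiction does not go through.
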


\begin{corollary}
  \label{cor:khyper}
  `Error-bounded $k$-minimisation' is NP-complete.
\end{corollary}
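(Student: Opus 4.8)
The plan is to prove the two directions separately: NP-hardness via the reduction packaged in Lemma~\ref{lm:kMain}, and membership in NP via a short certificate.

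For NP-hardness I would reduce from $3$-colourability of an undirected graph $G = \langle V, E \rangle$. Choose $s = \lceil \log \abs V \rceil + 3$, $k = 4s + 1$, and $\ell = k - 2s$; these satisfy the requirements $s > \log \abs V + 2$ and $k > 4s$ of Section~\ref{sec:kCons}. Build the DFA~$M$ of Definition~\ref{df:kInp} and set
\[ m = 2^{2s-1} \cdot \abs E \cdot (\abs V - 2) + 3 \cdot 2^{s-1} \cdot \abs E + 2^{s+1} \cdot \abs V \enspace. \]
The point to check is that $G \mapsto \langle M, k, m \rangle$ is a polynomial-time many-one reduction. Since $s = \mathcal{O}(\log \abs V)$, the DFA~$M$ has $\mathcal{O}(\abs V)$ states and an alphabet of size $\mathcal{O}(\abs V + \abs E)$, so $\abs M$ is polynomial in~$\abs G$ and $M$ is computable in polynomial time; and because $2^{2s} = \mathcal{O}(\abs V^2)$ we get $m = \mathcal{O}(\abs V^3 \abs E)$, which is polynomial in~$\abs M$ and hence has a polynomial-size binary encoding. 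By Lemma~\ref{lm:kMain}, $\langle M, k, m\rangle$ is a yes-instance of error-bounded $k$-minimisation if and only if $G$~is $3$-colourable, so the problem is NP-hard.

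For membership in NP the certificate is a DFA~$N$. Reading ``$k$-minimal DFA for~$M$'' as in Lemma~\ref{lm:kMain}, i.e.\ $k$-minimal \emph{and} $k$-similar to~$M$, we first note $\abs N \leq \abs M$: the minimal DFA~$M'$ equivalent to~$M$ is $k$-similar to~$N$ as well, by the ultrametric inequality $d(L(M'), L(N)) \leq \max(d(L(M'), L(M)), d(L(M), L(N))) \leq k$, so a $k$-minimal~$N$ cannot be larger than~$M'$. Thus the guess is of polynomial size, and the verifier performs, all in time polynomial in $\abs M + \abs N + \log m$, the following checks. First, $N$~is minimal (minimise it with Hopcroft's algorithm~\cite{DFAminimisation} and compare sizes); a non-minimal DFA is never $k$-minimal, so this is necessary. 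Second, $N$~is $k$-similar to~$M$: in the product automaton recognising $L(M) \simdiff L(N)$, every accepting state has in-level below~$k$, which is equivalent to $L(M) \simdiff L(N) \subseteq \Sigma^{<k}$. Third, $N$~is $k$-minimal: build a distance forest for~$N$ via Theorem~\ref{thm:kminimise partial} and compute, using Theorem~\ref{thm:k minimisation implementation} (cf.\ Theorem~\ref{thm:kminimise is proper}), the size of a $k$-minimal DFA that is $k$-similar to~$N$; by the ultrametric inequality the classes ``$k$-similar to~$N$'' and ``$k$-similar to that output'' coincide, so $N$~is $k$-minimal exactly when that size equals~$\abs N$. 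Fourth, $\abs{L(M) \simdiff L(N)} \leq m$: by the second check the symmetric difference is finite, so the states of the product automaton that are reachable from the start state and from which an accepting state is reachable form a directed acyclic graph, over which a dynamic program counts the accepting paths, which we then compare against~$m$. (If the problem is phrased so that~$N$ need not be $k$-similar to~$M$, one argues separately that an optimal such~$N$ has size polynomial in $\abs M$ and~$k$, drops the second check, and in the fourth check also rejects when a cycle can reach an accepting state, signalling an infinite symmetric difference.)

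The genuine difficulty lies entirely in Lemma~\ref{lm:kMain}, i.e.\ in the refined reduction of Section~\ref{sec:kMinHardness}. Within the corollary the only subtle points are quantitative: keeping~$m$ polynomially bounded is exactly why the construction insists on $s > \log \abs V + 2$, and checking $k$-minimality of the certificate efficiently would be awkward without the $k$-minimisation machinery of Section~\ref{k-minimisation} but is immediate given Theorems~\ref{thm:kminimise partial} and~\ref{thm:k minimisation implementation}. \qed
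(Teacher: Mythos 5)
Your proposal is correct and follows essentially the route the paper intends: NP-hardness is exactly the polynomial-time construction of Section~\ref{sec:kCons} combined with Lemma~\ref{lm:kMain} (your parameter choice $s=\lceil\log\abs V\rceil+3$, $k=4s+1$ satisfies the required constraints and keeps $\abs M$ and $m$ polynomial), and NP-membership is the routine guess-and-verify argument the paper leaves implicit, with the size bound on the certificate and the polynomial-time checks via the $k$-minimisation machinery of Section~\ref{k-minimisation} filled in correctly.
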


%
%

\bibliographystyle{splncs03}
\bibliography{biblio}

\appendix
\clearpage

\section{Proofs and additional material for Section~\ref{sec:prel}}

\begin{lemma}
\label{lemat o nieskonczonosci}
If $p,q \in Q$ then $d(p,q)<+\infty$ implies that $d(p,q)<n$.
\end{lemma}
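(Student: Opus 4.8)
The plan is to bound a finite distance by counting how often an equivalence relation on the $n$~states can be coarsened, rather than by counting pairs of states (a pigeonhole on pairs would only give $d(p,q)\le\binom n2$).

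For every $j\in\nat$ I would consider the relation $R_j=\{(r,s)\in Q\times Q\mid d(r,s)\le j\}$. Because $d$ is an ultrametric ($d(r,r)=0$, symmetry, and $d(r,t)\le\max(d(r,s),d(s,t))$), each $R_j$ is an equivalence relation on~$Q$, and plainly $R_0\subseteq R_1\subseteq R_2\subseteq\cdots$. By~\eqref{eq:distance recursion} one has $R_0=\mathord\equiv$, which is a partition of the $n$-element set~$Q$ and hence has at most $n$~blocks.

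The crucial observation is that a finite distance~$m$ forces every smaller distance to be realised as well: if $d(r,s)=m\ge 1$, then $r\not\equiv s$, so by~\eqref{eq:distance recursion} some letter $a\in\Sigma$ attains the maximum, giving $d(\delta(r,a),\delta(s,a))=m-1$; iterating this produces, for every $i\in\{0,1,\dots,m\}$, a pair of states at distance exactly~$i$. Such a pair lies in $R_i$ but not in $R_{i-1}$, so the chain $R_0\subsetneq R_1\subsetneq\cdots\subsetneq R_m$ is strictly increasing. Each strict step coarsens the partition and therefore decreases its number of blocks by at least one, so from the at most $n$~blocks of~$R_0$ down to the at least one block of~$R_m$ we obtain $m\le n-1$. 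Applying this with $r=p$ and $s=q$ (the case $d(p,q)=0$ being trivial) yields $d(p,q)<n$ whenever $d(p,q)<\infty$.

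The only genuinely delicate point is this observation — that a finite distance is ``witnessed all the way down'' — which hinges on always choosing, in the recursion~\eqref{eq:distance recursion}, a transition letter attaining the maximum. Everything else is routine bookkeeping about strict coarsenings of an equivalence relation on $n$~elements; in particular no reduction to the minimal DFA is needed, since $R_0$ already has at most $n$~blocks irrespective of whether $M$ is minimal.
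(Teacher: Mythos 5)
Your proof is correct and follows essentially the same route as the paper: both consider the chain of equivalence relations $\{(r,s): d(r,s)\le i\}$ and bound the number of strict coarsening steps by the at most $n$ classes of $\equiv$. The only difference is presentational\emdash the paper argues via stabilisation of the class counts $n_0\ge n_1\ge\dots$, while you argue the contrapositive directly by descending along a maximising letter in~\eqref{eq:distance recursion}, which in fact spells out the step the paper leaves implicit.
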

\begin{proof}
  Let $D_i$ denote the equivalence relation defined as $D_i(p,q)$ iff
  $d(p,q) \leq i$.  Let $n_i$ be the number of equivalence classes of
  $D_i$, for $i = 0, 1 , \dots$.  Note that if $n_i = n_{i+1}$ then
  $n_i = n_{j}$ for all $j > i$ and $d(q, q') > i$ implies $d(q,q') =
  +\infty$.

  Since $n_0 \leq \abs Q$ the sequence $n_0 \geq n_1 \geq \ldots$
  stabilises at position $n_{|Q|-1}$, i.e., there are no states $q,q'$
  such that $D_{|Q|}(q,q')$ and $\neg D_{|Q|-1}(q,q')$.  Hence
  $d(q,q')< + \infty$ implies $D_{|Q|-1}(q,q')$, i.e., $d(q,q')<n$.
  \qed
\end{proof}

\section{Proofs and additional material for
  Section~\ref{k-minimisation}}
\subsection{Proofs and additional material for
  Section~\ref{k-minimisation2}}

It can be shown that if $M \sim_k N$ then the states reached after
reading the same word are also $k$-similar, assuming that the word is
short enough.

\begin{lemma}
  \label{lm:notsim}
  Let $M \sim_k N$, $q_1, q_2 \in Q$, and $w_1, w_2 \in \Sigma^*$ be
  such that $\delta(w_i) = q_i$ and $\abs{w_i} = \inlevel_M(q_i)$ for
  $i \in [2]$.  If $q_1 \not\sim_k q_2$ , then $\mu(w_1) \not\sim_k
  \mu(w_2)$.
\end{lemma}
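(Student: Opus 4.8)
The plan is to compare, for each $i \in [2]$, the state $q_i$ of~$M$ with the state $\mu(w_i) = \mu(p_0, w_i)$ reached in~$N$, and then to transport the relation $q_1 \not\sim_k q_2$ across these two comparisons by means of the ultrametric triangle inequality for~$d$. Throughout I set $\ell_i = \inlevel_M(q_i) = \abs{w_i}$ and $m = \min(k, \ell_1, \ell_2)$; by Definition~\ref{df:kEQstate} the hypothesis $q_1 \not\sim_k q_2$ says exactly that $d(q_1, q_2) > k - m$. (If some $\mu(p_0, w_i)$ is undefined, I would first complete~$N$ by a fresh non-accepting self-looping sink state; this changes neither $L(N)$ --- hence not $M \sim_k N$ --- nor the in-levels of the states of~$P$, and it makes $\mu(w_i)$ defined, so I may assume $N$ is total.)

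First I would establish the estimate $d(q_i, \mu(w_i)) \leq k - \min(k, \ell_i)$ for each $i \in [2]$. Since $\delta(w_i) = q_i$, for every word~$u$ we have $w_i u \in L(M) \iff u \in L_M(q_i)$, and likewise $w_i u \in L(N) \iff u \in L_N(\mu(w_i))$. Hence, whenever $\abs u \geq k - \ell_i$ --- so that $\abs{w_i u} \geq k$ --- the assumption $M \sim_k N$ yields $w_i u \in L(M) \iff w_i u \in L(N)$, i.e., $L_M(q_i)$ and $L_N(\mu(w_i))$ coincide on $\Sigma^{\geq \max(0, k - \ell_i)}$. Therefore $d(q_i, \mu(w_i)) \leq \max(0, k - \ell_i) = k - \min(k, \ell_i)$, the equality being immediate in both cases $\ell_i \leq k$ and $\ell_i > k$. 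Since $m \leq \min(k, \ell_i)$, this gives $d(q_i, \mu(w_i)) \leq k - m < d(q_1, q_2)$.

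Next I would invoke that $d$ is an ultrametric on languages, so that $d(q_1, q_2) \leq \max\{ d(q_1, \mu(w_1)),\, d(\mu(w_1), \mu(w_2)),\, d(\mu(w_2), q_2) \}$. By the previous step the first and third arguments of this maximum are strictly below $d(q_1, q_2)$, so the maximum must be attained at the middle argument: $d(\mu(w_1), \mu(w_2)) \geq d(q_1, q_2) > k - m$. Finally, the word $w_i$ leads to $\mu(w_i)$ in~$N$ and has length $\ell_i$, so $\inlevel_N(\mu(w_i)) \geq \ell_i$ and hence $\min(k, \inlevel_N(\mu(w_1)), \inlevel_N(\mu(w_2))) \geq m$. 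Combining these two inequalities, $d(\mu(w_1), \mu(w_2)) + \min(k, \inlevel_N(\mu(w_1)), \inlevel_N(\mu(w_2))) > (k - m) + m = k$, which by Definition~\ref{df:kEQstate} is exactly $\mu(w_1) \not\sim_k \mu(w_2)$.

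I do not expect a genuine obstacle. The points needing a little care are that one really uses the hypothesis $\abs{w_i} = \inlevel_M(q_i)$ in full --- equality, not merely an upper bound, is what makes the bound $d(q_i, \mu(w_i)) \leq k - \min(k, \ell_i)$ strong enough to stay strictly below $d(q_1, q_2)$ --- together with the routine bookkeeping around the truncated minimum $\min(k, \cdot)$ and around undefined transitions when $N$ is not total. Conceptually, the lemma merely records that $M \sim_k N$ pins each state reached after reading a longest access word close (in~$d$) to the corresponding state of~$M$, and then the ultrametric inequality carries $k$-dissimilarity from~$M$ over to~$N$.
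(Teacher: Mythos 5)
Your proof is correct. It differs from the paper's argument in how the dissimilarity is transported from~$M$ to~$N$: the paper splits into two cases, first disposing of $d(q_1,q_2)=\infty$ via the fact that hyper-equivalence is preserved ($\mu(w_i)\sim q_i$), and then, for finite distance, picking an explicit witness word $u\in L_M(q_1)\simdiff L_M(q_2)$ with $\abs u\geq d(q_1,q_2)-1$, observing that $\abs{w_iu}\geq k$ forces $u\notin L_M(q_i)\simdiff L_N(\mu(w_i))$, and concluding $u\in L_N(\mu(w_1))\simdiff L_N(\mu(w_2))$, hence $d(\mu(w_1),\mu(w_2))\geq\abs u+1\geq d(q_1,q_2)$. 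You instead quantify over \emph{all} sufficiently long suffixes to get the clean bound $d(q_i,\mu(w_i))\leq k-\min(k,\abs{w_i})$ and then let the ultrametric (strong triangle) inequality do the work: since both side distances are strictly below $d(q_1,q_2)$, the middle term must be at least $d(q_1,q_2)$. This buys a uniform treatment of the infinite-distance case (no appeal to $\sim$ being preserved, no case split) and isolates the reusable estimate ``access by a longest word pins $\mu(w_i)$ within $k-\min(k,\inlevel_M(q_i))$ of $q_i$,'' which is in the same spirit as the paper's Lemma~\ref{lem:distance is small} but derived here directly from $M\sim_k N$; the price is that you genuinely need ultrametricity (the ordinary triangle inequality would lose a factor), which the paper does assert, and a small amount of bookkeeping (completion of a partial~$N$ by a sink) that the paper leaves implicit. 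The final step, $\inlevel_N(\mu(w_i))\geq\abs{w_i}$ combined with the distance lower bound, coincides with the paper's conclusion.
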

\begin{proof}
  First, suppose that $q_1 \not \sim q_2$. Then, $M \sim N$ yields
  that
  \[ \mu(w_1) \sim \delta(w_1) = q_1 \not\sim q_2 = \delta(w_2) \sim
  \mu(w_2) \] and thus $\mu(w_1) \not\sim \mu(w_2)$, which proves that
  $\mu(w_1) \not\sim_k \mu(w_2)$.
  
  Second, let $d(q_1, q_2) < \infty$.  Since $q_1 \not\sim_k q_2$, we
  have 
  \[ d(q_1, q_2) + \min(k, \abs{w_1}, \abs{w_2}) > k \enspace. \]
  Clearly, there exists $u \in L_M(q_1) \simdiff L_M(q_2)$ with $\abs
  u \geq d(q_1, q_2)-1$.  Moreover, $\abs{w_1u} \geq k \leq
  \abs{w_2u}$.  Since $M \sim_k N$, we have $w_1u, w_2u \notin L(M)
  \simdiff L(N)$.  Consequently
  \[
  u \notin L_M(q_1) \simdiff L_N(\mu(w_1)) \qquad \text{and} \qquad u
  \notin L_M(q_2) \simdiff L_N(\mu(w_2)) \enspace. \] By assumption,
  $u \in L_M(q_1) \simdiff L_M(q_2)$ and thus $u \in
  L_N(\mu(w_1)) \simdiff L_N(\mu(w_2))$, which shows that $d(\mu(w_1),
  \mu(w_2)) \geq \abs u + 1$.  Clearly, $\inlevel_N(\mu(w_1)) \geq
  \abs{w_1}$ and $\inlevel_N(\mu(w_2)) \geq \abs{w_2}$, which yields
  $\mu(w_1) \not\sim_k \mu(w_2)$.  \qed
\end{proof}

We show some properties of \kminimise{}, which are used to show that
it properly constructs a $k$-minimal DFA.  Let $N$ denote the DFA
constructed by \kminimise{} at any particular point.

\begin{lemma}
  \label{lem:leading word}
  If $\delta_N(p', w) = p$ and $\inlevel_M(p) < k$ then
  \begin{equation}
    \label{eq:leading word}
    |w| \leq \inlevel_M(p) - \inlevel_M(p').
  \end{equation}
\end{lemma}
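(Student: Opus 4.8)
The plan is to extract from the behaviour of \kminimise{} a single monotonicity invariant and then read the desired inequality off the path labelled~$w$. Concretely, I claim: \emph{at every point of the execution of \kminimise{}, for every transition $\delta_N(r,a)=r'$ of the current DFA~$N$ one has $\inlevel_M(r')\ge\inlevel_M(r)+1$} (with the usual convention $\infty+1=\infty$). Granting this, Lemma~\ref{lem:leading word} is immediate: write $w=a_1\cdots a_{|w|}$ and let $r_0,\dots,r_{|w|}$ be the states visited in~$N$ when reading~$w$ from~$p'$, so $r_0=p'$, $r_{|w|}=p$, and $\delta_N(r_{i-1},a_i)=r_i$ for $1\le i\le|w|$. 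Applying the invariant to each edge of this path and adding the resulting inequalities gives $\inlevel_M(p)\ge\inlevel_M(p')+|w|$. Since $\inlevel_M(p)<k<\infty$, the value $\inlevel_M(p')$ must be finite too, so the inequality rearranges into $|w|\le\inlevel_M(p)-\inlevel_M(p')$, as required.

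To prove the invariant I would induct on the number of \proc{Merge} operations performed so far. In the base case $N=M$; as $M$~is minimal it is accessible, so whenever $\delta_M(r,a)=r'$ each $u\in\delta^{-1}(r)$ gives $ua\in\delta^{-1}(r')$, and taking the supremum over such~$u$ yields $\inlevel_M(r')\ge\inlevel_M(r)+1$. For the inductive step, assume the invariant for the current~$N$ and perform one more loop iteration. After the optional swap in Algorithm~\ref{alg:kMin}, we call $\proc{Merge}(N,q,p)$ with $\inlevel_M(q)\le\inlevel_M(p)$, producing~$N'$. Inspecting the definition of merging, the transitions of~$N'$ are exactly those of~$N$ except that every transition $\delta_N(r,a)=q$ becomes $\delta_{N'}(r,a)=p$ (and the outgoing transitions of the deleted state~$q$ vanish). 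For such a redirected transition the induction hypothesis gives $\inlevel_M(q)\ge\inlevel_M(r)+1$, so with $\inlevel_M(p)\ge\inlevel_M(q)$ we get $\inlevel_M(p)\ge\inlevel_M(r)+1$; all other transitions already satisfy the invariant. This inequality chain is uniform in whether or not $r=p$ and in whether the in-levels are finite, so nothing special happens at (possibly created) self-loops.

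The one point that genuinely needs care — and what I would flag as the main (mild) obstacle — is the bookkeeping around in-levels: one must note that the swap in the loop body is performed \emph{exactly} when $\inlevel_M(q)\ge\inlevel_M(p)$, so that $\inlevel_M(q)\le\inlevel_M(p)$ holds when \proc{Merge} is invoked; that $\inlevel_M$ is always measured in the \emph{fixed} original DFA~$M$ (never in the evolving~$N$), so the quantities in the invariant do not drift as~$N$ changes; and that $\inlevel_M(r)$ is well-defined for every state of~$N$ because merging only ever deletes states, keeping the state set of~$N$ a subset of~$Q$. Once these are observed, the argument above goes through.
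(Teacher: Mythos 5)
Your proof is correct and follows essentially the same route as the paper's: an induction on the number of merges establishing the single-transition inequality (using that the swap ensures the kept state has the larger $\inlevel_M$), followed by summing the inequalities along the path labelled~$w$. Your explicit treatment of infinite in-levels and of the base case via accessibility of the minimal~$M$ only makes the same argument more careful.
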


\begin{proof}
  The assertion of the lemma is shown to be hold after each merge done
  by \kminimise{}, i.e., by the induction on the number of merges done
  by \kminimise.  If there were no merges done yet then $N=M$ and the
  claim holds true.  Let $N$ denote the DFA before the merge and $N'$
  after it.

  We focus on $w=a \in \Sigma$.  So assume that
  $\delta_{N'}(p_1,a)=p_2$ after merging state $p$ to $q$.  The only
  non-trivial case is when $\delta_{N}(p_1,a)=p$ and $p_2=q$, i.e.,
  when something is changed after the merging.  By induction
  assumption $\inlevel_M(p)-\inlevel_M(p_1)\geq 1$.  As
  $\inlevel_M(q)\geq \inlevel_M(p)$ as guaranteed by \kminimise, the
  claim is obtained.

  When $|w|>1$ it is enough to consider the states obtained after
  transitions after each letter of $w$ and sum up the inequalities.
  \qed
\end{proof}

\begin{lemma}
\label{lem:distance is small}
During the run of \kminimise{} for all $p'\in Q(N)$, 
\begin{equation}
\label{eq:distance_preserved}
d(L_{M}(p'),L_{N}(p'))
	\leq
\max(0,k - \inlevel_M(p')).
\end{equation}
\end{lemma}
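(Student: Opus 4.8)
The plan is to induct on the number of merges performed by \kminimise{}, keeping everything expressed in terms of the original minimal DFA~$M$ (states of the current DFA are a subset of~$Q$, and $\sim_k$, $\inlevel_M$ refer to~$M$, in accordance with Lemma~\ref{lem:leading word}).  For zero merges we have $N = M$, so the left-hand side of~\eqref{eq:distance_preserved} is~$0$ and the bound holds.  For the inductive step let $N$ be the DFA before a merge and $N'$ the DFA obtained by merging $q$ into~$p$; after the swap in Algorithm~\ref{alg:kMin} we may assume $\inlevel_M(q) \le \inlevel_M(p)$.  Write $B(r) = \max(0, k - \inlevel_M(r))$.  Since $d$ is an ultrametric, $d(L_M(r), L_{N'}(r)) \le \max\bigl(d(L_M(r), L_N(r)),\, d(L_N(r), L_{N'}(r))\bigr)$, and the first term is $\le B(r)$ by the induction hypothesis; so it suffices to show $d(L_N(r), L_{N'}(r)) \le B(r)$ for every $r \in Q(N') = Q(N) \setminus \{q\}$.

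First I would extract two consequences of $q \sim_k p$.  Because $M$ is minimal and $q \neq p$, we have $d(L_M(q), L_M(p)) \ge 1$, and together with $\inlevel_M(q) \le \inlevel_M(p)$ Definition~\ref{df:kEQstate} gives $d(L_M(q), L_M(p)) + \min(k, \inlevel_M(q)) \le k$; hence $\inlevel_M(q) < k$ and $d(L_M(q), L_M(p)) \le k - \inlevel_M(q) = B(q)$.  Second --- and this is the step that makes the argument go through --- Lemma~\ref{lem:leading word} applied with target~$q$ (legitimate since $\inlevel_M(q) < k$) shows that any word $u$ with $\delta_N(r, u) = q$ satisfies $|u| \le \inlevel_M(q) - \inlevel_M(r)$.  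Taking $r = p$ forces $|u| \le 0$, i.e.\ there is \emph{no} path from $p$ to $q$ in~$N$; consequently merging $q$ into $p$ alters no transition reachable from~$p$, so $L_{N'}(p) = L_N(p)$.  This breaks the apparent circularity: we must not invoke the invariant we are proving for the surviving state~$p$ after the merge, but we do not need to.

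Next I would prove the key sub-claim $d(L_N(q), L_{N'}(p)) \le B(q)$.  Since $L_{N'}(p) = L_N(p)$, this is $d(L_N(q), L_N(p)) \le B(q)$, and two applications of the ultrametric inequality give
\[
  d(L_N(q), L_N(p)) \le \max\bigl(d(L_N(q), L_M(q)),\; d(L_M(q), L_M(p)),\; d(L_M(p), L_N(p))\bigr).
\]
The first term is $\le B(q)$ by the induction hypothesis (applied to~$q \in Q(N)$), the middle term is $\le B(q)$ by the previous paragraph, and the last term is $\le B(p) \le B(q)$ by the induction hypothesis together with $\inlevel_M(p) \ge \inlevel_M(q)$.

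Finally I would bound $d(L_N(r), L_{N'}(r))$.  The runs of a word $w$ from~$r$ in $N$ and in $N'$ coincide until the $N$-run first enters~$q$, so every $w \in L_N(r) \simdiff L_{N'}(r)$ factors as $w = uv$ with $\delta_N(r, u) = q$ and $v \in L_N(q) \simdiff L_{N'}(p)$.  By the leading-word bound, $|u| \le \inlevel_M(q) - \inlevel_M(r)$.  If $\inlevel_M(r) \ge k$, then $|u| \le \inlevel_M(q) - k < 0$, which is impossible, so $L_N(r) = L_{N'}(r)$ and $d(L_N(r), L_{N'}(r)) = 0 = B(r)$.  If $\inlevel_M(r) < k$ and $|w| \ge B(r) = k - \inlevel_M(r)$, then $|v| = |w| - |u| \ge k - \inlevel_M(q) \ge d(L_N(q), L_{N'}(p))$ by the sub-claim, hence $v \notin L_N(q) \simdiff L_{N'}(p)$ --- a contradiction.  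Thus $L_N(r)$ and $L_{N'}(r)$ agree on $\Sigma^{\ge B(r)}$, i.e.\ $d(L_N(r), L_{N'}(r)) \le B(r)$, completing the induction.  The only real obstacle is the circularity flagged above, and it is resolved entirely by the identity $L_{N'}(p) = L_N(p)$, which itself is an immediate corollary of Lemma~\ref{lem:leading word}.
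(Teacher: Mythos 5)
Your proof is correct and follows essentially the same route as the paper's: induction over the merges, using Lemma~\ref{lem:leading word} both to show that no word leads from the surviving state to the removed one (so the survivor's language is untouched) and to bound the length of words reaching the removed state, then chaining distances through $M$ via the ultrametric property. Your write-up is in fact somewhat more careful than the paper's (explicit factorisation $w=uv$ of error words and the explicit case $\inlevel_M(r)\geq k$), but the underlying argument is the same.
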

\begin{proof}
  We establish this claim by induction.  Let $N'$ denote the DFA after
  merging $p$ to $q$ and $N$ just before this merge.  Note, that as $p
  \not \equiv q$ (in $M$), thus $d(p,q)>0$ (in $M$).  Thus $p \sim_k
  q$ implies $\min(k,\inlevel_M(p),\inlevel_M(q))<k$.  Since $p$ is
  merged to $q$ by \kminimise, $\inlevel_M(p) \leq \inlevel_M(q)$ and
  as $p \sim_k $ q also $\inlevel_M(p)<k$.  Then by
  Lemma~\ref{eq:leading word} we conclude that there is no word
  leading from $q$ to $p$ in $N$: assume for the sake of contradiction
  that there is such a word $w$.  Since $\inlevel_M(p) < k$, by
  Lemma~\ref{eq:leading word}
  \[ \inlevel_M(p) \geq \inlevel_M(q) + |w| > \inlevel_M(q) \enspace ,
  \]
  contradiction.  Thus there is now word leading from $q$ to $p$ in
  $N$ and therefore $L_N(q) = L_{N'}(q)$.

  For the other case, lest us first estimate $d(L_{N}(p),L_{N'}(q)) =
  d(L_{N}(p),L_{N}(q))$.  As already noted,
  $\min(k,\inlevel_M(p),\inlevel_M(q)) = \inlevel_M(p)$, which allows
  us to reduce $p \sim_k q$ to
  \begin{align}
    \notag d(L_M(p),L_M(q)) + \inlevel_M(q) &\leq k \intertext{and
      thus} \notag d(L_M(p),L_M(q)) & \leq k - \inlevel_M(p).
    \intertext{By induction assumption} \notag d(L_M(q),L_{N}(q))
    &\leq
    \max(0,k - \inlevel_M(q))\\
    \notag d(L_M(p),L_{N}(p)) &\leq \notag
    \max(0,k - \inlevel_M(q))\\
    &= \notag k - \inlevel_M(p) \intertext{and as $d$ is an ultra
      metric}
    \label{eq:distance_induction_works}
    d(L_N(p),L_{N}(q)) &\leq k - \inlevel_M(p) \enspace .
  \end{align}

  So consider an arbitrary state $p'$.  If it has no word leading to
  $p$ in $N$, then $L_N(p')=L_{N'}(p')$ and we are done.  If it has a
  word $w$ leading to $p$, then Lemma~\ref{lem:leading word} can be
  applied, establishing:
  \begin{align*}
    d(L_N(p'),L_{N'}(p')) &= \max_{w: \delta_N(p',w)=p}
    |w|+d(L_{N'}(p),L_{N}(q)) \intertext{the former can be estimated
      by~\eqref{eq:distance_preserved} and the latter
      by~\eqref{eq:distance_induction_works}, yielding}
    d(L_N(p'),L_{N'}(p')) &\leq
    (\inlevel_M(p)-\inlevel_M(p'))+(k-\inlevel_M(p))\\
    &\leq k - \inlevel_M(p'),
  \end{align*}
  which ends the proof.  \qed
\end{proof}

\begin{proof}[of Theorem~\ref{thm:kminimise is proper}]
Let $q_0,q_1,\ldots,q_n$ be the starting states in DFAs $M=N_0$, $N_1$,
\ldots , $N_n=N$.  By Lemma~\ref{lem:distance is small},
$d(L_M(q_i),L_{N_i}(q_i))\leq k$.
On the other hand,
since $q_i$ is merged to $q_{i+1}$ then $d(L_M(q_i),L_M(q_{i+1}))\leq k$.
So all the languages in question are within distance $k$ of each other and therefore
$$
d(L(M),L(N))\leq k \enspace .
$$
Thus $M \krowne N$. It is left to show that $N$ is $k$-minimal.
Consider the set of states $Q'$ of $N$ and let $M'$ be a DFA $k$-similar to $M$.
By \kminimise{}, they are pairwise
$k$-dissimilar (as states in $M$). For a state $q \in Q'$ let $w_q$
be the word such that $|w_q| \geq \min(\inlevel_M(q),k)$.
Consider any two such words $w_q$ and $w_p$.
Then by Lemma~\ref{lm:notsim} $w_q$ and $w_p$
cannot lead to the same state in $M'$.
Hence the size of $M'$ is at least $|Q'|$, which is exactly the size of $N$.
\qed
\end{proof}

\begin{corollary}[of Theorem~\ref{thm:kminimise is proper}]
\label{cor:kSet}
Each maximal (with respect to the inclusion) set $Q'$
of pairwise $k$-dissimilar states of a DFA $M$ is of size
of the $k$-minimal DFA for $M$.
\end{corollary}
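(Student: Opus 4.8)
The plan is to sandwich $\abs{Q'}$ between the size~$s_0$ of a $k$-minimal DFA for~$M$ from both sides; note first that $s_0$ is well defined, as any two $k$-minimal DFAs that are $k$-similar to~$M$ are $k$-similar to each other (because $d$ is an ultrametric) and hence of equal size. For the bound $\abs{Q'}\le s_0$ I would reuse the lower-bound half of the proof of Theorem~\ref{thm:kminimise is proper}, which nowhere uses that the dissimilar state set arose from \kminimise{}: for \emph{any} set~$Q'$ of pairwise $k$-dissimilar states of~$M$ and any DFA~$N'$ with $N'\sim_k M$, pick for each $q\in Q'$ a word~$w_q$ with $\delta(w_q)=q$ and $\abs{w_q}=\inlevel_M(q)$ when $\inlevel_M(q)<k$ (and any $w_q$ with $\abs{w_q}\ge k$ otherwise); since $q\not\sim_k p$ for distinct $q,p\in Q'$, the argument behind Lemma~\ref{lm:notsim} (as used in the proof of Theorem~\ref{thm:kminimise is proper}) gives $\mu(w_q)\ne\mu(w_p)$, so $\abs{N'}\ge\abs{Q'}$. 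Instantiating $N'$ with a $k$-minimal DFA yields $\abs{Q'}\le s_0$ for every such~$Q'$, maximal or not.

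For $\abs{Q'}\ge s_0$ when $Q'$ is maximal I would exploit that \kminimise{} is free to choose which $k$-similar pair to merge, while Theorem~\ref{thm:kminimise is proper} guarantees that every run outputs a $k$-minimal DFA (hence of size~$s_0$); so it suffices to steer some run into halting with state set exactly~$Q'$. Two facts about a maximal~$Q'$ would drive this. (a)~Every $q\in Q\setminus Q'$ has a \emph{unique} partner $p(q)\in Q'$ with $q\sim_k p(q)$: existence is maximality, and if $q\sim_k p_1$ and $q\sim_k p_2$ for distinct $p_1,p_2\in Q'$, then unfolding Definition~\ref{df:kEQstate} and using that $d$ is an ultrametric gives $d(p_1,p_2)+\min(k,\inlevel_M(p_1),\inlevel_M(p_2))\le k$, i.e.\ $p_1\sim_k p_2$, contradicting pairwise dissimilarity of~$Q'$. (b)~$\inlevel_M(q)\le\inlevel_M(p(q))$ for every $q\in Q\setminus Q'$. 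Granting (a) and (b), I would merge each $q\in Q\setminus Q'$ into~$p(q)$ in arbitrary order: every step redirects the lower-in-level state~$q$ into the $k$-similar higher-in-level state $p(q)\in Q'$, which is never removed, so it is a legal \kminimise{}-merge, and the loop halts exactly when the state set is~$Q'$ (which is pairwise dissimilar). By Theorem~\ref{thm:kminimise is proper} the resulting DFA is $k$-minimal, so $\abs{Q'}=s_0$, and combining with the first bound proves the corollary.

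The hard part will be fact (b). If $\inlevel_M(q)>\inlevel_M(p(q))$ one cannot merge~$q$ into~$p(q)$---this deletes a state of~$Q'$ and may even break $k$-similarity (the direction hypothesis $\inlevel_M(p)\le\inlevel_M(q)$ is exactly what Lemma~\ref{lem:distance is small} needs)---so the construction above stalls. My plan for this is an exchange argument: whenever (b) fails for some~$q$, replace~$Q'$ by $(Q'\setminus\{p(q)\})\cup\{q\}$, which by the uniqueness in~(a) is again pairwise $k$-dissimilar of the same cardinality, and re-extend it to a maximal set if necessary. Since $\inlevel_M(q)>\inlevel_M(p(q))$ forces $\inlevel_M(p(q))<k$, the monovariant $\sum_x\min(\inlevel_M(x),k)$ strictly increases under each exchange and is bounded by $k\abs{Q'}$, so the process terminates at a maximal set satisfying~(b). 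The delicate point---and where I expect the real work of the proof to lie---is to argue that the re-extension never actually enlarges the set; this amounts to showing that the ``mergeability'' compatibility relation on the states of a minimal DFA is matroid-like, so that all inclusion-maximal pairwise $k$-dissimilar sets have the same size.
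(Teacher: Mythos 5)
Your first half is fine: the injection argument via Lemma~\ref{lm:notsim} indeed uses only pairwise $k$-dissimilarity, so every such set has size at most that of any DFA $k$-similar to~$M$; this mirrors the minimality half of the paper's proof of Theorem~\ref{thm:kminimise is proper}. The trouble is in the second half, and it starts with your fact~(a), which is false as stated. From $q \sim_k p_1$ and $q \sim_k p_2$ you cannot conclude $d(p_1,p_2)+\min(k,\inlevel_M(p_1),\inlevel_M(p_2)) \leq k$: dropping $q$ can \emph{increase} the min term in Definition~\ref{df:kEQstate}. Take $\inlevel_M(q)=0$, $\inlevel_M(p_1),\inlevel_M(p_2)\geq k$, $d(q,p_1)=k$ and $d(q,p_2)=k-1$; then $q\sim_k p_1$ and $q\sim_k p_2$, while the ultrametric forces $d(p_1,p_2)=k$, so $p_1\not\sim_k p_2$. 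Such configurations do occur (a start state of small in-level is typically $k$-similar to several pairwise $k$-dissimilar high-in-level states), so a state outside $Q'$ may have \emph{several} partners in $Q'$. Uniqueness does hold in the one case your exchange actually needs, namely when $\inlevel_M(q)>\inlevel_M(p)$, but proving it requires using that in-level inequality together with the ultrametric property, which your sketch does not do; for the merging phase itself uniqueness is unnecessary anyway, existence of a partner of no smaller in-level suffices.

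The more serious problem is that the step you yourself call ``the real work'' is exactly the step your proposal does not contain, and your plan for it is circular: ``re-extend to a maximal set and argue the re-extension never enlarges the set, i.e.\ that all inclusion-maximal pairwise $k$-dissimilar sets have the same size'' is verbatim the corollary you are proving. The paper's proof does not re-extend at all: it replaces $p$ by $q$ (same cardinality), claims the resulting set is still a maximal pairwise $k$-dissimilar set, iterates until property~\eqref{eq:maximal elements} holds, and then steers \kminimise{} so that members of the set are never merged away; maximality is what forces the steered run to halt with exactly this set, and Theorem~\ref{thm:kminimise is proper} supplies $k$-minimality of the output. Preservation of pairwise dissimilarity under the swap can indeed be proved (again using $\inlevel_M(q)>\inlevel_M(p)$ plus the ultrametric), but preservation of the \emph{domination} property---that after the swap every outside state, in particular a second high-in-level state $r$ whose only partner was $p$, still has a partner inside---is precisely the delicate point, and it does not follow from the numeric constraints of Definition~\ref{df:kEQstate} alone (the counterexample pattern above is exactly a $p$ with two stranded candidates $q,r$). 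So your write-up correctly locates the crux but does not close it: as it stands, the lower bound $\abs{Q'}\geq s_0$ is not established, and the proposal is a proof outline with a genuine gap rather than a proof.
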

\begin{proof}
First note that without loss of generality we may assume that
\begin{equation}
\label{eq:maximal elements}
p \in Q', \ q \notin Q' \text{ and } p \krowne q \text{ implies } \inlevel_M(p) \geq \inlevel_M(q).
\end{equation}
If not, then we can replace $p$ by $q$ in $Q'$, without loosing the assumed property of $Q'$.
After finitely many such substitutions, $Q'$ satisfying~\eqref{eq:maximal elements} is obtained.

Run \kminimise{} for $M$, and whenever there are two states $p \krowne q$ considered,
merge the one outside $Q'$ to the one in $Q'$ (do arbitrarily, if none is in $Q'$).
Since $Q'$ is maximal with respect to the inclusion, \kminimise{} terminates with the DFA with
$Q'$ as the set of states.
\qed
\end{proof}

Now we are able to establish a structural characterisation of $k$-similar DFAs,
analogous to characterisation of hyper-equivalent DFA's~\cite[Sect.~3.2]{BadrRAIRO}.
In particular, we derive the analogue of~\cite[Theorem~3.8]{BadrRAIRO} for $k$-similar DFAs.

\begin{corollary}[{\protect{of Lemma~\ref{lm:notsim} and Corollary~\ref{cor:kSet}}}]
  \label{cor:multisets}
  Let $S \subseteq Q$ be a maximal set of pairwise $k$-dissimilar
  states of~$M$, and let $N$~be a~$k$-minimal DFA for~$M$.  Then there
  exists a bijection $h \colon S \to P$ such that
  \begin{itemize}
  \item $q \sim h(q)$ for every $q \in S$, and
  \item $q \equiv h(q)$ for every $q \in S$ such that $\inlevel_M(q)
    \geq k$. 
  \end{itemize}
\end{corollary}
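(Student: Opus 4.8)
The plan is to combine Corollary~\ref{cor:kSet}, which tells us that $\abs S = \abs P$ (so that any injection $S \to P$ that we build is automatically a bijection), with Lemma~\ref{lm:notsim}, which lets us push $k$-dissimilarity from $M$ into~$N$. First I would fix, for each $q \in S$, a word $w_q \in \delta^{-1}(q)$ with $\abs{w_q} = \min(k, \inlevel_M(q))$; such a word exists because $\inlevel_M(q) = \sup\makeset{\abs w}{w \in \delta^{-1}(q)}$, and if $\inlevel_M(q) = \infty$ we may take any word of length at least~$k$. Define $h(q) = \mu(w_q)$, i.e.\ the state reached in~$N$ after reading~$w_q$. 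Since $M \sim_k N$ (because $N$ is $k$-similar to~$M$) and the states of~$S$ are pairwise $k$-dissimilar in~$M$, Lemma~\ref{lm:notsim} — applied with $w_i = w_{q_i}$, which by construction has length $\inlevel_M(q_i)$ when that value is finite, and otherwise length~$\geq k$, where the lemma's argument still applies — yields that $h(q_1) \not\sim_k h(q_2)$, hence $h(q_1) \neq h(q_2)$, for distinct $q_1, q_2 \in S$. So $h$ is injective, and by Corollary~\ref{cor:kSet} it is a bijection onto~$P$.

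It remains to verify the two bulleted properties. For the first, $q \sim h(q)$: since $M \sim_k N$ we have $L_M(q) = L_M(\delta(w_q)) \sim_k L_N(\mu(w_q)) = L_N(h(q))$ once we account for the length of the common prefix~$w_q$ — more precisely, $w_q u \in L(M) \simdiff L(N)$ is impossible for $\abs{w_q u} \geq k$, so $u \in L_M(q) \simdiff L_N(h(q))$ forces $\abs u < k - \abs{w_q}$, whence $d(q, h(q)) + \min(k, \inlevel_M(q)) \leq k$; this is stronger than $q \sim h(q)$ (recall $\mathord\sim = \bigcup_k \mathord{\sim_k}$ on languages, and on states $q \sim h(q)$ iff $L_M(q) \sim L_N(h(q))$, which is implied by $d(q,h(q)) < \infty$, and here the distance is even finite). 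For the second, when $\inlevel_M(q) \geq k$ we take $\abs{w_q} \geq k$, so for any $u$ the word $w_q u$ has length~$\geq k$ and thus lies outside $L(M) \simdiff L(N)$; hence $u \in L_M(q) \iff w_q u \in L(M) \iff w_q u \in L(N) \iff u \in L_N(h(q))$, giving $L_M(q) = L_N(h(q))$, i.e.\ $q \equiv h(q)$.

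The main obstacle is the careful bookkeeping around the infinite-in-level case: Lemma~\ref{lm:notsim} is stated for $w_i$ with $\abs{w_i} = \inlevel_M(q_i)$, which is literally a finite quantity, so I need to either re-examine its proof to confirm it goes through verbatim when $\inlevel_M(q_i) = \infty$ and we substitute a word of length~$\geq k$ (inspection of the proof shows the only thing used is $\abs{w_1 u}, \abs{w_2 u} \geq k$ together with $\delta(w_i) = q_i$, which both hold), or state a trivial corollary of Lemma~\ref{lm:notsim} covering that case. Everything else is the routine ultrametric and symmetric-difference manipulation sketched above, of the same flavour as the proof of Theorem~\ref{thm:kminimise is proper}.
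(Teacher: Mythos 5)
Your proposal is correct and follows essentially the same route as the paper's proof: define $h(q)=\mu(w_q)$ for a suitably long word $w_q\in\delta^{-1}(q)$, get $q\sim h(q)$ from $M\sim N$, get $q\equiv h(q)$ from $M\sim_k N$ when $\abs{w_q}\geq k$, injectivity from Lemma~\ref{lm:notsim}, and bijectivity from $\abs S=\abs P$ via Corollary~\ref{cor:kSet}; you even spell out the injectivity step and the $\inlevel_M(q)=\infty$ case that the paper leaves implicit. The only blemish is your normalisation $\abs{w_q}=\min(k,\inlevel_M(q))$: when $k\leq\inlevel_M(q)<\infty$ a word of that exact length need not exist, but since your argument only uses $\delta(w_q)=q$ and $\abs{w_q}\geq\min(k,\inlevel_M(q))$, taking any $w_q$ with $\abs{w_q}=\inlevel_M(q)$ (or any word of length at least~$k$ in the infinite case) repairs this trivially.
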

\begin{proof}[of Corollary~\ref{cor:multisets}]
  We have $\abs S = \abs P$ by Corollary~\ref{cor:kSet}.  For every $q \in
  S$, let $w_q \in \delta^{-1}(q)$ be such that $\abs{w_q} =
  \inlevel_M(q)$.  We define the mapping $h \colon S \to P$ by $h(q) =
  \mu(w_q)$ for every $q \in S$.  Since $M \sim_k N$, which yields $M
  \sim N$, we have $q = \delta(w_q) \sim \mu(w_q) = h(q)$.  Finally,
  suppose that $\inlevel_M(q) \geq k$.  Then $L_M(q) =
  L_M(\delta(w_q)) = L_N(\mu(w_q)) = L_N(h(q))$ because $M \sim_k N$, 
  which yields $q \equiv h(q)$.  \qed
\end{proof}

\subsection{Additional material for Section~\ref{subsec:distance forests}}
We refer to the distance tree we construct for the DFA $M$ using the
notation $\mathcal{D}(M)$.  We identify the leaves with the states of
the DFA if this raises no confusion.  To simplify the argument, we
assume that $\bot$ is always in the $\mathcal{D}(M)$.  We refer to a
tree in a distance forest by a name of a \emph{distance tree}.  The
vertices that are present in the compressed representation are called
\emph{explicit}, while those that were removed are called
\emph{implicit}.  The standard terms \emph{father} $\father(v)$ of a
vertex $v$ and \emph{ancestor} always refer to implicit vertices.

\begin{algorithm}
\caption{\distancetree{}\label{alg:simple distance forest}}
\begin{algorithmic}[1]
\For{$p \in Q$}
	\State $\state(p) \gets p$, $\level(p) \gets 0$, activate $p$
\EndFor
\For{$\ell=1$ \To{} $|n|$}
\label{line:grouping}
\State group active nodes according to $(\delta((\state(v),a))_{a \in \Sigma})$
\For{each group of nodes $V'$ such that $|V'|>1$}
	\State choose $v \in V'$
	\State create active node $v'$, $\level(v')=\ell$, $\state(v')=\state(v)$
	\For{$v'' \in V'$}
		\State join $v''$ to $v'$, deactivate $v''$ 
		\State replace $\state(v'')$ in entries of $\delta$ by $\state(v')$
	\EndFor
\EndFor
\EndFor
\end{algorithmic}
\end{algorithm}

\begin{lemma}
For a minimised DFA $M$ if \distancetree{}
replaced the state $q$ in $\delta$ by $p$ at phase $\ell$ then $d(p,q)=\ell$.
\end{lemma}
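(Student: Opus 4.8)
The claim is that \distancetree{} correctly builds a distance forest: whenever it replaces state $q$ by $p$ at phase $\ell$, we have $d(p,q)=\ell$. Let me sketch a proof.

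The plan is to prove this by induction on the phase number $\ell$, simultaneously establishing the natural invariant that after phase $\ell$ completes, two states $p,q$ have had their $\state$ values merged (i.e., $\state(p)=\state(q)$ in the current $\delta$) if and only if $d(p,q)\le\ell$, and moreover any active node at the start of phase $\ell+1$ represents an equivalence class of the relation $D_\ell$ "$d(\cdot,\cdot)\le\ell$". The base case $\ell=0$ is immediate: $M$ is minimal, so $d(p,q)=0$ iff $p=q$, which is exactly the initial situation where each state is its own active node.

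For the inductive step, I would assume the invariant holds after phase $\ell-1$: active nodes are exactly the $D_{\ell-1}$-classes, and all merges so far happened at the correct distance. In phase $\ell$ we group active nodes $v$ by the vector $(\delta(\state(v),a))_{a\in\Sigma}$ and merge each group of size $>1$. The key point is the recursion~\eqref{eq:distance recursion}: since all states already identified (same $\state$) are exactly those at distance $\le\ell-1$, two not-yet-identified states $p,q$ satisfy $d(p,q)=\ell$ iff for every $a\in\Sigma$ we have $d(\delta(p,a),\delta(q,a))\le\ell-1$, i.e. iff $\delta(p,a)$ and $\delta(q,a)$ already have the same $\state$ for all $a$ — which is precisely the condition under which they land in the same group. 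Hence a fresh node $v'$ is created with $\level(v')=\ell$ and all members of the group are joined to it and deactivated; the paths from those members up to $v'$ have total weight $\ell$, so $\lca$ of any two of them is $v'$ with $\level=\ell$, matching $d(p,q)=\ell$. States not merged in phase $\ell$ are exactly those still at distance $>\ell$ from everything, preserving the invariant for the next phase. Finally, since the sequence of class counts is non-increasing and stabilises by phase $n-1$ (Lemma~\ref{lemat o nieskonczonosci}), any two states at finite distance are merged by some phase $\le n-1$, and the recorded $\level$ at their $\lca$ equals their distance; states never merged lie in different trees, matching the $\infty$ case of the distance-forest definition.

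The main obstacle is making the simultaneous induction airtight: one must be careful that when phase $\ell$ replaces $\state(q'')$ by $\state(v')$ in the entries of $\delta$, this substitution does not retroactively affect the grouping already performed within phase $\ell$ (it only affects future phases), and that the grouping in phase $\ell$ genuinely sees the $\state$-values as they were at the end of phase $\ell-1$. Formally this means processing all groups of phase $\ell$ "in parallel" with respect to reads of $\delta$, while writes are deferred; the pseudocode's loop structure must be read this way, or equivalently one observes that within a phase no group's defining vector changes because a merge in one group only rewrites occurrences of a $\state$-value that was a singleton active node, and such a value cannot simultaneously be the target of another group's vector in a way that changes membership. Once this bookkeeping is pinned down, the distance recursion does all the real work.
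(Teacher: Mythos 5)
Your proof is correct and takes essentially the same route as the paper's: induction on the phase number, using the distance recursion~\eqref{eq:distance recursion} together with the ultrametric property, where the lower bound $d(p,q)>\ell-1$ comes from the pair not having been merged in an earlier phase and the upper bound $d(p,q)\leq\ell$ from the successors' current representatives coinciding. The only difference is presentational: you state the invariant ``merged after phase $\ell$ iff distance at most $\ell$'' explicitly (and spell out the within-phase bookkeeping), whereas the paper's proof uses this completeness direction implicitly.
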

\begin{proof}
The proof proceeds by induction on $\ell$.
If $\ell=1$ then $q$ and $p$ have the same successors. Since $q$ and $p$ are not equivalent,
their distance is exactly $1$, as claimed.

Suppose that $q$ was replaced by $p$ in phase $\ell>1$.
Since $q$ was not replaced by $p$ in phase $\ell-1$, by the induction assumption $d(q,p) > d-1$.
Let $(q_1, \ldots , q_{|\Sigma|})$ be the set of successors of $q$ in the DFA
and let $(q_1', \ldots , q'_{|\Sigma|})$ be the vector of its successors in the representation in \distancetree.
Since $q_i$ was replaced by $q_i'$ in phase $\ell-1$ or earlier, $d(q_i,q_i') \leq \ell-1$.
Similarly, $d(p_i,q_i') \leq \ell-1$.
Then, by~\eqref{eq:distance recursion},
\begin{align*}
d(p,q)
	&=
1 + \max_{i=1}^{|\Sigma|} d(q_i,p_i)\\
	&\leq
1 + \max_{i=1}^{|\Sigma|} \left(\max(d(q_i,q_i'),d(q_i',p_i))\right)\\
	&\leq
\ell.
\end{align*}
Consider now any other state $r$ such that $r$ was replaced by $q$ in earlier phases.
Then $d(r,q)\leq \ell-1$ and thus $d(r,p) = \ell$.
\qed
\end{proof}

The bottleneck of \distancetree{}
is the replacing of occurrences of $q$ in $\delta$ by some other state $p$.
We show that such replacing can be done in a way so that a single entry in $\delta$
is modified at most $\log n$ times.

\begin{lemma}
\label{lem:delta entry replaced log n times}
\distancetree{} can be implemented so that it alters every value of $\delta$ at most $\log n$ times.
\end{lemma}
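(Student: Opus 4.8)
The plan is to use the classical ``weighted-union'' / smaller-into-larger heuristic. When the algorithm processes a group~$V'$ of active nodes in phase~$\ell$ that must be merged into a single state, we are free to choose \emph{which} representative~$v \in V'$ survives, and hence which states have their occurrences in~$\delta$ rewritten. Rather than choosing~$v$ arbitrarily, we choose~$v$ so that the number of entries of~$\delta$ currently pointing to~$\state(v)$ is maximal among all nodes in~$V'$; equivalently, for every state~$q$ we maintain a counter~$\#(q) = \abs{\{(r,a) \mid \delta(r,a) = q\}}$, and we pick the representative whose counter is largest. Then only the entries pointing to the \emph{non}-chosen states are rewritten, and each such non-chosen state~$q''$ is absorbed into a state whose pre-image in~$\delta$ is at least as large as that of~$q''$.

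The key step is the standard amortisation argument: assign to each entry~$(r,a)$ of the transition table a ``potential'' equal to~$\#(\delta(r,a))$, the size of the pre-image of the state it currently points to. Whenever the algorithm rewrites the value of~$(r,a)$ from~$q''$ to~$v'$ (a node in the same group~$V'$), the new target state~$\state(v')$ has pre-image at least~$\#(\state(v')) \ge \#(q'') + \#(\state(v')) \ge 2\,\#(q'')$ in the merged DFA after this phase's merges, since the chosen representative had the largest pre-image in~$V'$ and all of~$V'$ is coalesced. Hence each rewrite of a fixed entry at least doubles the size of the pre-image of the state that entry points to. Since that pre-image starts at~$\ge 1$ and is bounded by~$\abs M \le$ (number of entries), which is at most~$\abs\Sigma\, n$, no single entry can be rewritten more than~$\log_2(\abs\Sigma\, n)$ times; absorbing the~$\log\abs\Sigma$ term into the~$\mathcal{O}$, or working per-letter as in Lemma~\ref{lem:distance is small}'s bookkeeping, gives the claimed~$\log n$ bound. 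Maintaining the counters~$\#(q)$ costs~$\mathcal{O}(1)$ per rewrite, so the choice of representative adds no asymptotic overhead (it only requires scanning the current pre-image sizes of the members of~$V'$, which is dominated by the work already spent listing~$V'$).

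I expect the main obstacle to be making precise the claim that after coalescing a whole group~$V'$ the surviving state's pre-image is at least double that of any absorbed member. One must be careful that the ``pre-image sizes'' being compared are those \emph{at the moment of the merge} (after earlier phases' rewrites) and that the doubling is with respect to the \emph{post-merge} table, not the pre-merge one; a group of size~$2$ with two equal-size pre-images still yields exactly a factor~$2$, which is the tight case and justifies why the bound is~$\log n$ and not~$\frac12\log n$. A secondary, purely cosmetic point is handling the sink state~$\bot$ and the possibly many entries pointing to it: since~$\bot$ is never absorbed into another state (it is a distinguished leaf), its entries are never rewritten, so it causes no difficulty. With these points addressed, the lemma follows, and combined with the per-operation cost of the linear dictionary it yields the~$\mathcal{O}(\abs M \log n)$ bound on linear-dictionary operations asserted in Theorem~\ref{thm:linear dictionary log n times}.
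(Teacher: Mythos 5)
Your core idea is the same as the paper's: exploit the freedom in choosing the surviving representative of each group $V'$, pick the ``heaviest'' one, and amortise via a doubling argument. The difference is the weight you use, and that choice creates a real gap. The paper attaches to each state a counter $c(p)$ equal to the number of \emph{original states it represents} (initialised to $1$, and set to $\sum_{v''\in V'} c(\state(v''))$ on a merge); this quantity is bounded by $n$ and can only grow, so each rewrite of a $\delta$-entry doubles a value that never exceeds $n$, giving exactly the claimed $\log n$. Your potential, the number of $\delta$-entries currently pointing to the state, fails on both counts. First, it is bounded only by $\abs{\delta}=\abs M\le\abs\Sigma\,n$, so the argument yields $\log\abs M=\log n+\log\abs\Sigma$ rewrites per entry, which is genuinely weaker when $\abs\Sigma$ is large (and the lemma, and the resulting $\mathcal{O}(\abs M\log n)$ bound of Theorem~\ref{thm:linear dictionary log n times}, are claimed independently of $\abs\Sigma$); the suggested per-letter repair does not obviously work, because the single representative chosen for a group need not simultaneously maximise the per-letter pre-image counts for every letter, so the per-letter potential need not double.

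Second, and more importantly, your potential is not monotone between rewrites. When the group $V'$ is collapsed, the rows of the absorbed states are identical to the survivor's row and disappear from the current transition table, so for every letter $a$ the successor state $\delta(p,a)$ loses $\abs{V'}-1$ incoming entries. Hence the pre-image of the state an entry points to can shrink between two consecutive rewrites of that entry, and the chain ``each rewrite at least doubles the potential recorded at the previous rewrite'' no longer follows from ``the survivor had maximal pre-image within its group''. (You can restore monotonicity by keeping and forever redirecting all original rows, but then you are back to the $\log\abs M$ bound above.) Replacing your weight by the count of represented original states removes both problems at once, and with that substitution the rest of your amortisation is precisely the paper's proof; the remark about $\bot$ is unnecessary, since the argument does not care which states get absorbed.
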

\begin{proof}
The key modification needed in \distancetree{}
is the choice of node $v \in V'$. For each state we introduce a counter
$c(p)$, initially set to $1$, which keeps the track of how many states $p$ represents.
When we choose a node $v \in V'$ we take the one with the largest $c(\state(v))$.
We update the value accordingly $c(\state(v')) \gets \sum_{v'' \in V'} c(\state(v''))$.

Note, that if we replace a value $q$ by $p$ in $\delta$,
then $c(p)\geq c(q)$ before the update of $c$ and so $c(p)\geq 2 c(q)$ after the change.
Thus if we replace the entry in $\delta$, the corresponding value of $c$ at least doubles.
Since $c(p)$ is upper-bounded by $n$, each entry is replaced at most $\log n$ times.
\qed
\end{proof}

\begin{proof}[of Theorem~\ref{thm:linear dictionary log n times}]
To allow fast replacing of entries $q$ in the $\delta$,
for each state $q$ used as the label in one of the dictionaries,
we store an up-to-date list of its occurrences in all vectors in all dictionaries.

When state $p$ is merged into $q$ we update the tries:
we use the up-to date list of occurrences.
For each occurrence of $p$ in an internal node $v$ we have the following situations:
if $v$ does not have a child labelled by $q$ then
we remove $p$ from the linear dictionary, and insert $q$ into it, pointing
at the same child as $p$ used to.
If $v$ has both children $p$ and $q$,
we have to merge their corresponding subtrees, rooted at $v_1$ and $v_2$.
We choose one of them, say $v_1$, and insert each child of $v_1$ into subtrie of $v_2$.
Then we set pointer from $q$ to $v_2$.
This might result in yet another situation of the same type,
we do so recursively until we get to the leaves.

The total cost of the case, when we did not need to merge linear dictionaries
can be bounded similarly as in Lemma~\ref{lem:delta entry replaced log n times}:
note, that after inserting $q$ and deleting $p$ from the linear dictionary,
$c(q) \geq 2c(p)$. Thus each such element is modified at most $\log n$ times
and so the total cost is $\mathcal{O}(|\Sigma|n\log n)$.

If we do merge the linear dictionaries, we make a different analysis.
For each linear dictionary we keep a counter, which calculates how many vertices were
inserted into this linear dictionary.
When we merge two linear dictionaries, we remove the one with the smaller value of the counter and insert all its elements into the other dictionary.
Then we sum the counters and update the remanding counter.

Since each time a vertex is reinserted, the value of the counter in its linear dictionary at least doubles,
and the maximal value of such counter is $n$,
each vertex is inserted into a dictionary at most $\log n$ times.

If we are to merge two leaves, we simply join their respective lists and remove one of the leaves.
\qed
\end{proof}

Now we can use the distance forest to our benefit.  Instead of finding
pairs of states that are $k$-similar we proceed in another fashion:
roughly speaking, for each state $q$ we want to find the closest (with
respect to $d$) state $p$ satisfying $\inlevel(p) \geq \inlevel(q)$,
then, using this state, we want to judge, whether $q$ is ever going to
be merged to other state.  To this end, we label the nodes of the
$\mathcal D (M)$ by states of the DFA $M$, formally we define
$\state(v)$ for each node $v$ of $\mathcal D (M)$.  Since leaves of
the $\mathcal D (M)$ are identified with leaves, we obviously set
$\state(q) = q$ for each such leaf.  Then we label each inner node
with one of the labels of its children, choosing the one with the
maximal $\inlevel_M$.

For each state $q$ let its \emph{submit node} be the first node on the path
from leaf $q$ to the root labelled with a state different than $q$ and
let the \emph{submit state} be the label of this node;
let $d(q)$ be the depth of the submit node $q$,
if $q$~is the label of the root, then $d(q) = \infty$.
Furthermore, define $\values(q) = \inlevel_M(q) + d(q)$.
The next lemma shows that $\values(q)$ can be used to approximate $\sim_k$.
\begin{lemma}
\label{lem:values to merges}
If $\values(q) \leq k$ then $q$ is $k$-similar to every state appearing as the label
on the path from its submit state to the root.

If $\values(p), \values(q) > k$ then $p \sim_k q$.
\end{lemma}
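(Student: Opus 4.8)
The plan is to read both parts of Lemma~\ref{lem:values to merges} off Definition~\ref{df:kEQstate}, after recording two structural properties of the labelled distance tree~$\mathcal D(M)$. For a state~$q$, write $v_q$ for its submit node and $s_q=\state(v_q)$ for its submit state, so $d(q)=\level(v_q)$ when $v_q$ exists and $d(q)=\infty$ precisely when $q$ labels the root. \emph{(A)} The child of~$v_q$ on the $q$--$v_q$ path still carries label~$q$, whereas $s_q\neq q$ is inherited from a different child of~$v_q$; hence $\lca(q,s_q)=v_q$, and the distance-tree property gives $d(q,s_q)=\level(v_q)=d(q)$. A routine induction on height shows that every node~$w$ satisfies $\inlevel_M(\state(w))=\max\{\inlevel_M(r)\mid r\text{ a leaf below }w\}$; in particular $\inlevel_M(\state(\cdot))$ is non-decreasing along root-ward paths and $\inlevel_M(s_q)\ge\inlevel_M(q)$. \emph{(B)} If $r\neq q$ and $d(q,r)<d(q)$, then $\lca(q,r)$ is an ancestor of~$q$ of level below $\level(v_q)$, hence strictly below~$v_q$ (and if $q$ labels the root, every ancestor of~$q$ carries label~$q$ anyway); thus $\state(\lca(q,r))=q$, and by~\emph{(A)} therefore $\inlevel_M(r)\le\inlevel_M(q)$.

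For the first part, assume $\values(q)=\inlevel_M(q)+d(q)\le k$. Then $d(q)<\infty$, so $v_q$ exists and $q$ is a preamble state; since $M$ is minimal and $s_q\neq q$ we have $d(q)=\level(v_q)\ge1$, hence $\inlevel_M(q)\le k-1<k$. For $p=s_q$, property~\emph{(A)} gives $\inlevel_M(p)\ge\inlevel_M(q)$, so $\min(k,\inlevel_M(q),\inlevel_M(p))=\inlevel_M(q)$ and
\[ d(q,p)+\min(k,\inlevel_M(q),\inlevel_M(p))=d(q)+\inlevel_M(q)=\values(q)\le k, \]
i.e.\ $q\sim_k s_q$. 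The same computation applies to every further label~$p$ on the path from~$v_q$ to the root: by~\emph{(A)} one has $\inlevel_M(p)\ge\inlevel_M(q)$, and $d(q,p)=\level(w)$ where $w$ is the lowest node of that path labelled~$p$, which yields $q\sim_k p$.

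For the second part, let $p\neq q$ with $\values(p),\values(q)>k$; I claim $p\not\sim_k q$ --- equivalently, $q\sim_k p$ together with $q\neq p$ forces $\min(\values(q),\values(p))\le k$, which is exactly what keeps such states apart in a $k$-minimal DFA. Suppose $q\sim_k p$ and, without loss of generality, $\inlevel_M(q)\le\inlevel_M(p)$. If $\inlevel_M(q)\ge k$, then $\min(k,\inlevel_M(q),\inlevel_M(p))=k$, so $q\sim_k p$ forces $d(q,p)=0$, hence $q\equiv p$ and then $q=p$ by minimality --- impossible. Thus $\inlevel_M(q)<k$ and $q\sim_k p$ gives $d(q,p)\le k-\inlevel_M(q)$; since $\values(q)>k$ means $d(q)>k-\inlevel_M(q)$, we get $d(q,p)<d(q)$, so~\emph{(B)} gives $\inlevel_M(p)\le\inlevel_M(q)$ and therefore $\inlevel_M(p)=\inlevel_M(q)<k$. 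By the symmetric argument, using $\values(p)>k$, also $d(q,p)<d(p)$. Hence~\emph{(B)} applied to~$q$ gives $\state(\lca(q,p))=q$, while~\emph{(B)} applied to~$p$ gives $\state(\lca(q,p))=p$, contradicting $q\neq p$.

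Together the two parts identify $S=\{q\in Q\mid\values(q)>k\}$ as a maximal set of pairwise $k$-dissimilar states of~$M$ --- $k$-dissimilar by the second part, and maximal because any~$q$ with $\values(q)\le k$ is $k$-similar (first part) to the label of the root of its tree, which has $\values=\infty>k$ --- so $|S|$ is the size of a $k$-minimal DFA for~$M$ by Corollary~\ref{cor:kSet}. I expect the main obstacle to be establishing the two structural properties~\emph{(A)}~and~\emph{(B)}, i.e.\ verifying that the greedy maximal-$\inlevel_M$ labelling makes $s_q$ play the role of ``a nearest state whose in-level is at least~$\inlevel_M(q)$''; the rest is bookkeeping with the ultrametric~$d$.
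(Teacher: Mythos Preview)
Your structural facts (A) and (B) and the overall strategy match the paper's, and the case $p=s_q$ of the first claim as well as the whole second claim are correctly argued. In fact your second-part argument, forcing $\state(\lca(q,p))$ to equal both $q$ and $p$ via a symmetric application of~(B), is tighter than the paper's, which simply asserts $\level(\lca(q,q'))\ge d(q)$ after a WLOG without fully disposing of the tie $\inlevel_M(q)=\inlevel_M(q')$.

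There is, however, a genuine gap in ``the same computation applies to every further label~$p$''. You correctly obtain $d(q,p)=\level(w)$ with $w$ the lowest node on the path labelled~$p$, but ``which yields $q\sim_k p$'' would need $\level(w)+\inlevel_M(q)\le k$, and since $w$ lies at or \emph{above}~$v_q$ you only have $\level(w)\ge d(q)$ --- the wrong direction. Concretely, take leaves $q,r,s$ with $\inlevel_M$-values $1,2,3$, an inner node over $q,r$ at level~$2$, and root at level~$5$; for $k=3$ one gets $\values(q)=3\le k$, yet $d(q,s)+\inlevel_M(q)=6>k$ although $s$ labels the root. The paper's own proof makes the very same slip, asserting $\level(\lca(q,q'))\le d(q)$ for such~$q'$; so the first claim, read literally for \emph{all} labels up to the root, is too strong. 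What Corollary~\ref{cor:merge-tree} and your maximality paragraph actually need is only $q\sim_k(\text{$k$-ancestor state of }q)$, and this does follow from your ingredients: the labels along the path are $s_q,s_{s_q},\dots$, and since $s_q^{(i)}$ lies below $v_{s_q^{(i-1)}}$ one has $d(q,s_q^{(i)})\le d(s_q^{(i-1)})\le k-\inlevel_M(s_q^{(i-1)})\le k-\inlevel_M(q)$ whenever $\values(s_q^{(i-1)})\le k$; hence $q\sim_k s_q^{(i)}$ up to and including the first $i$ with $\values(s_q^{(i)})>k$. Your appeal to ``the label of the root'' should be replaced by this iterated-submit-state argument.
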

\begin{proof}
Suppose that $\values(q) \leq k$ and let $q'$ be any node label above $q$'s submit vertex (inclusively). Then
\begin{align*}
\min(\inlevel(q),\inlevel(q')) + \level(\lca(q,q'))
	&=
\inlevel(q) + \level(\lca(q,q'))\\
	&\leq
\inlevel(q) + d(q)\\
	&=
\values(q)\\
	&\leq
k \enspace,
\end{align*}
and so $q \sim_k q'$.

Let $\values(q),\values(q') > k$, without loss of generality we may
assume that $\inlevel(q) \leq \inlevel(q')$ and $q$ is not $q'$'s submit node
(which could happen if $\inlevel(q) = \inlevel(q')$.
Then $d(q,q') = \level(\lca(q,q')) \geq d(q)$. Thus
\begin{align*}
d(q,q') + \min(\inlevel(q),\inlevel(q'))\
	&\geq
d(q) + \inlevel(q) \\
	&=
\values(q)\\
	&>
k \enspace,
\end{align*}
which concludes the proof. 
\qed
\end{proof}

For each state $q$ let its $k$-ancestor node $v$ be the first node
on the path from $q$ to the root such that $\values(\state(v)) > k$,
and let $k$-ancestor state be the label of $v$.

\begin{corollary}[of Lemma~\ref{lem:values to merges}]
  \label{cor:merge-tree}
  The DFA obtained by merging each state $q$ to its $k$-ancestor state
  is $k$-minimal and $k$-similar to~$M$.
\end{corollary}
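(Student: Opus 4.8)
The plan is to realise the DFA of the statement as the output of a suitably orchestrated run of \kminimise{} and then to invoke Theorem~\ref{thm:kminimise is proper}.

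First I would fix notation. Let $\phi\colon Q\to Q$ map every state~$q$ to its $k$-ancestor state, and put $S=\phi(Q)$. I would observe that $S=\{q\in Q\mid \values(q)>k\}$: if $\values(q)>k$, then the leaf~$q$ is its own $k$-ancestor node and $\phi(q)=q$; conversely, the label of any $k$-ancestor node has $\values$-value larger than~$k$ by definition. Hence $\phi$~is the identity on~$S$. Two structural facts about the distance tree $\mathcal{D}(M)$ will be needed. (a)~For $q\notin S$ the $k$-ancestor node of~$q$ lies on the path from $q$'s submit node to the root, because every node labelled~$q$ has $\values$-value $\values(q)\le k$ and therefore cannot be the $k$-ancestor node. (b)~Along every upward path the in-level of the label is non-decreasing: since an inner node is labelled by a child's label of maximal in-level, an easy induction shows $\inlevel_M(\state(v))=\max\{\inlevel_M(\ell)\mid \ell\text{ a leaf below }v\}$, so in particular $\inlevel_M(q)\le\inlevel_M(\phi(q))$ for all~$q$.

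Next I would feed these facts into Lemma~\ref{lem:values to merges}. Its first part, combined with~(a), yields $q\sim_k\phi(q)$ for every state~$q$ (trivially so when $q\in S$). Its second part yields that the states of~$S$ are pairwise $k$-dissimilar, since each of them has $\values$-value larger than~$k$. Together these show that $S$ is a \emph{maximal} set of pairwise $k$-dissimilar states, since adding any $q\notin S$ produces the $k$-similar pair $\{q,\phi(q)\}$.

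Then I would run \kminimise$(M)$ with the following choices: while some state $q\notin S$ is still present, pick such a~$q$ and merge it into~$\phi(q)$. Each such step is legal, because $q\sim_k\phi(q)$ (so the while-condition applies), $\phi(q)\in S$ has survived all earlier merges, and $\inlevel_M(q)\le\inlevel_M(\phi(q))$, so the swap of \kminimise{} merges~$q$ into~$\phi(q)$ (if the two in-levels coincide, merging in either direction is admissible and Theorem~\ref{thm:kminimise is proper} is indifferent to it). After $\abs{Q\setminus S}$ steps no state outside~$S$ remains, and since $S$ is pairwise $k$-dissimilar the loop halts. A routine induction on the number of merges shows that the resulting DFA has state set~$S$, transition function $(p,a)\mapsto\phi(\delta(p,a))$, and start state~$\phi(q_0)$, i.e., it is precisely the DFA obtained by merging each state into its $k$-ancestor state. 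Theorem~\ref{thm:kminimise is proper} then delivers that it is $k$-minimal and $k$-similar to~$M$.

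The step I expect to require the most care is the verification that this orchestrated sequence of choices really is a legitimate execution of \kminimise{}: that every chosen pair is $k$-similar with respect to the fixed relation computed once on~$M$, that each representative~$\phi(q)$ is never itself merged away, that the in-level comparison always forces the intended merge direction, and that the loop terminates exactly when the state set equals~$S$ (this is where the second part of Lemma~\ref{lem:values to merges} enters). The accompanying bookkeeping, showing that the composite of all the individual merges is exactly the map~$\phi$, is mechanical but should be written out.
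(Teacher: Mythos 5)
Your proposal is correct and follows essentially the route the paper intends (the paper leaves this corollary unproved as an ``easy consequence''): use Lemma~\ref{lem:values to merges} to get $q \sim_k \phi(q)$ with $\inlevel_M(q) \leq \inlevel_M(\phi(q))$ and pairwise $k$-dissimilarity of the surviving states, then realise the merging as an orchestrated run of \kminimise{} exactly as in the paper's proof of Corollary~\ref{cor:kSet} and invoke Theorem~\ref{thm:kminimise is proper}. You also correctly read the second clause of Lemma~\ref{lem:values to merges} as asserting $p \not\sim_k q$ (its statement contains a typo, as its proof shows), which is needed for termination with state set exactly $S$.
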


The following theorem states easy consequences of Corollary~\ref{cor:merge-tree}.

\begin{proof}[of Theorem~\ref{thm:k minimisation implementation}]
  Calculate the labels in the $\mathcal D(M)$ as described earlier,
  this can be done using one depth-first traversal. Then calculate
  $\values(q)$ for each state $q$, this as well can be done using one
  depth-first traversal.  Sort the pairs $(q,\values(q))$ according to
  $\values(q)$, since $\values(q) \leq 2n$, this can be done in linear
  time using \proc{CountingSort}.  Note, that the number of states of
  $k$-minimal DFA $M_k$ equals $\abs{\makeset{q}{\values(q) > k}}$, by
  Corollary~\ref{cor:merge-tree}, which can be now easily computed in
  linear time.

By Corollary~\ref{cor:merge-tree} to obtain the $k$-minimal DFA
it is enough to merge each $q$ with $\values(q)\leq k$ to its $k$-ancestor.
A table of $\func{ancestor}$ assigning to state $q$ its $k$-ancestor
can be computed in linear time.
The merging can be performed in $\mathcal{O}(|\delta|)$ time,
as we are only interested in the transition of the states $q'$ such that $\values(q') > k$.
We look through $\delta$ and replace each entry $\delta(q',a) = q$ by
$\func{ancestor}(q)$.

There is a little subtlety: when replacing $\delta(q',a) = q$ by $\delta(q',a) = q''$
we should take care that $q \neq \bot$, as otherwise it would be impossible to
bound the running time by $|\delta|$. However, note that if $q'' \sim_k \bot$ then
the language of $q''$ is finite and therefore there is a path from $q''$ to $\bot$,
hence $\inlevel(q'') \leq \inlevel(\bot)$. Note, that when $\inlevel$ for two states
are equal, we arbitrarily choose one, so without loss of generality it can be assumed
that $\bot$ is never merged to any other state.

We now present a $\mathcal{O}(|\delta| \log n)$ algorithm,
which at step $k=0, 1 , \ldots ,n$ has in memory the $k$-minimal DFA.
As previously, in step $k$ it will keep only states with $\values$ at greater than $k$:
it merges each existing state $q$ such that $\values(q) = k$ into its
$k$-ancestor $q'$ which by the construction satisfies $\values(q') \geq \values(q) + 1 > k$.
To obtain the proper running time, we need only need to organise the data structures properly.
It is represented by a list of transition:
for each state $q$ we list the pairs $(a,q')$ such that $\delta(a,q)=q'$, for all valid $a$.
Moreover, each $q$ has a list of incoming transition, i.e.,
list of pointers to the transitions to it.
Moreover, each state $q$ has a counter $\rank(q)$,
which describes how many states were merged to it.

Assume that we merge $p$ to $q$ and $\rank(p) \leq \rank(q)$.
Then the situation is easy. We redirect each transition to $p$ into
$q$, and perform the update
$\rank(q) \gets \rank(q) + \rank(p)$.
When $\rank(p) > \rank(q)$ then we redirect each transition to $q$ into $p$,
and replace the outgoing transition from $p$ by outgoing transitions from $q$.
Since they are given as a list, this is done in $\O (1)$ time.
Then we rename $p$ as $q$ and update rank $\rank(q) \gets \rank(q) + \rank(p)$. 

Note, that each time an entry in $\delta$ is modified, the $\rank$ of the target state doubles.
Thus each transition is modified at most $\log n$ times and so the running time is
$\mathcal{O}(|\delta| \log n)$.

Note, that while the running time $\mathcal{O}(|\delta| \log n)$,
outputting the results for each $k$ might take a time up to $\Omega(n |\delta| \log n)$.
\qed
\end{proof}

\subsection{Additional material for Section~\ref{subsec:partial transition function}}

\begin{lemma}
\label{lem:finite states cost}
The total cost of maintaining the transition by $\$$ in the tries is
$\mathcal{O}(|\delta|\log n)$ linear dictionary operations.
\end{lemma}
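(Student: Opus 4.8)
The plan is to transfer the amortised ``largest representative wins'' analysis of Lemma~\ref{lem:delta entry replaced log n times} and of the proof of Theorem~\ref{thm:linear dictionary log n times} from ordinary transitions to the $\$$-column. First I would make precise what this column stores. Fix a signature~$S$ and its trie. For a state~$q$ with $\sig(q)=S$ let ${\bf v}_q=(\delta(q,a))_{a\notin S}$; each coordinate of ${\bf v}_q$ is a state with a finite right-language (or $\bot$), hence a leaf of the distance forest~$\mathcal F$ supplied for the finite-right-language states. By the defining property of~$\mathcal F$, we have $\max\{d(\delta(q,a),\delta(p,a)):a\notin S\}\le\ell-1$ exactly when, for every $a\notin S$, the leaves $\delta(q,a)$ and $\delta(p,a)$ share an ancestor of level~$\le\ell-1$ in~$\mathcal F$, i.e.\ exactly when ${\bf v}_q$ and ${\bf v}_p$ are at distance~$\le\ell$ for the ultrametric $D({\bf v},{\bf v}')=1+\max_a d(v_a,v'_a)$ (with $d(x,\bot)=m(x)$ for undefined coordinates, $m$ as in the proof of Theorem~\ref{thm:finite distance}). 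Hence the $\$$-transition of~$q$ in phase~$\ell$ is precisely the $D$-class of~${\bf v}_q$ at level~$\ell$; these classes only coarsen as $\ell$ grows, so for each~$S$ they form a tree $T_\$^{(S)}$, namely a distance tree for the vectors $\{{\bf v}_q : \sig(q)=S\}$.

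Second, I would build all the $T_\$^{(S)}$ at once. This is exactly the task handled inside the proof of Theorem~\ref{thm:finite distance} — a family of vectors over an already available tree ($\mathcal F$) whose induced distance tree is wanted, the only delicate point being the undefined coordinates, treated there verbatim via $d(x,\bot)=m(x)$; if anything it is easier here since $\mathcal F$ is given rather than built layer by layer. As the trees $T_\$^{(S)}$ for distinct signatures are vertex-disjoint and have $n$ leaves in total, this step costs $\O(|\delta|\log n)$ linear-dictionary operations.

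Third, I would maintain the $\$$-column over the $n$ phases by replaying the explicit internal nodes of the trees $T_\$^{(S)}$ in order of level (a list of size~$\O(n)$, sorted by counting sort since the levels are~$<n$): a node of level~$\ell$ tells us which $\$$-classes coalesce in phase~$\ell$. When several classes merge I keep the class that currently occupies the most entries of the $\$$-column and move the states of the others to it, relabelling their $\$$-entries in the tries, in the style of the proof of Theorem~\ref{thm:linear dictionary log n times}. Each such relabelling at least doubles the number of $\$$-entries carried by the surviving representative, so each of the $n$ entries is rewritten $\O(\log n)$ times; since every rewrite is $\O(1)$ linear-dictionary operations, the maintenance costs $\O(n\log n)\subseteq\O(|\delta|\log n)$, which with the $\O(|\delta|\log n)$ for building the $T_\$^{(S)}$ gives the bound.

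The step I expect to be the real obstacle is the treatment of the undefined coordinates. A state may have $\Theta(|\Sigma|)$ undefined transitions, so if the $\bot$-coordinates were stored explicitly inside the vectors~${\bf v}_q$ the data would grow to size $\Theta(|\Sigma|\,n)$ and the relabellings of the class of~$\bot$ to $\Theta(|\Sigma|\,n\log n)$, which would wreck the bound for sparse~$\delta$. The resolution is precisely the device already used in Theorem~\ref{thm:finite distance}: the undefined coordinates never have to be branched on individually and contribute only the fixed quantities~$m(\cdot)$, so both the construction of the $T_\$^{(S)}$ and their upkeep stay within the $\O(|\delta|)$-sized part of the input, which is what lets the whole argument fit into $\O(|\delta|\log n)$ operations.
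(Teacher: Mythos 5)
Your proposal is correct and follows essentially the same route as the paper: the paper packages your vectors $(\delta(q,a))_{a\notin\sig(q)}$ as the successor vectors of an auxiliary acyclic DFA~$M'$ built on the finite-right-language states and their direct predecessors, invokes Theorem~\ref{thm:finite distance} to get its distance tree in $\mathcal{O}(\abs\delta\log n)$ operations, and then replays that tree's nodes in order of level, merging $\$$-classes with a weight-based ``largest representative wins'' rule whose doubling argument bounds each entry's relabellings by $\log n$ --- exactly your steps two and three. The only differences (one global tree versus per-signature trees, and rank counted as represented states versus occupied $\$$-entries) are cosmetic.
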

\begin{proof}
Because $L(M)$ is finite, $m(p) = \max\{|w| : w\in L(p)\}$ is defined for any state $p$,
let us denote $Q_i = \makeset{p}{m(p) = i}$ and $Q_{<\infty} = \bigcup_i Q_i$.

Consider a DFA $M'$ built on states $Q_{<\infty}$ and their direct predecessors,
i.e., $Q_{<\infty}' = \{q' \: : \: \exists a \in \Sigma \; \delta(q,a) \in Q_{<\infty}\} \cup \{ \bot \}$.
Take the $\delta'$ restricted to input from $Q_{<\infty}'$ and values in $Q_{<\infty}$.
For each state $q$ with infinite right-language and transitions into states in $Q_{<\infty}$
we insert into vectors of successors $(\$,q')$, or $(\$,\bot)$, if $q$ has only undefined transitions.
By Theorem~\ref{thm:finite distance} the cost of construction of $\mathcal T'$ for $M'$ is
$\mathcal{O}(|\delta|\log n)$.
Using $\mathcal T'$ for $M'$ in phase $d$ we merge states from $M'$ which are at distance $d-1$ or less:
it is enough to sort the nodes of $\mathcal T'$ according to their $\level$ and in phase
$d$ merge leaves in subtree of each node $v$ such that $\level(v) \leq d$.
To perform the merging efficiently,
each state in $M'$ is assigned $\rank$, which denotes the number of states that it represents.
When states $q_1, \ldots , q_r$ are to be merged, we choose the one with the maximal rank,
say $q_i$ and replace of occurrence of $q_1, \ldots , q_r$ in the trie by $q_i$.
Then we update the rank: $\rank(q_i) \gets \sum_{j=1}^r \rank(q_j)$.
Thus each such entry in the trie is replaced at most $\log n$ times: whenever it is replaced,
the corresponding $\rank$ doubles and it is upper bounded by $n$.
\qed
\end{proof}

\begin{proof}[of Theorem~\ref{thm:kminimise partial}]
In linear time we can identify states such that their right-language is finite.
By Theorem~\ref{thm:finite distance} their distance tree can be built in
$\mathcal{O}(|\delta|\log n)$ time.

Grouping of states in \distancetree{}
is a by-product of using trie for the vectors of successors.
Each state is deleted from set $Q$ once,
each such deletion results in an update of the trie.

The number of linear dictionary operations for the letters in the respective signature
can be upper bounded as in Theorem~\ref{thm:linear dictionary log n times},
that is, by $\O (|\delta|\log n)$.
By Lemma~\ref{lem:finite states cost} the same bound applies to the construction
and usage of the distance tree for states with finite right-language, 
\qed
\end{proof}

Next, we comment how the ancestors and nodes are represented for the
algorithm, as some of them are implicit: they are represented by an
explicit vertex directly below them with an offset, i.e., a pair
$(v,\ell)$.

Recall, that we iteratively construct fragments of the distance forest,
built on states $Q_t = \makeset{p}{m(p) = t}$, i.e., recognising words
of length at most $t$.
For each already constructed fragment we do the preprocessing allowing
efficient computing the $\lca$ for any pair of vertices.
There are known construction for doing this in constant time~\cite{FarachColtonBender}
however, they use the power of the full RAM model.
For our purposes, the simple construction that keeps at every node a list of ancestors
$2^0$, $2^1$, \ldots, $2^{\log n}$ higher allow performing the search in $\Theta(\log n)$ time,
which does not influence the total running time in our case.
As shown later, having the preprocessing performed for each such a fragment separately,
is enough to execute $\lca$-queries for the whole tree.

Firstly we argue that indeed adding fragments built on states from $Q_t$ is reasonable.
Moreover, if for some states $m(p)\neq m(q)$, $d(p,q)$ can be calculated easily.

\begin{lemma}
\label{lem:mp}
If $m(p) \neq m(q)$ then $d(p,q) = \max(m(p),m(q)) + 1$.\\
If $m(p) = m(q) $ then $d(p,q) \leq \max(m(p),m(q)) + 1$.
\end{lemma}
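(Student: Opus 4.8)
\textbf{Proof plan for Lemma~\ref{lem:mp}.}
The key tool is the recursive formula~\eqref{eq:distance recursion} together with the observation that for any state~$p$ with finite right-language, $d(p,\bot) = m(p)$: indeed, by definition $d(p,\bot)$ is the least~$\ell$ such that $L_M(p)$ and $\emptyset$ agree on words of length $\geq \ell$, which is exactly $\max\{\abs w : w \in L_M(p)\}+1 = m(p)+1$ when $L_M(p)\neq\emptyset$, and $0 = m(p)$ when $L_M(p) = \emptyset$; after noting the off-by-one bookkeeping this is $m(p)$ in the convention used here (or, if one prefers, one can simply carry the ``$+1$'' explicitly and the statement matches). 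I would first record this base fact cleanly, since both halves of the lemma reduce to comparing $d(p,q)$ against $d(p,\bot)$ and $d(q,\bot)$.

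For the first claim, assume without loss of generality $m(p) < m(q)$, so that $d(p,\bot) < d(q,\bot)$. Pick a longest word $w \in L_M(q)$, so $\abs w = m(q)$ (taking the $+1$ convention aside); then $w \notin L_M(p)$ because $\abs w > m(p)$, so $w \in L_M(p) \simdiff L_M(q)$ and hence $d(p,q) \geq m(q)+1 = \max(m(p),m(q))+1$. For the upper bound, since $d$ is an ultrametric and $d(p,\bot)\neq d(q,\bot)$, the strong triangle inequality gives $d(p,q) = \max(d(p,\bot), d(q,\bot)) = \max(m(p),m(q))$ (again modulo the $+1$); more elementarily, any word of length $\geq \max(m(p),m(q))+1$ lies in neither $L_M(p)$ nor $L_M(q)$, so the languages agree above that length, giving the matching upper bound. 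Combining the two inequalities yields $d(p,q) = \max(m(p),m(q))+1$.

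For the second claim, when $m(p) = m(q)$ every word of length $\geq m(p)+1 = \max(m(p),m(q))+1$ lies outside both $L_M(p)$ and $L_M(q)$, hence $L_M(p)$ and $L_M(q)$ coincide (both empty) on words of that length or longer, so $d(p,q) \leq \max(m(p),m(q))+1$ directly from the definition of~$d$. No lower bound is asserted here, which is consistent with the fact that $p$ and $q$ may even be equivalent.

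The only genuinely delicate point is the indexing convention: the paper writes $d(p_i,\bot) = m(p_i)$ in the sketch of Theorem~\ref{thm:finite distance} and phrases the present lemma with a ``$+1$'', so I would be careful to state the base case in whichever normalisation makes both usages consistent (the cleanest is to treat $m$ as ``length of longest accepted word, or $-1$ if none'', under which $d(p,\bot) = m(p)+1$ uniformly, or equivalently to absorb the shift into $m$). Apart from that, the argument is a short, direct application of the ultrametric property and the explicit description of distances to~$\bot$; there is no real obstacle.
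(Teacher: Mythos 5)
Your argument is correct and is exactly the direct computation the paper leaves implicit\emdash Lemma~\ref{lem:mp} is stated there without proof, and the intended justification is precisely your elementary one: a longest word of $L_M(q)$ witnesses $d(p,q)\geq \max(m(p),m(q))+1$ when $m(p)\neq m(q)$, while every word of length at least $\max(m(p),m(q))+1$ lies in neither right-language, giving the upper bound in both cases. Your hedging about the normalisation is unnecessary: with the paper's definition of $d$ one simply has $d(p,\bot)=m(p)+1$ (the line ``$d(p_i,\bot)=m(p_i)$'' in the sketch of Theorem~\ref{thm:finite distance} is an off-by-one slip), so the lemma as stated needs no convention juggling and your elementary bounds prove it verbatim, with the ultrametric detour not even needed.
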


Recall, that the spine is the path joining the state $\bot$,
which recognises an empty language, with the root of the $\mathcal D(M)$.
Note, that the spine has no compressed fragments,
as the distance between $\bot$ and a a state $p$ is equal to $m(p)+1$,
by Lemma~\ref{lem:mp}, moreover, by~\eqref{eq:distance recursion}
it is easy to see that $\{m(p) \: : \: p \in Q_{<\infty} \}$
is equal to $\{ 0,1,2, \ldots , \max\{  p \in Q_{<\infty}\: : \: m(p)\}$.
Therefore the spine is created beforehand as an uncompressed line
of length $\max\{ p \in Q_{<\infty} \: : \: m(p)+1 \}$, which is at most $|Q_{<\infty}|$. 

\begin{lemma}
\label{lem:distancecost}
The distance between ${\bf v} = ((a_i,p_i))_{i \in I_1}$ and  ${\bf v'} = ((a_i,p'_i))_{i \in I_2}$
can be computed using $\mathcal{O}(|I_1| + |I_2|)$ $\lca$-queries.
\end{lemma}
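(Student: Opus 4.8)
The plan is to reduce the computation of $d(\mathbf v, \mathbf v')$ to a collection of $\lca$-queries, one for each coordinate, plus some constant-time bookkeeping per coordinate. Recall that $d(\mathbf v, \mathbf v') = \max \makeset{d(p_i, p'_i) + 1}{i}$, where the maximum ranges over the coordinates that occur in $\mathbf v$ or $\mathbf v'$, and that coordinates present in only one vector contribute via the convention $d(p_i, \bot) = m(p_i)$. So the task is: for each coordinate $i \in I_1 \cup I_2$, compute $d(p_i, p'_i)$ (treating a missing entry as $\bot$), then take the maximum after adding $1$. Since $|I_1 \cup I_2| \le |I_1| + |I_2|$, if each coordinate costs $\mathcal O(1)$ $\lca$-queries we are done.

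First I would handle the coordinates that appear in both vectors. For such $i$, by the definition of a distance forest we have $d(p_i, p'_i) = \level(\lca(p_i, p'_i))$ if $p_i$ and $p'_i$ lie in the same distance tree, and $d(p_i,p'_i) = \infty$ otherwise; since $L(M)$ is finite the distance forest is a single tree, so this is always one $\lca$-query followed by reading off the stored $\level$ value at the returned (possibly implicit) node. A subtlety here is that the distance forest is being built incrementally, fragment by fragment over the $Q_t$: the two states $p_i, p'_i$ may lie in different already-constructed fragments. This is where I expect the main obstacle to be — one must argue that $\lca$ and $\level$ for the partially built tree can be recovered from the per-fragment preprocessing plus the spine structure. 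The key observation is that the fragments hang off the spine, so the $\lca$ of two states in different fragments is the lower of the two spine-attachment points, whose $\level$ is determined by $m(\cdot)$ via Lemma~\ref{lem:mp}; thus a cross-fragment query still reduces to $\mathcal O(1)$ work (comparing attachment depths) plus at most one in-fragment $\lca$-query. Coordinates present in only one vector are even easier: $d(p_i,\bot) = m(p_i)$ is stored with the state, costing no $\lca$-query at all, or can be read as the $\level$ of the spine node at depth $m(p_i)+1$.

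Having computed $d(p_i,p'_i)$ for every coordinate with $\mathcal O(1)$ $\lca$-queries each, I would add $1$ to each and take the maximum, which is $\mathcal O(|I_1| + |I_2|)$ additional time. Summing the per-coordinate costs gives the claimed $\mathcal O(|I_1| + |I_2|)$ $\lca$-queries overall. The only genuinely delicate point, as noted, is the cross-fragment $\lca$/$\level$ retrieval; everything else is the direct unfolding of the distance recursion \eqref{eq:distance recursion} together with the defining property of a distance forest.
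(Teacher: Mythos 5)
Your proposal is correct and follows essentially the same route as the paper: a coordinate-by-coordinate scan of the two sorted sparse vectors, charging $\mathcal{O}(1)$ work per coordinate, computing $d(p_i,p'_i)$ by an in-fragment $\lca$-query when $m(p_i)=m(p'_{i'})$, falling back on Lemma~\ref{lem:mp} (i.e.\ on the $m$-values alone, with no $\lca$-query) when the $m$-values differ or a coordinate is missing. The only slip is cosmetic: for states in different fragments the $\lca$ is the spine node of \emph{larger} level (the attachment point of the state with larger $m$), not the lower one, but since you read the distance off Lemma~\ref{lem:mp} rather than off that node, the argument is unaffected.
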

\begin{proof}
We calculate their distance straight from the definition.
This can be done using $|I_1|+|I_2|$ $\lca$ queries, by going through consecutive elements of these vectors:
as they are sorted, seeing $(a_i,p_i)$ and $(a_{i'},p'_{i'})$
we can decide whether $a_i = a_{i'}$, in which case $i=i' \in I_1 \cap I_2$,
or $a_i > a_{i'}$, and thus $i \in I_1\setminus I_2$;
or $a_i < a_{i'}$, when $i' \in I_2 \setminus I_1$.
In the first case, we calculate $d(p_i,p'_{i'})$, this can be done by comparing
$m(p_i)$ and $m(p'_{i'})$ and by calculating $\lca(p_i,p'_{i'})$,
if $m(p_i) = m(p'_{i'})$.
This distance is compared with the current maximum.
If $i \in I_1 \setminus I_2$, we compare the current maximum with $m(p_i)+1$;
the situation for $p'_{i'}$ is symmetric.
%

There are at most $|I_1| \cup |I_2|$ $\lca$ queries used and at most as much
other operations, which are all performed in constant time.
So the cost of the whole procedure can be charged to the $|I_1| \cup |I_2|$ $\lca$ queries. 
\qed
\end{proof}

To show that the total time of the construction of the the fragment for $Q_t$
is $\O (|\delta_t| \log t)$, where $\delta_t$ is the transition function restricted
to the input from $Q_t$, we estimate separately
the cost of finding the ancestors of the vectors and the cost of grouping of the vectors, according to their $2^k$ successors.
However, to properly estimate the time needed for that, we cannot use the whole already constructed part
of the distance tree, as it may be vary large comparing to $Q_t$.
Thus, as a~preprocessing step, we extract out of the distance tree the distance tree induced by the states
appearing in the vectors.
Being more precise, we calculate the subtrees $\mathcal T_a$ for $a \in \Sigma$:
it is a sub-distance tree for states $\delta(Q_t,a)$.
\begin{lemma}
Constructing $\mathcal T_a$ for $a \in \Sigma$ can be done in time $\O (|\delta_t|\log |Q_t|)$.
\end{lemma}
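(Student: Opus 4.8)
The plan is to identify $\mathcal T_a$ with the smallest compressed subtree of the already built distance tree $\mathcal D(M)$ whose leaves are exactly $L_a := \delta(Q_t,a)$ (including $\bot$, which all states with $m(\cdot)=t$ may map to), and to build this tree leaf by leaf using $\O(\abs{L_a})$ lowest-common-ancestor queries. Since each transition of $\delta_t$ contributes one leaf to exactly one set $L_a$, we have $\sum_{a\in\Sigma}\abs{L_a}\le\abs{\delta_t}$, and a compressed tree on $k$ leaves has at most $2k-1$ vertices, so each $\mathcal T_a$ has $\O(\abs{L_a})$ explicit vertices. Thus the statement reduces to answering each $\lca$ query in time $\O(\log\abs{Q_t})$ and observing that the levels of $\mathcal D(M)$ restricted to $L_a$ indeed make $\mathcal T_a$ a distance tree for $\delta(Q_t,a)$, which is automatic because taking a leaf-induced minor preserves lowest common ancestors among retained leaves.

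First I would recall the global shape of $\mathcal D(M)$: as $M$ is acyclic, $m(\delta(v,a))<m(v)=t$ for every $v\in Q_t$, so all leaves of $L_a$ sit in the fragments $Q_0,\dots,Q_{t-1}$ (and on the spine), which are threaded along the uncompressed spine in increasing index order. Consequently, by Lemma~\ref{lem:mp}, a query $\lca(p,q)$ with $p\in Q_i$, $q\in Q_j$, $i\ne j$, is answered in constant time: its level is $\max(i,j)+1$ and the vertex is the corresponding spine node. A query with $p,q$ in the same fragment $Q_i$ is answered inside $Q_i$ via the ancestor jump-pointers prepared when $Q_i$ was built.

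Then I would assemble a single $\mathcal T_a$ as follows. Bucket $L_a$ by the value $m(\cdot)$, i.e.\ by fragment; inside each occupied fragment $Q_i$ sort the bucket into the depth-first order of $Q_i$'s subtree, reusing the Euler indices cached at the construction of $Q_i$, so this costs only $\O(\abs{L_a\cap Q_i})$. Feed each sorted bucket, together with the levels of the $\lca$'s of consecutive leaves, to the standard stack-based routine that turns a depth-first-sorted leaf list into a compressed tree; this uses $\abs{L_a\cap Q_i}-1$ within-fragment $\lca$ queries and $\O(\abs{L_a\cap Q_i})$ extra work, producing a piece $\mathcal T_a^{(i)}$. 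Finally, thread the pieces onto a freshly created contracted spine — one node per occupied fragment $Q_i$ at level $i+1$, joined in index order by single edges and ending at the $\bot$-leaf — the attachment levels coming for free from Lemma~\ref{lem:mp}. Summing over $a$, the construction issues $\O(\abs{\delta_t})$ $\lca$ queries plus $\O(\abs{\delta_t})$ bookkeeping.

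The main obstacle is to bound the within-fragment $\lca$ queries by $\O(\log\abs{Q_t})$ rather than by the generic $\O(\log n)$ of a fragment's jump-pointer table. I would handle this by noting that while building $\mathcal T_a$ we only ever query leaves of $Q_i$ that are consecutive among $L_a\cap Q_i$, so it suffices to climb over the explicit vertices of the extracted piece $\mathcal T_a^{(i)}$, of which there are only $\O(\abs{L_a\cap Q_i})=\O(\abs{Q_t})$; equivalently, one builds $\mathcal T_a^{(i)}$ incrementally — inserting the sorted leaves one at a time along the current rightmost root-to-leaf path — at amortised cost $\O(\log(\text{current size}))=\O(\log\abs{Q_t})$ per leaf. (If one is content with the weaker bound $\O(\abs{\delta_t}\log n)$ here, which is already enough for Theorem~\ref{thm:finite distance}, the plain jump-pointer queries suffice and this refinement can be skipped.) The remaining steps — bucketing, within-fragment sorting, spine contraction, and the verification that the inherited $\level$ turns $\mathcal T_a$ into a genuine distance tree — are routine and contribute only linear overhead, so the total is $\O(\abs{\delta_t})$ work plus $\O(\abs{\delta_t})$ $\lca$ queries, i.e.\ $\O(\abs{\delta_t}\log\abs{Q_t})$.
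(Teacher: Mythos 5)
Your proposal is correct and follows essentially the same route as the paper: sort the leaves $\delta(Q_t,a)$ into the fixed left-to-right order of $\mathcal D(M)$ and build the compressed subtree incrementally, maintaining the rightmost path and using $\lca$ queries on consecutive leaves, with the sorting accounting for the $\log|Q_t|$ factor. The extra refinements you add (bucketing by fragment, handling cross-fragment $\lca$'s via the spine and Lemma~\ref{lem:mp}, and the $\log|Q_t|$ versus $\log n$ discussion) are harmless elaborations of the same argument.
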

\begin{proof}
Firstly we calculate the states in  $\delta(Q_t,a)$ for each $a$, in total time $\mathcal{O}(|\delta_t|)$.
Let us a fix a letter $a$.
We sort the leaves in $\mathcal T_a$ in time $(|\mathcal T_a| \log |Q_t|)$:
we assume that they are in some arbitrary, but fixed, order in $\mathcal D(M)$.
Let them be $q_1$, $q_2$, \ldots , $q_s$.
We build $\mathcal T_a$ by successively adding leaves.
Suppose that a~tree for $q_1$, \ldots , $q_i$ has already been built.
The right-most path (names of nodes and their levels) is kept as a list.
To add $q_{i+1}$, we calculate the $\lca(q_i,q_{i+1})$ and remove from
the list all nodes with smaller level.
If the last element of the list, call it $v$, has level greater than $\lca(q_i,q_{i+1})$
we create a~new inner node $v'$ in $\mathcal T_a$,
insert it into the right-most path as the last element,
make the right-most son of $v$ the only child of $v'$ and $v'$ the new right-most son of $v$.
Next we make $q_{i+1}$ a~right-most child of the last node in the list
(which might be $v$ or $v'$).

Since each node is inserted and removed from the right-most path at most once,
the total running time is $\mathcal{O}(|\mathcal T_a|)$.
Summing up over all $a \in \Sigma$, we obtain that the total construction time is
$\mathcal{O}(|\delta_t| )$.
\qed
\end{proof}

\begin{lemma}
\label{lem:going_up}
The total time of finding ancestors of vectors is $\mathcal{O}(|\delta_t|\log n)$.
\end{lemma}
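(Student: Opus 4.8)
The plan is to split the running time into the cost of all ``find the $2^k$-th ancestor of a vector'' operations performed during the recursive construction of the $Q_t$-fragment, and the cost of sorting vectors by their ancestors inside each recursive call. The latter is a plain comparison sort costing $O(s\log s)$ on a block of $s$ vectors; since the blocks handled at any fixed recursion depth partition a subset of the $|Q_t|$ vectors, this totals $\mathcal{O}(|\delta_t|\log|Q_t|) = \mathcal{O}(|\delta_t|\log n)$, and so the genuine work is the ancestor computations.

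First I would observe that $\father^{2^k}$ of a vector ${\bf v} = ((a_i,p_i))_{i\in I}$ is computed coordinate by coordinate: the $i$-th coordinate asks for the vertex lying exactly $2^k$ levels above $p_i$, measured by accumulated edge weight, and since $p_i \in \delta(Q_t,a_i)$ this query stays entirely inside the already-extracted subtree $\mathcal{T}_{a_i}$. Using the list of $2^0, 2^1, \dots$-th ancestors (each annotated with its accumulated level) stored at the explicit nodes of $\mathcal{T}_{a_i}$, such a weighted level-ancestor query is answered in $\mathcal{O}(\log|\mathcal{T}_{a_i}|) = \mathcal{O}(\log n)$ time, returning an implicit target as an (explicit node, offset) pair when $2^k$ falls inside a contracted edge. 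Hence one ancestor-of-vector operation on a vector of size $|I|$ costs $\mathcal{O}(|I|\log n)$, and $\sum_{v\in Q_t}|{\bf v}| = \mathcal{O}(|\delta_t|)$.

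It then remains to bound how often each vector — equivalently, each transition counted by $\delta_t$ — takes part in such an operation over the whole divide-and-conquer. Here I would argue that the recursion has depth $\mathcal{O}(\log|Q_t|)$: each child call either strictly decreases the number of vectors in the block (the upper block once several bottom blocks are present, and any bottom block of size at most $s/2$) or strictly shrinks the window of levels in which the remaining merges can still occur — for a bottom block that window is confined to the $2^k = \mathcal{O}(\sqrt{s})$ levels below the common $2^k$-ancestor — so a potential combining block size and window size halves every second level. Since the items processed at a fixed recursion depth are disjoint (each ``ancestor vector'' of an upper block being charged to one representative original vector below it, of the same size), every transition of $\delta_t$ is touched by $\mathcal{O}(\log n)$ ancestor operations, giving $\mathcal{O}(|\delta_t|\log n)$ coordinate queries overall. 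If instead the jump pointers are indexed directly by accumulated weight — which the doubling recurrence builds within the $\mathcal{O}(|\delta|\log n)$ preprocessing already charged — each coordinate query is amortised $\mathcal{O}(1)$ and the same total follows.

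The step I expect to be the main obstacle is exactly this last accounting: making the two logarithmic factors — recursion depth and cost per weighted level-ancestor query — combine into a single $\log n$ rather than $\log^2 n$. This forces one either to prove that the $2^k$-ancestor of a vector in a child call can be produced from the parent's ancestor by only a constant number of additional pointer hops (so that the total number of fresh searches is $\mathcal{O}(|\delta_t|)$), or to implement the weighted level-ancestor queries so that a single query runs in amortised constant time; in either case the bound $\mathcal{O}(|\delta_t|\log n)$ results, with the $\log n$ coming from a single source. The rest — correctness of the coordinate-wise ancestor computation and the telescoping of the recursion — is routine given the distance formula and Lemma~\ref{lem:distancecost}.
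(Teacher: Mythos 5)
Your proposal does not establish the lemma, and you say so yourself: as executed, it pays $\mathcal{O}(\log n)$ per weighted level-ancestor query (binary lifting on the jump-pointer lists) and performs one such query per coordinate per recursion level, i.e.\ $\mathcal{O}(|\delta_t|\log n)$ queries, which yields $\mathcal{O}(|\delta_t|\log^2 n)$ rather than $\mathcal{O}(|\delta_t|\log n)$. The two escape routes you sketch are exactly the missing content and neither is proved. The ``constant number of extra hops from the parent's ancestor'' idea fails for the bottom blocks: there the child call needs the $2^{k'}$-ancestor with $k'<k$, a vertex strictly \emph{below} the parent's $2^{k}$-ancestor, and jump pointers give no way to descend from it, so a fresh search from the leaf is needed at every level. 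Amortised-constant weighted level ancestor is a nontrivial data structure that the surrounding construction deliberately avoids (the text only assumes the simple $2^0,2^1,\dots$ ancestor lists, used for the $\lca$-queries of Lemma~\ref{lem:distancecost}, and explicitly tolerates $\Theta(\log n)$ there).

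The paper's proof gets the single logarithm from a different mechanism: the $2^k$-ancestors are found \emph{naively}, by walking up edge by edge inside the extracted trees $\mathcal T_a$, and the cost is charged to edge traversals. The key observation is that when a call splits into sub-calls, every edge of the relevant tree goes to exactly one sub-call; the only exception is an edge on which a $2^k$-ancestor lies as an implicit vertex, and then the algorithm is modified not to traverse that edge at all, returning the answer as (lower explicit endpoint, offset). Hence, for a fixed pair $(a,p)$ and a fixed edge on the path from $p$ towards the spine, the edge is traversed at most once for each value of $k$, and since $k$ only takes $\lceil\log t\rceil,\dots,1$, each such edge is traversed $\mathcal{O}(\log n)$ times in total, giving $\mathcal{O}(|\delta_t|\log n)$ with the logarithm coming solely from the number of distinct values of $k$. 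This edge-charging argument (together with the partition of edges among sub-calls and the implicit-vertex convention) is the idea your write-up is missing; without it, or a proof of one of your two alternatives, the claimed bound does not follow. (Also note that the sorting cost you discuss in your first paragraph is not part of this lemma; it is handled separately, and not by comparison sorting, in Lemma~\ref{lem:grouping}.)
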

\begin{proof}
Finding the ancestors is implemented naively: for each state $q$ in the vector
we traverse $\mathcal{D}(M)$ up $2^k$ steps up from $q$.

Consider a pair $(a,p)$ in one of the vectors $\bf v$ and one of the edges $e$ in $\mathcal T _a$
on the way from $p$ to the spine.
We show that $e$ is traversed at most $\lceil \log t \rceil$ times
when constructing the distance tree for $Q_t$.

Note, that the recursive calls are made for
$k = \lceil \log t \rceil, \lceil \log t \rceil - 1, \ldots , 1$.
So it is enough to show that $e$ is traversed once for a fixed value of $k$.

Consider a given instance of a recursive call.
Then when the sub-recursive calls are made, $e$ goes to exactly one of these sub-calls,
except when one of the $2^k$-ancestor of leaves lays on $e$.
However, in such a case there is no need to traverse $e$ by any sub-call
from the lower group: the lower end of $e$ is an ancestor of all vertices in this tree.
So we can modify the algorithm a bit:
as soon as some edge is to be traversed,
we check if the root lies on this edge.
If so, this edge is not traversed, as searched node
is implicit and therefore represented by the lower end and an offset.
And so $e$ is traversed at most once for each $k$, which concludes the proof.
\qed
\end{proof}

\begin{lemma}
\label{lem:grouping}
Given $k$, the total size of grouping vectors in the recursive calls for $k$
is linear in their size plus an additional cost, which is
$\mathcal{O}(|\delta_t|\log |Q_t|)$ summed over all $k$
\end{lemma}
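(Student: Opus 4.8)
The plan is to split the work done inside a single level-$k$ recursive call of \distancetree{} into two parts: (a)~computing, for every coordinate of every vector grouped in that call, the $2^k$-th ancestor of that coordinate inside the auxiliary tree $\mathcal{T}_a$ that contains it, and (b)~the actual bucketing of the vectors into blocks sharing a common $2^k$-th ancestor (together with the direct distance computation on the base-case blocks of at most two vectors). Part~(a) is precisely what Lemma~\ref{lem:going_up} charges: over all values of $k$ its total cost is $\mathcal{O}(|\delta_t|\log|Q_t|)$, and for a fixed $k$ the ``root lies on the edge'' optimisation makes every edge of the trees $\mathcal{T}_a$ be traversed at most once, so the per-level cost is $\mathcal{O}(\sum_a |\mathcal{T}_a|) = \mathcal{O}(|\delta_t|)$, and there are only $\mathcal{O}(\log|Q_t|)$ relevant values of $k$ since a subproblem of size $s$ uses $k = \lceil \log s/2 \rceil \le \lceil \log|Q_t|/2 \rceil$. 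This part supplies the additional cost claimed in the statement, so it remains to bound part~(b) by the size of the vectors grouped in the call.

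For part~(b), recall that the $2^k$-th ancestor of a coordinate is stored as a pair $(v,\ell)$ consisting of an explicit vertex of the relevant $\mathcal{T}_a$ and an offset on the edge above it, so equality of two such pairs is decided in constant time. Two vectors land in the same block exactly when all their coordinates carry identical ancestor pairs; hence grouping the vectors is a radix sort with one digit per coordinate $a \in \Sigma$ (equivalently, per coordinate of the signature in the partial-transition case), the $a$-th digit being the recorded ancestor of the $a$-th coordinate. A single distribution pass on coordinate $a$ uses a bucket array indexed by the explicit vertices of $\mathcal{T}_a$ in a fixed DFS numbering (extended by the offset); keeping a list of the buckets actually touched, this array need not be re-initialised between passes or between recursive calls. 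Thus one pass over coordinate $a$ costs $\mathcal{O}(\text{number of vectors having an } a\text{-coordinate})$ and, summing over coordinates, the bucketing costs $\mathcal{O}(\sum_{\mathbf v} |\mathbf v|)$, i.e.\ linear in the total size of the grouped vectors; by Lemma~\ref{lem:distancecost} the same is true of the distance computations on the base-case blocks. Summing over all level-$k$ recursive calls gives the first term of the bound, and summing the per-level costs from part~(a) over the $\mathcal{O}(\log|Q_t|)$ values of $k$ gives the additional $\mathcal{O}(|\delta_t|\log|Q_t|)$.

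The main obstacle I anticipate is making the bucketing \emph{genuinely} linear per call: the vectors have variable width (because of partial transitions and the signature-based split), the $2^k$-th ancestors are implicit and may only be compared through their $(v,\ell)$ encodings, and the bucket arrays must be shared across the $|\Sigma|$ radix passes and across all recursive calls without ever paying for a full re-initialisation. Once the radix sort is arranged so that a coordinate is touched only when it is present and only the used buckets are reset, everything else is routine bookkeeping, and the additive term is handed over directly by Lemma~\ref{lem:going_up}.
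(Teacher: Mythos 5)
Your proposal rests on a claim that is precisely the difficulty this lemma is meant to address, and it does not establish it. In the paper the ``additional cost'' of Lemma~\ref{lem:grouping} is \emph{not} the ancestor-finding of Lemma~\ref{lem:going_up} (that is charged separately); it is the overhead of the grouping itself. The paper groups by sorting the sparse ancestor vectors lexicographically with \proc{RadixSort}; even after renaming the nodes of each $\mathcal T_a$ to local names in $1,\dots,2|Q_t|-1$, one sort carries an additive cost of $\mathcal{O}(|Q_t|+|\Sigma|)$ for the digit universe, which can dwarf a small recursive call. The paper's remedy is to batch all recursive calls with the same $k$ into a single sort — legitimate because vectors from different calls cannot share non-trivial ancestors, and trivial (spine) ancestors are removed beforehand — so the $\mathcal{O}(|Q_t|+|\Sigma|)$ overhead is paid once per $k$, giving $\mathcal{O}(|\delta_t|\log|Q_t|)$ in total. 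You instead assert that each call can be bucketed in time genuinely linear in its vectors, with no additive term, and then import the lemma's additive term from Lemma~\ref{lem:going_up}; if your linear-bucketing claim fails, the lemma is simply unproven, since the sorting overhead is then unaccounted for.

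Concretely, the linear-bucketing claim has two gaps. First, the digit values are $2^k$-ancestors, which are in general \emph{implicit} nodes represented as pairs $(v,\ell)$; a bucket array ``indexed by the explicit vertices extended by the offset'' is not direct addressing, because offsets range up to the levels of $\mathcal D(M)$ — you need a two-stage bucketing with a second globally allocated, timestamp-reset array (or an equivalent device), and this is exactly where a per-pass additive term threatens to reappear. Second, and more seriously, the vectors are sparse with different coordinate sets (cf.\ Lemma~\ref{lem:distancecost}), and a missing $a$-coordinate stands for $\bot$, whose $2^k$-ancestor is a spine node that can \emph{coincide} with the $2^k$-ancestor of a present coordinate with small $m(\cdot)$. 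If your $a$-pass touches only vectors that have an $a$-coordinate, such vectors are separated from vectors lacking $a$ even though their ancestor vectors are equal; this over-refinement produces a wrong distance tree. If instead every vector of the call is processed in the pass of every letter occurring in the call, the cost becomes (number of vectors)$\times$(number of letters present), which is not linear in $\sum_i|\mathbf v_i|$ for sparse inputs. The paper avoids this by normalising away the trivial spine ancestors before sorting; you explicitly defer this point (and the variable-width issue) to ``routine bookkeeping'', but it is the crux. Until the per-call bound is actually proved with these issues resolved, the statement does not follow, whereas the paper's batching argument yields it directly.
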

\begin{proof}
We want to sort lexicographically vectors ${\bf v}_1, \ldots , {\bf v}_i$
of integers in the range $1, 2, \ldots, n + |\Sigma|$.
This can be done in a standard way (say, usinf \proc{RadixSort}) in time $\mathcal{O}(n + |\Sigma| + \sum_{j=1}^i |{\bf v}_j|)$.
This is too much, as $n$ and $|\Sigma|$ might be large compared to $\sum_{j=1}^i |{\bf v}_j|$.
Thus we can do the following. In each $\mathcal T_a$ we can replace each node by
a number from range $1, \ldots , 2|Q_t|-1$.
Then instead of sorting according to names in
$\mathcal D (M)$, we use the local names from $\mathcal T_a$.
In this way the running time is $\mathcal{O}(|Q_t| + |\Sigma| + \sum_{j=1}^i |{\bf v}_j|)$

Still, for small instances, $|Q_t| + |\Sigma|$ can be substantially larger than
$ \sum_{j=1}^i |{\bf v}_j|$.
To avoid this problem, we process all the recursive call for a fixed $k$ in parallel.
Then the sorting is done for all vectors in the recursive calls.
Note, that if vectors come from different recursive calls, they cannot have the same non-trivial ancestors
(and trivial, i.e., empty, ones can be identified and removed from the sorting beforehand).
Thus the additional cost $\mathcal{O}(|Q_t| + |\Sigma|)$ is included once for each $k=1, \ldots, \log |Q_t|$,
and so in total gives $\mathcal{O}((|Q_t| + |\Sigma|) \log |Q_t|)$ time, which is $\mathcal{O}(|\delta_t|\log |Q_t|)$.
\qed
\end{proof}

Two previous lemmata allow calculating the whole recursion time
\begin{lemma}
\label{lem:recursive calls}
The cost of the procedure for vectors ${\bf v_1}, \ldots , {\bf v_\ell}$, excluding the additional cost
of $\mathcal{O}(|\delta_t|\log |Q_t|)$ from Lemma~\ref{lem:grouping},
is $\O ((\sum_{i=1}^\ell |{\bf v_i}| - |{\bf v}|) \log |Q_t|)$,
where ${\bf v}$ is a lowest common ancestor of ${\bf v_1}, \ldots , {\bf v_\ell}$.
\end{lemma}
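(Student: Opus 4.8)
The plan is a structural induction on the recursion tree of the divide-and-conquer procedure. Let $T(\cdot)$ denote the cost of a call (excluding the additive $\mathcal{O}(|\delta_t|\log|Q_t|)$ term of Lemma~\ref{lem:grouping}), and for a call on vectors ${\bf v_1},\dots,{\bf v_\ell}$ write $S=\sum_{i=1}^{\ell}|{\bf v_i}|$ and $W=|{\bf v}|$, where ${\bf v}$ is their lowest common ancestor and $|{\bf w}|$ is the number of coordinates of a vector ${\bf w}$, as in Lemma~\ref{lem:distancecost}. I would prove $T({\bf v_1},\dots,{\bf v_\ell})\le c\,(S-W)\log|Q_t|$ for a large enough constant $c$. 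The basic tool is the monotonicity $|\father({\bf w})|\le|{\bf w}|$: applying $\father$ can only kill coordinates (those whose transitions become undefined), so an ancestor of a vector is never larger than the vector, and in particular $W\le\min_i|{\bf v_i}|$.

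Base case $\ell\le 2$: by Lemma~\ref{lem:distancecost} the distance of the two vectors, and hence the distance tree spanning them, is computed using $\mathcal{O}(|{\bf v_1}|+|{\bf v_2}|)=\mathcal{O}(S)$ operations; a singleton call costs $\mathcal{O}(1)$, to be charged to its parent. Since $W\le\min_i|{\bf v_i}|\le S/2$, we have $S\le 2(S-W)$, so the cost is $\mathcal{O}(S-W)$, within the bound.

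Inductive step $\ell>2$: the procedure computes the $2^k$-ancestor of each vector ($k=\lceil\log\ell/2\rceil$) — this search is charged separately via Lemma~\ref{lem:going_up} and is not part of $T$ — groups the vectors into bottom blocks $B_1,\dots,B_r$ by this ancestor, and recurses on the upper block $U=\{{\bf a_1},\dots,{\bf a_r}\}$ of distinct ancestors and on every $B_j$ with $|B_j|>2$, finishing smaller blocks directly. The linear-in-size part of the grouping, $c_0S$, stays in $T$ (the $\Theta(|Q_t|+|\Sigma|)$ part is the excluded additive term, Lemma~\ref{lem:grouping}). Put $S_{B_j}=\sum_{{\bf w}\in B_j}|{\bf w}|$, so $S=\sum_jS_{B_j}$, and $W_j=|\lca(B_j)|$. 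Two facts make the $W$-terms telescope: (i)~the lca of the input vectors equals the lca of their $2^k$-ancestors, so $|\lca(U)|=W$; and (ii)~${\bf a_j}$ is an ancestor of every vector of $B_j$, hence of $\lca(B_j)$, so $|{\bf a_j}|\le W_j$ and thus $S_U:=\sum_j|{\bf a_j}|\le\sum_jW_j$. Feeding the induction hypothesis into each child and adding the $\mathcal{O}(S_{B_j})$ cost of the small blocks, the $W$-contributions combine as
\[
\sum_j(S_{B_j}-W_j)+(S_U-W)\;\le\;\sum_jS_{B_j}-\sum_jW_j+\sum_jW_j-W\;=\;S-W\enspace,
\]
so the whole recursive part is $\le c(S-W)\log|Q_t|$.

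The one term left over is the additive $c_0S$ of linear work at the current node; since $S$ is not $\mathcal{O}(S-W)$ in general, this cannot be absorbed by a purely local induction, and this is the main obstacle. The fix is global: by Lemma~\ref{lem:grouping} there are only $\mathcal{O}(\log|Q_t|)$ recursion levels, so it suffices to show that the linear work summed over all calls on one level is $\mathcal{O}(S-W)$. I would establish this by the same ancestor-size bookkeeping, charging each vector appearing at level $i+1$ to the distinct vectors that produced it at level $i$ (a block of size $\ge 2$ contributes a single, shared ancestor), and using that the spine carries no compressed fragments (Lemma~\ref{lem:mp}) so that the repeated ancestor steps on the upper block strictly drain the live coordinates. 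The delicate point is the regime with little merging, where the upper block stays large while its vectors remain clustered within $2^k$ tree levels of each other; making the per-level telescoping tight there is the technical core of the proof.
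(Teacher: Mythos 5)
Your recursion set-up, the base case, and the telescoping of the ancestor terms over the upper block and the bottom blocks coincide with the paper's own induction. The genuine problem is the very last step, and it rests on a false premise: you assert that the linear grouping work $c_0S$ at the current node cannot be absorbed locally because ``$S$ is not $\mathcal{O}(S-W)$ in general''. But it is: in every call with $\ell\geq 2$ vectors, your own monotonicity observation gives $W=|{\bf v}|\leq\min_i|{\bf v_i}|\leq S/\ell\leq S/2$, hence $S\leq 2(S-W)$ --- exactly the inequality you already used in the base case. This is precisely how the paper closes the induction: since $|{\bf v}|\leq\min_i|{\bf v_i}|$, the constant $c$ can be fixed so that the grouping time $\sum_i|{\bf v_i}|$ is at most $c(\sum_i|{\bf v_i}|-|{\bf v}|)$, and adding the inductive bound $c(\log s-1)(S-W)$ for the sub-calls yields $c(S-W)\log s$. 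Your substitute --- a global per-level charging argument --- is therefore unnecessary, and, more importantly, you do not carry it out: you explicitly leave the low-merging regime as ``the technical core of the proof'', so as written the lemma is not established.

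Two smaller points. Your claim~(i), that the lca of the input vectors equals the lca of their $2^k$-ancestors, is stronger than needed and is not justified; the paper only uses that ${\bf v}$ is a \emph{common} ancestor of the ancestors ${\bf u_1},\ldots,{\bf u_{\ell'}}$, together with its opening remark that replacing the lowest common ancestor by a higher (hence no larger) one only weakens the bound, i.e.\ $|{\bf v'}|\leq|{\bf v}|$ gives $S-|{\bf v}|\leq S-|{\bf v'}|$; the analogous weakening $|{\bf u_j}|\leq|\lca(B_j)|$ is what makes the block terms telescope. Also, in the base case the $\log$ factor should be made explicit: each lca query costs $\Theta(\log n)$ with the ancestor-list preprocessing, which is where the $\log|Q_t|$ in the claimed bound actually comes from.
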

\begin{proof}
First note, that if $\bf v$ is the lowest ancestor of ${\bf v_1}, \ldots , {\bf v_\ell}$
and $\bf v'$ is some ancestor of these vectors, then $|{\bf v}|\geq |{\bf v'}|$,
and so the estimation using $|{\bf v'}|$ is weaker than the one using |{\bf v}|.

The claim is shown by an induction.
Fix a constant $c$, for which it is shown that the total cost is at most $c (\sum_{i=1}^\ell |{\bf v_i}| - |{\bf v}|) \log s$.

The basis of the induction are calls for at most two vectors.
No recursive call is made for one vector, and so we do not consider it.
As observed in Lemma~\ref{lem:distancecost},
when there are only two vectors $\bf v$,$\bf v' $, the calculation is done
using $\O ({\bf v} + {\bf v'})$ $\lca$ queries. So $c$ can be chosen in advance so that this
is at most $c ({\bf v} + {\bf v'})\log t$.

When there are more than two vectors,
by Lemma~\ref{lem:grouping} we can group them according to their ancestors
in time at most $\sum_{i=1}^\ell |{\bf v_i}|$.
Since $ |{\bf v}| \leq \min_{i=1}^\ell |{\bf v_i}|$,
constant $c$ can be chosen in advance in the way that this is at most $c(\sum_{i=1}^\ell |{\bf v_i}| - |{\bf v}|)$ time.

Then there are sub-calls made. Let ${\bf u_1}, \ldots , {\bf u_{\ell'}}$ be the ancestors of vectors.
Then the `upper' recursive call, by the induction assumption, takes at most
$c (\log s -1)(\sum_{i=1}^{\ell'}{\bf u_i} - {\bf v}) $
(note, that ${\bf v}$ is a common ancestor of ${\bf u_1}, \ldots , {\bf u_{\ell'}}$).
The `lower' subcalls take, in total, time $c (\sum_{i=1}^\ell |{\bf v_i}| - \sum_{i=1}^{\ell'}|{\bf u_i}|) (\log s -1)$,
since each ${\bf u_i}$ is a common ancestor of vectors for one of these subcalls.
Hence all the calls take at most $c (\sum_{i=1}^\ell |{\bf v_i}| - |{\bf v|}) (\log s -1)$ time.

Summing the cost of the recursive calls and the grouping:
\begin{equation*}
c(\sum_{i=1}^\ell |{{\bf v_i}}| - |{{\bf v}|})  + c (\sum_{i=1}^\ell |{\bf v_i}| - |{\bf v}|) (\log s -1) \leq 
c (\sum_{i=1}^\ell |{\bf v_i}| - |{\bf v}|) \log s,
\end{equation*}
as claimed.
\qed
\end{proof}

\begin{proof}[of Theorem~\ref{thm:finite distance}]
By Lemma~\ref{lem:grouping} together with Lemma~\ref{lem:recursive calls}.
\end{proof}

\section{Additional material for Section~\ref{sec:hyper}}
\begin{figure}
  \centering
  \includegraphics[width=\textwidth]{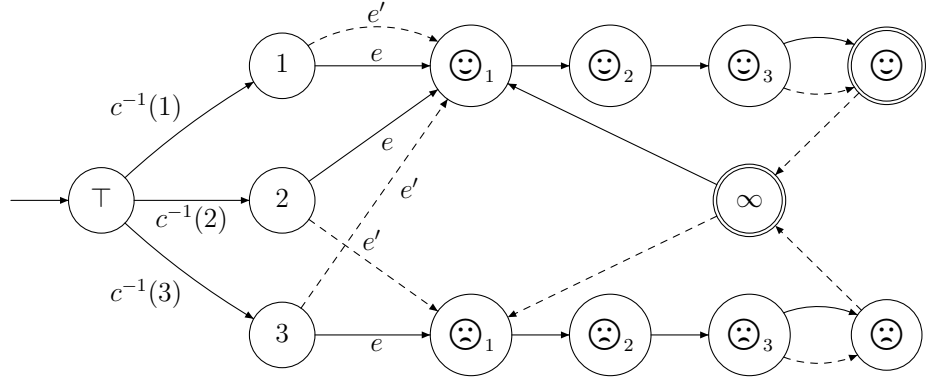}
  \caption{DFA~$c(M)$ constructed in~\protect{Section~\ref{sec:HCon}},
    where $a$-transitions are represented by unbroken lines (unless
    noted otherwise), $b$-transitions are represented by dashed lines,
    and $e = \{v_1, v_2\}$ and $e' = \{v_3, v_4\}$ with $v_1 < v_2 <
    v_3 < v_4$ and $c(v_1) = 1$, $c(v_2) = 3$, $c(v_3) = 1$, and $c(v_4)
    = 2$. The state $\bot$ is not depicted.}
  \label{fig:Hyper3}
\end{figure}

The DFA~$c(M)$ has $14$~states, and it can easily be verified that it
is hyper-equivalent to~$M$ and has no different, but equivalent kernel
states.  It is depicted in Figure~\ref{fig:Hyper3}.

\begin{proof}[of Lemma~\ref{lm:hmain}]
  Let $N = \langle P, \Sigma, \mu, p_0, F'\rangle$ be a DFA such that
  $M \sim N$ and $\abs P \leq 14$.  Without loss of generality we may
  assume that there are no different, but equivalent kernel states
  in~$N$.  Note that the DFA~$M$ is minimal provided that $G$~has no
  vertex without any incident edge.
  By~\cite[Theorem~3.8]{BadrRAIRO}\footnote{Actually, the cited
    theorem assumes~$N$ to be minimal, but the original proof also
    works in our relaxed setting (i.e., when there are no different,
    but equivalent states in the kernel).}  there exists a mapping~$h
  \colon Q \to P$ such that $q \sim h(q)$ for every $q \in Q$, which
  additionally is an isomorphism~$h \colon \Ker(M) \to \Ker(N)$
  between the kernel states.  Since $\Ker(M) = \{\bot, \infty,
  \smiley, \frowny \} \cup \{ \Circle_j \mid \Circle \in \{\smiley,
  \frowny\}, j \in [3]\}$, we have $\{\infty, \smiley\} \subseteq F'$
  and $10$~distinct states in~$N$ that behave like their counterparts
  in~$M$ [i.e., $L_M(q) = L_N(h(q))$ for every $q \in \Ker(M)$].
  Since $\top \not\sim p$ and $\delta(v) \not\sim p$ for every $p \in
  \Ker(N)$ and $v \in V$, we conclude that $h(\top) = p_0 \notin
  \Ker(N)$ and $h(\delta(v)) \notin \Ker(N)$, which means that we
  identified the $11^{\text{th}}$~and~$12^{\text{th}}$ state in~$N$
  because they are not hyper-equivalent to each other (i.e., $\top
  \not\sim \delta(v)$ for every $v \in V$).  Consequently, there are
  at most $2$~other unidentified states.

  \begin{clm}
  \label{clm:EV errors}
  The DFA~$N$ commits at least $\abs E \cdot (\abs V -2)$ errors with
  prefix~$ve$ such that $v \in V$ is a vertex and $e \in E$ is a
  non-incident edge.  Exactly $\abs E \cdot (\abs V - 2)$ such errors
  are committed if $\mu(ve) \in \{h(\smiley_1), h(\frowny_1) \}$ for
  every $v \in V$~and~$e \in E$ with $v \notin e$.
  \end{clm}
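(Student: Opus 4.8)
The plan is to count, for each vertex~$v\in V$ and each non-incident edge~$e\in E$, the errors of~$N$ whose first two letters are~$ve$, and then to sum over all such pairs. Since $\delta(\top,ve)=\blacksmiley_1$ in~$M$, we have $L(M)\cap ve\,\Sigma^*=ve\cdot L_M(\blacksmiley_1)$. First I would observe that $\mu(ve)$ is defined: otherwise $ve\,\Sigma^*\cap L(N)=\emptyset$ while $ve\cdot L_M(\blacksmiley_1)$ is infinite (the right-language of~$\blacksmiley_1$ eventually cycles through~$\infty$), contradicting $M\sim N$. Writing $q_{v,e}=\mu(ve)$, the errors of~$N$ with prefix~$ve$ are exactly $ve\cdot\bigl(L_M(\blacksmiley_1)\simdiff L_N(q_{v,e})\bigr)$, hence there are $\abs{L_M(\blacksmiley_1)\simdiff L_N(q_{v,e})}$ of them; moreover, for two distinct pairs $(v,e)\ne(v',e')$ the prefixes differ already in the first or in the second letter, so these error sets are pairwise disjoint. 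Thus the claim reduces to showing that $\abs{L_M(\blacksmiley_1)\simdiff L_N(q_{v,e})}\ge 1$ always, with equality whenever $q_{v,e}\in\{h(\smiley_1),h(\frowny_1)\}$.

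The step I expect to be the obstacle is the lower bound: no state~$q$ of~$N$ satisfies $L_N(q)=L_M(\blacksmiley_1)$. I would prove this using the state budget of~$N$. From the transition table $L_M(\blacksmiley_1)=a\,L_M(\blacksmiley_2)=aa\,L_M(\blacksmiley_3)$, all non-empty, and $L_M(\blacksmiley_1),L_M(\blacksmiley_2),L_M(\blacksmiley_3)$ are pairwise distinct (minimality of~$M$, and $\blacksmiley_1,\blacksmiley_2,\blacksmiley_3$ distinct). None of them is the right-language of an identified state of~$N$: it is no~$L_M(q)$ with $q\in\Ker(M)$ because $M$~is minimal and $\blacksmiley_j\notin\Ker(M)$, and it is neither $L_N(p_0)=L(N)$ nor any $L_N(h(v))$ because $L(M)\cap a\Sigma^*=\emptyset=L_M(v)\cap a\Sigma^*$ whereas $L_M(\blacksmiley_j)\subseteq a\Sigma^*$ is infinite, so $L(N)\simdiff L_M(\blacksmiley_j)$ and $L_N(h(v))\simdiff L_M(\blacksmiley_j)$ would be infinite, contradicting $M\sim N$, respectively $v\sim h(v)$. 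Hence if some $q^*$ had $L_N(q^*)=L_M(\blacksmiley_1)$, then $\mu(q^*,a)$ and $\mu(q^*,aa)$ would be defined with right-languages $L_M(\blacksmiley_2)$ and $L_M(\blacksmiley_3)$, producing three distinct non-identified states of~$N$ --- contradicting that $N$ has at most~$2$ further states. This forces every summand to be at least~$1$, so $N$ commits at least $\abs E\cdot(\abs V-2)$ errors with a prefix~$ve$.

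For the exact count I would perform two short computations of symmetric differences. From $\delta(\smiley,b)=\infty=\delta(\frowny,b)$ and $\smiley\in F\not\ni\frowny$ we get $L_M(\smiley)\simdiff L_M(\frowny)=\{\varepsilon\}$; peeling off the deterministic prefixes via $\blacksmiley_3,\smiley_3,\frowny_3$ yields $L_M(\blacksmiley_3)\simdiff L_M(\smiley_3)=\{b\}$ and $L_M(\blacksmiley_3)\simdiff L_M(\frowny_3)=\{a\}$, and prepending~$aa$ gives $L_M(\blacksmiley_1)\simdiff L_M(\smiley_1)=\{aab\}$ and $L_M(\blacksmiley_1)\simdiff L_M(\frowny_1)=\{aaa\}$. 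Since $\smiley_1,\frowny_1\in\Ker(M)$ and the kernel isomorphism~$h$ preserves right-languages, $L_N(h(\smiley_1))=L_M(\smiley_1)$ and $L_N(h(\frowny_1))=L_M(\frowny_1)$. Therefore, if $q_{v,e}=\mu(ve)\in\{h(\smiley_1),h(\frowny_1)\}$ for all $v\in V$ and $e\in E$ with $v\notin e$, each summand equals exactly~$1$, so $N$ commits exactly $\abs E\cdot(\abs V-2)$ errors with a prefix~$ve$, which is what the claim asserts.
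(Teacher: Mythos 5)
Your proof is correct, and its per-prefix decomposition and the exact-count computation ($L_M(\blacksmiley_3)\simdiff L_M(\smiley_3)=\{b\}$, $L_M(\blacksmiley_3)\simdiff L_M(\frowny_3)=\{a\}$, hence exactly one error word per prefix once $\mu(ve)\in\{h(\smiley_1),h(\frowny_1)\}$) coincide with the paper's. Where you genuinely diverge is the lower bound. The paper argues locally along the path from the prefix: since $\delta(vea^{j-1})=\blacksmiley_j$ and the $\blacksmiley_j$ are pairwise hyper-inequivalent, the three states $\mu(vea^{j-1})$ are pairwise distinct, so at least one of them is an identified state and therefore lies in $\{h(\smiley_j),h(\frowny_j)\}$; because $\sim$ is a congruence and $h$ is an isomorphism on the kernel, this propagates to $\mu(vea^{2})\in\{h(\smiley_3),h(\frowny_3)\}$, pinning the error down to a concrete word $vea^{2}x$ with $x\in\{a,b\}$. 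You instead argue globally that no state of~$N$ can have right-language exactly $L_M(\blacksmiley_1)$: such a state would spawn, via its $a$-successors, three distinct states whose right-languages $L_M(\blacksmiley_1),L_M(\blacksmiley_2),L_M(\blacksmiley_3)$ match no identified state, overrunning the budget of at most two unidentified states; hence every per-prefix symmetric difference is non-empty. Both arguments spend the same resource (the at most two unidentified states), but yours avoids the congruence/propagation step at the price of not localising the error word --- which the claim does not require and which the surrounding proof never uses (the colouring-violation errors are separated from these by the second letter of the prefix, not by the suffix). One cosmetic slip: $L_M(\blacksmiley_3)\not\subseteq a\Sigma^*$ (it contains $bb$), so your blanket statement that $L_M(\blacksmiley_j)\subseteq a\Sigma^*$ fails for $j=3$; the conclusion survives because every word of $L(M)$ begins with a letter of~$V$ and every word of $L_M(v)$ with a letter of~$E$, so both languages are still disjoint from the infinite $L_M(\blacksmiley_3)\subseteq a\Sigma^*\cup b\Sigma^*$ and the relevant symmetric differences remain infinite.
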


  \begin{proof}
    Let $v \in V$ be a vertex and $e \in E$ be a non-incident edge,
    which means that $v \notin e$.  Clearly, $\delta(vea^{j-1}) =
    \blacksmiley_j$ for all $j \in [3]$ and $\blacksmiley_i \not\sim
    \mathord{\blacksmiley_j}$ for all different $i, j \in [3]$.
    Consequently, also the $3$~states of~$S = \{\mu(vea^{j-1}) \mid j
    \in [3]\}$ are pairwise hyper-inequivalent (and hence different).  Since
    there are at most $2$~unidentified states, at least one state
    of~$S$ is already identified.  Moreover, $\delta(vea^{j-1}) \sim
    \mu(vea^{j-1})$ for all $j \in [3]$ yields that $\mu(vea^{j-1})
    \in \{h(\smiley_j), h(\frowny_j)\}$ for at least one $j \in [3]$,
    since $\smiley_j$ and $\frowny_j$ are the only states hyper-equivalent
    to $\smiley_j$ and $\mu(vea^{j-1}) \sim \smiley_j$.
    Since $\sim$ is a congruence and $h$ is an isomorphism
    on~$\Ker(M)$, we obtain that $\mu(vea^2) \in \{h(\smiley_3),
    h(\frowny_3)\}$ in any case.  However, $\delta(vea^2) =
    \blacksmiley_3$, so we obtain the error word~$vea^2x$ for some $x \in
    \{a, b\}$ because
    \begin{itemize}
    \item $L_N(h(\smiley_3)) = L_M(\smiley_3)$ and $L_N(h(\frowny_3))
      = L_M(\frowny_3)$,
    \item $L_M(\blacksmiley_3) \simdiff L_N(h(\smiley_3)) =
      L_M(\blacksmiley_3) \simdiff L_M(\smiley_3) = \{b\}$, and
    \item $L_M(\blacksmiley_3) \simdiff L_N(h(\frowny_3)) =
      L_M(\blacksmiley_3) \simdiff L_M(\frowny_3) = \{a\}$.
    \end{itemize}

    Moreover, if $\mu(ve) \in \{h(\smiley_1), h(\frowny_1)\}$, then
    there is exactly one error word with prefix~$ve$.  Overall, this
    yields at least one error for every $v \in V$ and $e \in E$ with
    $v \notin e$.  The total number of such errors is at least $\abs E
    \cdot (\abs V - 2)$, and it is exactly $\abs E \cdot (\abs V - 2)$
    if $\mu(ve) \in \{h(\smiley_1), h(\frowny_1)\}$ for every $v \in
    V$ and $e \in E$ with $v \notin e$.  \qed
  \end{proof}

  Let $S = \{ \mu(v) \mid v \in V\}$.  For every $v \in V$, the
  state~$\delta(v)$ is not hyper-equivalent to any state of~$\Ker(N)$
  and $\delta(v) \not\sim p_0$, which yields that $\abs S \leq 3$;
  without loosing generality we can assume that $|S|=3$,
  as this is the hardest case.
  Let $S = \{p_1, p_2, p_3\}$, and let $c \colon V \to [3]$ be such
  that for every $v \in V$ we have $c(v) = i$ if and only if $\mu(v) =
  p_i$.  Thus, we deduced a~$3$-colouring from the transitions of~$N$.
  Next, we investigate colouring violations and its connection
  with the number of errors with respect to $M$.  For every edge $e =
  \{v_1, v_2\} \in E$ such that $c(v_1) = c(v_2)$, we have
  $L_N(\mu(v_1e)) = L_N(\mu(v_2e))$, but
  \[ L_M(\delta(v_1e)) \simdiff L_M(\delta(v_2e)) = L_M(\smiley_1)
  \simdiff L_M(\frowny_1) = \{aaa, aab\} \enspace. \] Consequently, we
  obtain at least $2$~additional errors for every such edge~$e$
  because the corresponding error words start with $ve$ where $v \in
  e$.  Those two errors together with the errors described in
  Claim~\ref{clm:EV errors} yield that $N$~commits strictly more than
  $\abs E \cdot (\abs V - 2)$~errors if $G$~is not $3$-colourable.

  Finally, we show that the DFA~$c(M) = \langle P, \Sigma, \mu, \top,
  F \rangle$, which has exactly $14$~states, commits exactly $\abs E
  \cdot (\abs V - 2)$~errors, if $G$~is $3$-colourable by the proper
  $3$-colouring $c \colon V \to [3]$.  By Claim~\ref{clm:EV errors},
  $c(M)$~commits exactly $\abs E \cdot (\abs V - 2)$~errors with
  prefix~$ve$ for $v \in V$ and $e \in E$ with $v \notin e$ because
  $\mu(ve) \in \{\smiley_1, \frowny_1\}$ for all such $v$~and~$e$.
  Clearly, all remaining error words must start with $ve$ for some $v
  \in V$ and $e \in E$ such that $v \in e$.  However, if $v \in e$,
  then $\delta(ve) = \mu(c(v)e)$ by the definition of $c(M)$,
  which yields that no error word with prefix~$ve$ exists.
  \qed
\end{proof}

\section{Additional material for Section~\ref{sec:kMinHardness}}

Recall that $M = \langle Q, \Sigma,
\delta, q_0, F\rangle$ is our DFA and that we can always rename states to
avoid conflicts.  In particular, we assume that $Q \cap \nat =
\emptyset$.

\begin{lemma}[full version of Lemma~\ref{lm:kCong}]
  \label{lm:kCongII}
  For every congruence~$\mathord{\simeq} \subseteq Q \times Q$ on~$M$,
  there exists a DFA~$N = \langle P, \Delta, \mu, q_0, F' \rangle$ such that
  \begin{enumerate}
  \item \label{en:gadget preserves 1}
  $L_M(q) = L_N(q) \cap \Sigma^*$ and $\delta^{-1}(q) =
    \mu^{-1}(q)$ for every $q \in Q$,
  \item $L_M(q_1) \simdiff L_M(q_2) = L_N(q_1) \simdiff L_N(q_2)$ for
    all $q_1 \simeq q_2$,
  \item $p_1 \not\sim p_2$ for every $p_1 \in P \setminus Q$ and
    $p_2 \in P$ with $p_1 \neq p_2$, and
  \item $q_1 \not\sim q_2$ in~$N$ for all $q_1 \not\simeq q_2$.
  \end{enumerate}
\end{lemma}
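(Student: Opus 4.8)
The plan is to obtain~$N$ from~$M$ by attaching a small ``trap'' gadget: two fresh states~$\mathtt{u}$ (accepting) and~$\mathtt{d}$ (rejecting), one fresh letter~$\#_C$ for every congruence class~$C$ of~$\simeq$, and one more fresh ``loop'' letter~$c$. So $P = Q \cup \{\mathtt{u}, \mathtt{d}\}$, $\Delta = \Sigma \cup \{c\} \cup \{\#_C \mid C \text{ a class of } \simeq\}$, $F' = F \cup \{\mathtt{u}\}$, the start state stays~$q_0$, and $\mu$ extends~$\delta$ as follows: it agrees with~$\delta$ on~$Q \times \Sigma$; for $q \in Q$ lying in class~$C$ we set $\mu(q, \#_C) = \mathtt{u}$, $\mu(q, \#_{C'}) = \mathtt{d}$ for every other class~$C'$, and $\mu(q, c) = \mathtt{d}$; and we let~$\mathtt{u}$ loop on~$c$ while every other letter sends~$\mathtt{u}$ to~$\mathtt{d}$, and let~$\mathtt{d}$ loop on every letter. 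Thus $L_N(\mathtt{u}) = \{c^i \mid i \ge 0\}$, $L_N(\mathtt{d}) = \emptyset$, and every $q \in Q$ in class~$C$ has $\#_C c^i \in L_N(q)$ for all~$i$, hence an infinite right-language.

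Next I would verify (i) and (ii). The gadget is a trap: every transition leaving~$Q$ lands in~$\{\mathtt{u}, \mathtt{d}\}$ and no transition returns to~$Q$, so a word drives~$q_0$ into a state of~$Q$ exactly when it lies in~$\Sigma^*$, and on~$\Sigma^*$ the run of~$N$ is the run of~$M$; this gives $L_N(q) \cap \Sigma^* = L_M(q)$ and $\mu^{-1}(q) = \delta^{-1}(q)$ for $q \in Q$. For (ii), fix $q_1 \simeq q_2$ and show by induction on~$\abs{w}$ that $w \in L_N(q_1) \simdiff L_N(q_2)$ iff $w \in \Sigma^*$ and $w \in L_M(q_1) \simdiff L_M(q_2)$: if~$w$ starts with~$c$ or with some~$\#_{C'}$, then~$q_1$ and~$q_2$ (being in one class) take the same transition on that letter, so~$w$ contributes nothing and is not in~$\Sigma^*$; if $w = \sigma v$ with $\sigma \in \Sigma$, use that~$\simeq$ is a congruence, so $\delta(q_1, \sigma) \simeq \delta(q_2, \sigma)$, and apply the hypothesis to~$v$. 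Since $L_M(q_i) \subseteq \Sigma^*$, condition~(ii) follows.

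Then (iii) and (iv). For (iii): $L_N(\mathtt{d}) = \emptyset$ while~$L_N(\mathtt{u})$ and every~$L_N(q)$ with $q \in Q$ are infinite, so $\mathtt{d} \not\sim p$ for every $p \neq \mathtt{d}$; and for $q \in Q$ in class~$C$ the words~$\#_C c^i$ lie in $L_N(q) \setminus L_N(\mathtt{u})$, so $L_N(q) \simdiff L_N(\mathtt{u})$ is infinite and $\mathtt{u} \not\sim q$. For (iv): if $q_1 \not\simeq q_2$ they lie in distinct classes $C_1 \neq C_2$, hence $\mu(q_1, \#_{C_1}) = \mathtt{u}$ but $\mu(q_2, \#_{C_1}) = \mathtt{d}$, so $\{\#_{C_1} c^i \mid i \ge 0\} \subseteq L_N(q_1) \simdiff L_N(q_2)$ is infinite and $q_1 \not\sim q_2$ in~$N$.

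The genuinely delicate point---the main obstacle---is making (ii) and (iii) hold at the same time: the gadget must add an \emph{infinite} amount of fresh behaviour to every original state (both to separate non-congruent states and to separate the new accepting state~$\mathtt{u}$ from all original states), yet it must add \emph{exactly the same} fresh behaviour to any two congruent states, so that their symmetric difference is unchanged. Routing each class through its own fresh letter into the single state~$\mathtt{u}$, and making~$L_N(\mathtt{u})$ itself infinite but co-infinite via the dedicated looping letter~$c$, is what reconciles these demands; the rest is the routine run-tracking sketched above. One could shrink the number of fresh letters from the number of congruence classes down to logarithmically many by encoding class indices in binary, at the price of a somewhat longer verification.
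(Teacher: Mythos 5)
Your proof is correct, and it is the same underlying idea as the paper's --- attach fresh behaviour that is \emph{identical} for congruent states, \emph{infinitely different} for non-congruent ones, and makes the new states pairwise hyper-inequivalent and hyper-inequivalent to all old states --- but your gadget is the ``dual'' implementation of the paper's. The paper adds only two fresh letters $t$ and $x$ and one fresh state per congruence class: $\mu(q,t)$ is the index of $[q]_{\simeq}$, the $n$~fresh states form an $x$-cycle with a single accepting state, and classes are told apart by where they enter the cycle (so the alphabet stays of constant extra size and only about $\abs Q + n$ transitions are added). You instead add one fresh letter per class and only two fresh states (an accepting $c$-loop $\mathtt{u}$ and a dead state $\mathtt{d}$), which enlarges the alphabet by the index of~$\simeq$ and adds up to $\abs Q \cdot n$ transitions; since the index is at most~$\abs Q$, this is still polynomial, so it serves the NP-hardness reduction equally well (your observation that one could use $\log n$ letters via binary encoding would recover the paper's economy in a different way). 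Your verification of (i)--(iv) is sound: the trap structure guarantees left- and $\Sigma^*$-restricted right-languages are untouched, the induction for~(ii) uses exactly the congruence property the paper's (terser) proof appeals to, and the words $\#_{C}c^i$ witness the infinite symmetric differences needed for (iii) and~(iv). One property the surrounding argument later needs --- that the gadget introduces no new errors when only $k$-similar (hence hyper-equivalent, hence congruent) states are merged, because such states carry identical transitions on all fresh letters --- also holds verbatim for your variant.
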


\begin{proof}
  Let $n$~be the index of~$\simeq$ and $\theta \colon
  (Q/\mathord{\simeq}) \to [n]$ be an arbitrary bijection.  Moreover,
  let
  \begin{itemize}
  \item $P = Q \cup [n]$ and $F' = F \cup \{n\}$,
  \item $\Delta = \Sigma \cup \{t, x\}$ where $t,x \notin \Sigma$ are
    different, new letters,
  \item $\mu(q, \sigma) = \delta(q, \sigma)$ for every $q \in Q$
    and $\sigma \in \Sigma$,
  \item $\mu(q, t) = \theta([q]_{\mathord{\simeq}})$ for every
    $q \in Q$, and
  \item $\mu(i, x) = i + 1$ for every $i \in [n-1]$ and
    $\mu(n, x) = 1$.
  \end{itemize}
  Next, we discuss hyper-equivalence and $k$-similarity in the
  DFA~$N$.  Clearly, $i \not\sim j$ for all different $i, j \in [n]$.
  Since $\sim$~is a congruence, we obtain that $q_1 \not\sim q_2$ for
  all $q_1, q_2 \in Q$ such that $q_1 \not\simeq q_2$.
  Moreover, $i \not\sim
  q$ for every $i \in [n]$ and $q \in Q$.  The~(i)
  is obvious.  Finally, we can prove by an induction on the length of~$w
  \in \Delta^*$ that $w \in L_M(q_1) \simdiff L_M(q_2)$ if and only if
  $w \in L_N(q_1) \simdiff L_N(q_2)$ for all $q_1 \simeq q_2$.  \qed
\end{proof}

In the future, we will use the construction in the proof of
Lemma~\ref{lm:kCongII} as a~gadget with $n$~states and simply assume
that we can enforce that $q_1 \not\sim q_2$ for every $q_1
\not\simeq q_2$ and every congruence~$\simeq$ with index~$n$.  This
can be done since none of the newly added states is \mbox{$k$-similar}
to an existing state and all newly added states are pairwise
dissimilar.  Moreover, since we will only merge $k$-similar states and
$\mu(q_1, t) = \mu(q_2, t)$ for all $q_1 \sim_k q_2$, the
gadget does not introduce new errors.

\begin{lemma}[full version of Lemma~\ref{lm:kInlevel}]
  \label{lm:kInlevelII}
  For every subset~$S \subseteq Q \setminus \{q_0\}$ of states and
  mapping~$\mathord{\minlevel} \colon S \to \nat$, there is a DFA~$N
  = \langle Q \cup I, \Sigma \cup \Delta, \mu, q_0, F \rangle$ such
  that
  \begin{itemize}
  \item $L_M(q) = L_N(q)$ for every $q \in Q \setminus \{q_0\}$,
  \item $\abs{\mu^{-1}(i)} = 1$ for every $i \in I$,
  \item $\abs{\delta^{-1}(s) \simdiff \mu^{-1}(s)} = 1$ for $s
    \in S$ with $\minlevel(s) \geq 2$ and $\delta^{-1}(q) = \mu^{-1}(q)$
    for all remaining $q \in Q$, and
  \item $\inlevel_N(s) \geq \minlevel(s)$ for every $s \in S$.
  \end{itemize}
\end{lemma}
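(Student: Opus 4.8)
The plan is to hang, off the start state~$q_0$, a fresh one-way path of the prescribed length that terminates in~$s$, for each $s \in S$.  Concretely, for every $s \in S$ with $\minlevel(s) \geq 2$ I would introduce a fresh letter~$d_s$ and fresh states $i^s_1, \dotsc, i^s_{\minlevel(s)-1}$; putting $I = \makeset{i^s_j}{\minlevel(s) \geq 2,\ 1 \leq j \leq \minlevel(s)-1}$, $\Delta = \makeset{d_s}{\minlevel(s) \geq 2}$, letting $\mu$ agree with~$\delta$ on $Q \times \Sigma$, and in addition setting
\[ \mu(q_0, d_s) = i^s_1 \enspace, \qquad \mu(i^s_j, d_s) = i^s_{j+1}\ (1 \leq j < \minlevel(s)-1) \enspace, \qquad \mu(i^s_{\minlevel(s)-1}, d_s) = s \enspace. \]
All remaining transitions stay undefined and $F$~is unchanged.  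For $s \in S$ with $\minlevel(s) \leq 1$ nothing needs to be added, because $s \neq q_0$ together with minimality (hence trimness) of~$M$ already gives $\inlevel_M(s) \geq 1 \geq \minlevel(s)$.

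I would then dispatch the three easy bullets.  We get $\inlevel_N(s) \geq \minlevel(s)$ from $\mu(q_0, d_s^{\minlevel(s)}) = s$ and $\abs{d_s^{\minlevel(s)}} = \minlevel(s)$.  We get $\abs{\mu^{-1}(i^s_j)} = 1$ because $d_s^{\,j}$ is the unique word reaching~$i^s_j$: the letter~$d_s$ is enabled only at~$q_0$ and along its own chain, and $q_0$ has no incoming transition, which holds for the DFA of Definition~\ref{df:kInp}.  Finally $L_N(q) = L_M(q)$ for every $q \in Q \setminus \{q_0\}$, because no letter of~$\Delta$ is ever enabled from such a~$q$, so every run stays inside~$Q$ and mimics~$M$ (or dies); only $L_N(q_0)$ changes, which the statement permits.

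The genuinely delicate point is the left-language accounting.  Here the observation to exploit is that any word over $\Sigma \cup \Delta$ reaching a state of~$Q$ that is not already a word over~$\Sigma$ must --- since $d_s$ is dead at~$s$ and nowhere else inside~$Q$ enabled --- have the form $d_s^{\minlevel(s)}w$ with $w \in \Sigma^*$, so that $\mu^{-1}(q) = \delta^{-1}(q) \cup \bigcup_s \makeset{d_s^{\minlevel(s)}w}{w \in \Sigma^*,\ \delta(s,w)=q}$.  Thus $d_s^{\minlevel(s)}$ is the one genuinely new word into~$s$, and both $\abs{\delta^{-1}(s) \simdiff \mu^{-1}(s)} = 1$ and $\delta^{-1}(q) = \mu^{-1}(q)$ for the remaining states hinge on the structural fact that in~$M$ no nonempty $\Sigma$-path departs from a state of~$S$ with $\minlevel \geq 2$ and returns to such a state or lands on a state that must stay unchanged; for the concrete~$M$ of Definition~\ref{df:kInp} this is read off from the transition table (e.g.\ the vertex states are $\Sigma$-unreachable from one another).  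Establishing this ``no re-entry'' property precisely --- i.e.\ that splicing a path into~$s$ cannot silently enlarge the left-language of any state we promised to leave fixed --- is the main obstacle; the rest is routine bookkeeping.
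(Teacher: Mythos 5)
Your gadget is essentially the paper's: the paper also hangs a fresh path off~$q_0$, the only difference being that it uses a \emph{single} chain $1,\dotsc,n$ with $n=\max_{s\in S}\minlevel(s)$, labelled by one new letter~$d$ ($\mu(q_0,d)=1$, $\mu(i,d)=i+1$), and routes each $s$ with $\minlevel(s)\geq 2$ into it by a dedicated letter via $\mu(\minlevel(s)-1,d_s)=s$, whereas you build a separate $d_s$-labelled chain per~$s$ (costing $\sum_{s}(\minlevel(s)-1)$ new states instead of $\max_s\minlevel(s)$, which is immaterial for the reduction). The first, second and fourth bullets are handled the same way in both arguments, with the same implicit use of the fact that $q_0$ has no incoming transitions in the DFA of Definition~\ref{df:kInp}.

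Where your plan would fail is the way you propose to discharge the third bullet. The ``no re-entry'' property you intend to read off from Definition~\ref{df:kInp} is simply false for the intended application: with $S=\{1_0,2_0,3_0,1_1,2_1,\smiley_0,\frowny\}$ we have $\delta(1_0,b)=1_1$ with $1_0,1_1\in S$, and $\delta(1_0,e)=\blacksmiley\notin S$, so words $d_s^{\minlevel(s)}w$ (resp.\ $d^{\minlevel(s)-1}d_sw$ for the paper's gadget) with $w\neq\varepsilon$ \emph{do} enlarge the left languages of states you promised to leave fixed --- and this happens for the paper's construction exactly as for yours. Consequently the bullet cannot be rescued by a structural ``no re-entry'' argument; it has to be read, and proved, at the level of the single new word/transition entering each state from outside~$Q$. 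This is what the paper's proof does: it establishes $\mu^{-1}(i)=\{d^i\}$ for the chain states and that each $s$ with $\minlevel(s)\geq 2$ gains exactly one new incoming transition (from chain node~$\minlevel(s)-1$ via~$d_s$) while every other state of~$Q$ gains none; the additional words that pass \emph{through} such an~$s$ and continue into~$M$ are not excluded by this lemma at all, but are accounted for explicitly later, in Claim~\ref{clm:kErrors}, whose error words with prefix~$dwe$ are precisely those routed through the level gadget. So drop the attempt to establish no re-entry --- it does not hold and is not needed --- and state the bookkeeping in this weaker per-transition form, which your per-$s$ chains satisfy just as well as the paper's shared chain.
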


\begin{proof}
  Let $n = \max {} \{ \minlevel(s) \mid s \in S\}$~be the maximal
  requested level.  We construct the DFA~$N$ such that
  \begin{itemize}
  \item $I = [n]$ (supposing that $Q \cap \nat = \emptyset$),
  \item $\Delta = \{d\} \cup \{d_s \mid s \in S\}$ are new, different
    letters,
  \item $\mu(q, \sigma) = \delta(q, \sigma)$ for every $q \in Q$
    and $\sigma \in \Sigma$, and
  \item $\mu(q_0, d) = 1$ and $\mu(i, d) = i+1$ for every $i \in
    [n-1]$,
  \item $\mu(\minlevel(s)-1, d_s) = s$ for every $s \in S$ such that
    $\minlevel(s) \geq 2$.
  \end{itemize}
  Clearly, $L_M(q) = L_N(q)$ for every $q \in Q \setminus \{q_0\}$
  and $\inlevel_N(s) \geq \minlevel(s)$ for every $s \in S$.  Finally,
  $\mu^{-1}(i) = \{d^i\}$ for every $i \in [n]$, which can be used to
  prove the remaining statements.  \qed
\end{proof}

We can use the first gadget to make sure that all newly introduced
states in the previous construction are $k$-dissimilar.  Mind that a
renaming can be used to ensure that $Q \cap \nat = \emptyset$.


We use the gadgets of Lemmata~\ref{lm:kCongII} and~\ref{lm:kInlevelII}
as follows: We select the states $S = \{1_0, 2_0, 3_0, 1_1, 2_1,
\smiley_0, \frowny \}$ and let $\mathord{\minlevel} \colon S \to \nat$ be
such that
\begin{alignat*}{6}
  && \minlevel(1_0) &= \minlevel(2_0) &{}= \minlevel(3_0) &= 3s - 1\\
  & \minlevel(\smiley_0) &{}= \minlevel(\frowny) &= \minlevel(1_1) &{}=
  \minlevel(2_1) &= k + 1 \enspace.
\end{alignat*}
Moreover, let $\mathord{\simeq} \subseteq Q \times Q$ be the
equivalence induced by the partition (single-element
classes are omitted)
\begin{align*} &\{\bot, \blacksmiley, \frowny\} \cup \{\smiley_i \mid
  0 \leq i \leq s\} \cup \{i_j \mid i \in [2], j \in [\ell]\} && \{v
  \mid v \in V\} \cup \{1_0, 2_0, 3_0 \} \enspace.
\end{align*}
It can easily be checked that $\mathord{\simeq}$~is a congruence.  Let
$M = \langle Q, \Sigma, \delta, 0, F \rangle$ be the DFA obtained
after adding the gadgets of Lemmata~\ref{lm:kCongII}
and~\ref{lm:kInlevelII} using the above parameters.  The obtained
DFA~$M$ is illustrated in Fig.~\ref{fig:main_states}.  We state some
simple properties that follow immediately from the gadgets.

\begin{lemma}
  \label{lm:kTriv}
  We observe the following simple properties:
  \begin{itemize}
  \item Hyper-inequivalence is as indicated in Fig.~\ref{fig:main_states};
    i.e., $q_1 \not\sim q_2$ for every $q_1 \not\simeq q_2$.
  \item The levels are as indicated in
    Fig.~\ref{fig:main_states}.  More specifically,
    for $j \in [\ell]$, $i \in [s-1]$, and $v \in V$.
    \begin{align*}
      \inlevel_M(1_0) &= \inlevel_M(2_0) = 3s - 1 &
      \inlevel_M(\bot) &= \infty \\
      \inlevel_M(3_0) &= 3s - 1 & \\
      \inlevel_M(1_j) &= \inlevel_M(2_j) = k+j & \inlevel_M(v) &= s+1 \\[1ex]
       \inlevel_M(\smiley_i) &= k+i+1 & \inlevel_M(\smiley_s) &= k +
       \ell + 1 \\
       \inlevel_M(\blacksmiley) &= 3s & \inlevel_M(i) &= i \\
       \inlevel_M(\frowny) &= k+1 & \inlevel_M(s) &= s
    \end{align*}    
  \end{itemize}
\end{lemma}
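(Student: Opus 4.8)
I would prove the two halves of Lemma~\ref{lm:kTriv} separately. For the hyper-inequivalence claim the idea is that it is essentially inherited from the congruence gadget. Having checked, as the surrounding text asserts, that $\simeq$ is a congruence on the DFA of Definition~\ref{df:kInp}, item~4 of Lemma~\ref{lm:kCongII} already delivers $q_1 \not\sim q_2$ for all $q_1 \not\simeq q_2$ once that gadget is attached. What remains is to see that this survives the later additions: attaching the level gadget of Lemma~\ref{lm:kInlevelII} afterwards leaves $L_M(q)$ unchanged for every $q \neq 0$, so it preserves the hyper-equivalence classes of all non-initial states; the initial state~$0$ forms a singleton $\simeq$-class and stays hyper-inequivalent to every other state; and the re-application of the congruence gadget to the freshly introduced level states only adds states that are pairwise hyper-dissimilar and dissimilar from everything else. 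So this part is short.

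The substance of the proof is the in-level table, which I would establish by a straightforward bottom-up computation of the longest word in each state's left-language in the DFA of Definition~\ref{df:kInp} augmented by both gadgets. First I would record the two facts that drive everything: $\bot$ carries a self-loop, so $\inlevel_M(\bot)=\infty$; and the level gadget makes each $s \in S=\{1_0,2_0,3_0,1_1,2_1,\smiley_0,\frowny\}$ reachable via $d^{\minlevel(s)-1}d_s$ of length exactly $\minlevel(s)$ and by no longer word through the gadget (using $\minlevel(s)-1\le k<\abs I$). Then I would propagate along the reachability order: the counter states get $\inlevel_M(i)=i$ because the only predecessor of~$i$ is the counter state~$i-1$, whence $\inlevel_M(v)=\inlevel_M(s)+1=s+1$ for $v\in V$, since $\delta(s,v)=v$ is the only incoming transition of~$v$; the states $1_0,2_0,3_0$ have no incoming transition in Definition~\ref{df:kInp} and none from the gadgets besides the level hook, so they get exactly $3s-1$; $\blacksmiley$, whose predecessors are the $v\in V$ (level $s+1$) and $1_0,2_0,3_0$ (level $3s-1$), gets $(3s-1)+1=3s$; $\frowny$ gets $k+1$ from the gadget, which beats the base path $v_2\xrightarrow{e}\frowny$ of length $s+2$ because $k>4s$, and similarly $1_1,2_1$ get $k+1$ (the gadget value beating the longest other incoming path, of length $3s$ via $1_0$ resp.\ $2_0$), which propagates along the $b$-chains to $\inlevel_M(1_j)=\inlevel_M(2_j)=k+j$ for $j\in[\ell]$; finally $\smiley_0$ gets $k+1$ (gadget beating the base value $s+2$), so $\smiley_1$ gets $\max(k+1,\inlevel_M(\blacksmiley))+1=k+2$ using $k+1>3s$, which propagates to $\inlevel_M(\smiley_i)=k+i+1$ for $i\in[s-1]$, while $\smiley_s$ additionally receives edges from $1_\ell$ and $2_\ell$ (level $k+\ell$) that beat the one from $\smiley_{s-1}$ (level $k+s$) because $\ell=k-2s>s$, giving $\inlevel_M(\smiley_s)=k+\ell+1$.

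The step I expect to be the main obstacle is precisely this bookkeeping: for each state one must be exhaustive about its incoming transitions — those of Definition~\ref{df:kInp} together with those contributed by the congruence gadget, the level gadget, and the second-round congruence patch — and then use the numerical constraints $k>4s$ and $\ell=k-2s$ (and, for $\frowny$ and $\smiley_0$, the fact that the gadget path overtakes the short base path) to select the longest incoming path. There is no conceptual difficulty, since the reachability graph is finite and acyclic except for the self-loop at~$\bot$; it is a careful-but-routine verification that the claimed values arise by an easy induction on reachability.
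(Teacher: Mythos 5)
Your proposal is correct and follows essentially the same route as the paper, which simply invokes Lemmata~\ref{lm:kCongII} and~\ref{lm:kInlevelII} together with ``trivial inductions''\emdash you merely carry out those inductions explicitly (congruence gadget for the dissimilarities, then a bottom-up longest-path computation using the level-gadget words and $k>4s$, $\ell=k-2s$), and your values check out. One negligible imprecision: the full DFA is not acyclic apart from the $\bot$ self-loop (the congruence gadget has an $x$-cycle), but since no transition leaves that gadget back into $Q$, this does not affect any of the in-level computations.
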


\begin{proof}
  Both properties follow immediately from Lemmata
  \ref{lm:kCongII}~and~\ref{lm:kInlevelII} and trivial inductions.  \qed 
\end{proof}


\begin{lemma}
  \label{lm:kB}
  \mbox{ }
  \begin{enumerate}
  \item The difference between the right-languages of the states
    $\{1_0, 2_0, 3_0\}$ is small.  More precisely,
    \begin{align*}
      L_M(1_0) \simdiff L_M(2_0) &= \{ab^{\ell-1} \} \\
      L_M(1_0) \simdiff L_M(3_0) &= \{ab^{\ell-1}, ab^\ell \} \\
      L_M(2_0) \simdiff L_M(3_0) &= \{ab^\ell \} \enspace. 
    \end{align*}
  \item Additionally, $\abs{\delta^{-1}(v)} = 2^s$ for
    every $v \in V$, and 
    \begin{align*}
      \abs{L_M(\smiley_i) \simdiff L_M(\frowny)} &= 2^{s-i} &&&
      d(\smiley_i, \frowny) &= s-i+1
      \\
      \abs{L_M(\smiley_i) \simdiff L_M(\blacksmiley)} &\geq 2^{s-1}
      &&& d(\smiley_i, \blacksmiley) &= s+1 \\     
      \abs{L_M(\blacksmiley) \simdiff L_M(\frowny)} &= 2^{s-1} &&&
      d(\blacksmiley, \frowny) &= s+1
      \\
      \abs{L_M(1_j) \simdiff L_M(\blacksmiley)} \geq
      \abs{L_M(\blacksmiley)} &= 2^{s-1} &&&
      d(1_j, \blacksmiley) &= \max(s,\ell-j+1)+1 \\
      \abs{L_M(2_j) \simdiff L_M(\blacksmiley)} \geq
      \abs{L_M(\blacksmiley)} &= 2^{s-1} &&& d(2_j, \blacksmiley)
      &= \max(s,\ell-j+1)+1
    \end{align*}
    for every $0 \leq i \leq s$ and $j \in [\ell]$.
  \item Finally, $\mathord{\sim_k}$ is the reflexive and
    symmetric closure of
    \begin{align*}
      &\phantom{{}\cup{}} \mathord{\equiv} \cup \{ (i_0, v) \mid i \in
      [3], v \in V \} \cup
      \{ (v_1, v_2) \mid v_1, v_2 \in V\} \cup {} \\
      &{}\cup \{ (\blacksmiley, \smiley_i) \mid 0 \leq i \leq s\} \cup
      \{(\blacksmiley, \frowny), (\blacksmiley, \bot) \} \cup
      \{(\blacksmiley, i_j) \mid i \in [2], s + 1 < j \leq
      \ell \}.
      \end{align*}
  \end{enumerate}
\end{lemma}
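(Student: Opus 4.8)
The plan is to read everything off the transition function of~$M$ (Definition~\ref{df:kInp}) together with the gadget properties (Lemmata~\ref{lm:kCongII} and~\ref{lm:kInlevelII}) and the in-levels already recorded in Lemma~\ref{lm:kTriv}.

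For parts~(1) and~(2) I would first write down the right-languages over~$\Sigma$. Tracing the $b$-chains gives $L_M(1_j) = \{b^{\ell-j}, b^{\ell-j+1}\}$ and $L_M(2_j) = \{b^{\ell-j+1}\}$ for $j \in [\ell]$ (the discrepancy comes from $1_\ell \in F$ but $2_\ell \notin F$), as well as $L_M(\smiley_i) = \{a,b\}^{s-i}$, $L_M(\blacksmiley) = a \cdot \{a,b\}^{s-1}$, and $L_M(\frowny) = L_M(\bot) = \emptyset$; moreover the languages of $1_0, 2_0, 3_0$ differ from one another only on their chain words, since all their other transitions (in particular all their $e$-transitions, which lead to~$\blacksmiley$) coincide. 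Part~(1) is then a direct comparison, and in part~(2) the cardinalities of the symmetric differences are read off from these explicit descriptions, while each distance equals one plus the length of the longest word of the corresponding symmetric difference (all the right-languages involved being finite over~$\Sigma$). The equality $\abs{\delta^{-1}(v)} = 2^s$ holds because in the base DFA the words leading to~$v$ are exactly those of the form $wv$ with $w \in \{a,b\}^s$, and the level gadget of Lemma~\ref{lm:kInlevelII} adds no incoming transition to~$v$ because $v \notin S$. Finally, in each of parts~(1) and~(2) the two compared states lie in a single class of the congruence~$\simeq$, so by Lemma~\ref{lm:kCongII}(2) these symmetric differences do not change when the gadgets are attached, and the identities hold for the final DFA~$M$.

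For part~(3) the crucial reduction is that $q_1 \not\simeq q_2$ implies $q_1 \not\sim q_2$, hence $q_1 \not\sim_k q_2$, by Lemma~\ref{lm:kCongII}(4); so $k$-similar states must lie in the same $\simeq$-class, and only the two non-singleton classes $V \cup \{1_0, 2_0, 3_0\}$ and $\{\bot, \blacksmiley, \frowny\} \cup \{\smiley_i \mid 0 \le i \le s\} \cup \{i_j \mid i \in [2], j \in [\ell]\}$ need to be examined. For each pair $(q, p)$ inside a class I would substitute the distance from parts~(1)--(2) and the in-levels from Lemma~\ref{lm:kTriv} into the inequality $d(q, p) + \min(k, \inlevel_M(q), \inlevel_M(p)) \le k$ of Definition~\ref{df:kEQstate} and check it using $\ell = k - 2s$, $k > 4s$, and $s > \log\abs V + 2$. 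In the vertex class, $d(i_0, v) \le s + 2$ and $d(v_1, v_2) \le s + 2$ while the smaller in-level is $\inlevel_M(v) = s + 1$, so $(s+2)+(s+1) \le k$ makes these pairs $k$-similar, whereas $d(i_0, i'_0) \ge \ell + 1$ forces $(\ell+1)+(3s-1) = k + s > k$, so $1_0, 2_0, 3_0$ are pairwise $k$-dissimilar. In the big class the only state with in-level below~$k$ is~$\blacksmiley$, with $\inlevel_M(\blacksmiley) = 3s$; since $d(\blacksmiley, \smiley_i) = d(\blacksmiley, \frowny) = d(\blacksmiley, \bot) = s + 1$ and $4s + 1 \le k$, these pairs are $k$-similar, while $d(\blacksmiley, i_j) = \max(s, \ell - j + 1) + 1$ gives $k$-similarity exactly when $\max(s, \ell-j+1) + 1 + 3s \le k$, i.e.\ exactly when $j > s + 1$; every remaining pair in the big class has both in-levels exceeding~$k$ and positive distance, so $\min(k, \cdot, \cdot) = k$ excludes it, the only $\equiv$-pair there being $(\bot, \frowny) \in \mathord{\equiv} \subseteq \mathord{\sim_k}$. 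Collecting these cases gives exactly the stated relation.

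The computations are elementary; the care needed is that the construction is calibrated so that several key inequalities barely hold (such as $4s + 1 \le k$, or the sharp threshold $j > s + 1$), which forces one to track word lengths exactly rather than up to constants. A second point to watch is that the congruence gadget of Lemma~\ref{lm:kCongII} endows every state of~$Q$ with an infinite right-language (through its fresh letter), so one must use both that this infinite part is identical across a $\simeq$-class --- which keeps parts~(1) and~(2) valid --- and that it diverges across distinct classes --- which is exactly what yields the $\not\sim_k$ judgements behind part~(3).
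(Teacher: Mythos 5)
Your overall route is the same as the paper's: parts~(1) and~(2) are read off the transition table (with the observation that the congruence gadget changes right-languages only by a part that is identical within each $\simeq$-class, so symmetric differences and distances are unaffected, and the level gadget changes no right-language), and part~(3) is obtained by using Lemma~\ref{lm:kCongII} to confine candidate pairs to the two non-singleton $\simeq$-classes and then plugging the distances of (1)--(2) and the in-levels of Lemma~\ref{lm:kTriv} into Definition~\ref{df:kEQstate}. Your computations for $i_0 \not\sim_k i'_0$, for $v_1 \sim_k v_2$, for the pairs with $\blacksmiley$ (including the sharp threshold $j > s+1$), and the reduction of all remaining pairs in the big class to $\equiv$ via $\min(k,\cdot,\cdot)=k$ all match the paper.

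There is, however, one concrete slip: the bound $d(i_0, v) \leq s+2$ is not correct for all $i$. The state $3_0$ has no $b$/$a$-chain at all, while every $v \in V$ has $\delta(v,a)=1_1$ and hence $ab^{\ell-1}, ab^{\ell} \in L_M(v)$; these words lie in $L_M(v) \simdiff L_M(3_0)$, so $d(3_0, v) = \ell+2$, which is much larger than $s+2$. (Under your own reading of the transition table, where the chains out of $1_0,2_0$ start with $b$, the same happens for $d(1_0,v)$ and $d(2_0,v)$; note also that with that reading the symmetric differences in part~(1) come out as $\{b^{\ell}\}$, $\{b^{\ell}, b^{\ell+1}\}$, $\{b^{\ell+1}\}$ rather than the displayed sets, so your ``direct comparison'' does not literally reproduce the statement --- an inconsistency you inherit from the paper's Definition~\ref{df:kInp} versus the lemma, but one you should have flagged rather than passed over.) The conclusion $i_0 \sim_k v$ survives, but via the weaker estimate the paper itself uses: $d(i_0,v) + \inlevel_M(v) \leq (\ell+2) + (s+1) = k - s + 3 \leq k$, which needs $s \geq 3$ and hence the hypothesis $s > \log(\abs V)+2$; your inequality $(s+2)+(s+1) \leq k$ is not the one that is actually required. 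With that distance corrected (and the chain-letter convention fixed consistently with part~(1)), the rest of your argument goes through as in the paper.
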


\begin{proof}
  Properties (i)~and~(ii) can be observed easily.  We turn to (iii):
  let $i, i' \in [3]$ such that $i \neq i'$.  Since $\inlevel_M(i_0) =
  \inlevel_M(i'_0) = 3s-1$ by Lemma~\ref{lm:kTriv} and $d(i_1, i'_1)
  \geq \ell-1$ by (i), we obtain $d(i_0, i'_0) + 3s \geq (k -
  2s) - 1 + 3s = k + s - 1 \geq k$, which proves that $i_0 \not\sim_k
  i'_0$.  Similarly, we can compute $d(i_0, v) + s + 1 \leq$ $(k - 2s + 1) +
  s + 2 =$ $ k - s + 3 \leq k$ for every $v \in V$, which proves $i_0 \sim_k
  v$.  Now, let $v_1, v_2 \in V$.  Then
  \[
  \inlevel_M(v_1) = \inlevel_M(v_2) = s+1 \qquad \text{and} \qquad
  d(v_1, v_2) \leq s + 2 \enspace,
  \]
  which yields $2s+3 < k$ and
  proves that $v_1 \sim_k v_2$.  Since $\inlevel(\blacksmiley) = 3s$
  and $d(\smiley_0, \blacksmiley) = s+1$, we obtain $3s + s + 1 = 4s +1 \leq k$,
  which proves that $\smiley_0 \sim_k \blacksmiley$.  In essentially
  the same way, we can prove all similarities to~$\blacksmiley$, which
  yields that we proved all similarities.  Clearly, two states $q_1,
  q_2 \in Q$ such that $\min(\inlevel_M(q_1), \inlevel_M(q_2)) \geq k$
  are $k$-similar if and only if $q_1 \equiv q_2$, which proves the
  nontrivial dissimilarities.  \qed
\end{proof}

Using a proper $3$-colouring $c \colon V \to [3]$ we define the DFA $c(m)$.

\begin{definition}
  \label{df:kOut}
  Let $c \colon V \to [3]$ be a $3$-colouring and $c(M) = \langle P,
  \Sigma, \mu, 0, F\rangle$ be the DFA such that
  \begin{itemize}
  \item $P = \{\bot\} \cup \{i_j \mid i \in [2], j \in [\ell]\} \cup
    [0, s] \cup \{\smiley_i \mid 0 \leq i \leq s\}$
  \item for every $v \in V$, $e = \{v_1, v_2\} \in E$ with $v \notin
    e$ and $v_1 < v_2$, $i \in [s]$, $j \in [\ell]$, and $j' \in [3]$
    \begin{align*}
      \mu(i-1, a) &= i & \mu(\smiley_{i-1}, a) &= \smiley_i &
      \mu(1_{j-1}, a) &= 1_j \\
      \mu(i-1, b) &= i & \mu(\smiley_{i-1}, b) &= \smiley_i
      & \mu(2_{j-1}, a) &= 2_j \\
      \mu(s, v) &= c(v)_0 & \mu(1_\ell, a) &= \smiley_s & \mu(2_\ell,
      a) &= \smiley_s \\ 
      \mu(j'_0, e) &= \begin{cases}
        \smiley_0 & \text{, if } c(v_2) \neq j' \\
        \bot & \text{, otherwise.}
      \end{cases}
    \end{align*}
  \item For all remaining cases, we set $\mu(q, \sigma) = \bot$.
  \end{itemize}
\end{definition}

The DFA $c(M)$ is illustrated in Figure~\ref{fig:Hyper4}.
We first show that $c(M)$ is $k$-similar to $M$ and then that
it is $k$-minimal.

\begin{lemma}
  \label{lm:kYes}
  The constructed DFA~$c(M) = \langle P, \Sigma, \mu, 0, F \rangle$ is
  $k$-similar to~$M$.
\end{lemma}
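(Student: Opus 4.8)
The plan is to show $d(L(M), L(c(M))) \le k$ by exhibiting, for each state of $M$ that is either unchanged or merged, a bound on the distance between its right-language in $M$ and the right-language of the corresponding state in $c(M)$, and then invoking the recursion~\eqref{eq:distance recursion} together with the ultrametric property of $d$. Concretely, $c(M)$ is obtained from $M$ by merging each vertex state $v \in V$ into $c(v)_0$, deleting now-unreachable states such as $\blacksmiley$, and redefining the $e$-transitions out of $1_0, 2_0, 3_0$. So first I would set up the natural map $h \colon Q \to P$ sending $v \mapsto c(v)_0$ and fixing all surviving states, and argue that $h$ respects the transition structure except on the transitions that were deliberately changed.

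The heart of the argument is to track where the two DFAs can differ. By Definition of $c(M)$, for every edge $e = \{v_1,v_2\} \in E$ with $v_1 < v_2$ we have $\mu(c(v_2)_0, e) = \bot$ while (if $j' \ne c(v_2)$) $\mu(j'_0, e) = \smiley_0$; on the original DFA $\delta(v_1, e) = \smiley_0$, $\delta(v_2, e) = \frowny$, and $\delta(v,e) = \blacksmiley$ for $v \notin e$. Using that $c$ is proper, merging $v_1$ and $v_2$ into distinct colour-states never forces a single $e$-transition to imitate both $\smiley_0$ and $\frowny$. The only genuine mismatches are: (a)~on the letter $e$ from the colour state $c(v_2)_0$, where $M$ goes to $\frowny$ but $c(M)$ goes to $\bot$, and (b)~on letters $e$ such that $v \notin e$, where $M$ goes to $\blacksmiley$ but $c(M)$ goes to $\smiley_0$ or $\bot$. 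In case~(a), $d(\frowny, \bot) = d(L_M(\frowny), \emptyset)$, and since $L_M(\frowny)$ is finite with longest word of length at most $s$ (it reaches $\smiley$ via $a^{s}\!$-type paths of bounded length — this is read off from Lemma~\ref{lm:kB}), this distance is at most $s+1$. In case~(b), $d(\blacksmiley, \smiley_0) = s+1$ and $d(\blacksmiley, \bot) = \inlevel$-independent but again bounded by $|L_M(\blacksmiley)|$'s longest word, which is $s+1$ by Lemma~\ref{lm:kB}. Hence every local transition discrepancy contributes a distance of at most $s+1$.

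Now I would propagate these bounds backwards. The colour states $1_0, 2_0, 3_0$ are reached only through the prefix $a^s v$ for $v \in V$, i.e.\ $\delta(s, v) = v$ in $M$ but $\mu(s, v) = c(v)_0$ in $c(M)$; and $s$ itself is reached by $a^s$ from the start state. So a discrepancy at depth $\le s+1$ below a colour state, composed with the $s+1$ further letters needed to reach the colour state from the start, gives words of length at most $2s+2$ on which $L(M)$ and $L(c(M))$ disagree. Since $d$ is an ultrametric and $k > 4s > 2s+2$, we get $d(L(M), L(c(M))) \le 2s+2 < k$, i.e.\ $M \sim_k c(M)$. I would also need to check that the gadget states added via Lemmata~\ref{lm:kCongII} and~\ref{lm:kInlevelII} survive unchanged into $c(M)$ (the merges only involve $k$-similar states, on which the gadget transitions agree, as already noted after Lemma~\ref{lm:kCongII}), so they introduce no new discrepancy.

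I expect the main obstacle to be bookkeeping rather than conceptual: one must carefully verify that \emph{no other} transition of $M$ is altered by the construction of $c(M)$ — in particular that states like $\blacksmiley$ really do become unreachable after the merges, so that the only surviving sources of error are the ones catalogued above — and that the redefinition $\mu(c(v_1)_0, e) = \delta(v_1, e)$, $\mu(c(v_2)_0, e) = \delta(v_2, e)$ mentioned in the body text is consistent with Definition~\ref{df:kOut}'s formulation in terms of $c(v_2) \neq j'$. Getting the exact longest-word lengths of the finite languages $L_M(\frowny)$, $L_M(\blacksmiley)$, $L_M(\smiley_i)$ right (so that the final bound is genuinely below $k$) is the one place where I would slow down and use Lemma~\ref{lm:kB}(ii) explicitly.
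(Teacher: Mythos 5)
Your overall strategy (track the transitions where $c(M)$ deviates from $M$, bound the distance contributed by each deviation, and push the bound through the in-levels of the merged states) is viable, but your catalogue of discrepancies is incomplete, and the omissions are exactly the ones that produce the longest error words, so your stated bound is false. In $M$ we have $\delta(v,a)=1_1$, and $L_M(1_1)$ contains the words $b^{\ell-1},b^{\ell}$ with $\ell=k-2s$; after merging $v$ into $c(v)_0$, the $a$-successor of the colour state cannot reproduce this for every colour (indeed the paper's proof of Lemma~\ref{lm:kMain} counts the words $uvab^{j}$, $j\in\{\ell-1,\ell\}$, among the errors of $c(M)$). These error words have length up to $s+2+\ell=k-s+2$, which is far larger than your claimed $2s+2$; so the assertion that ``every local transition discrepancy contributes a distance of at most $s+1$'' and the conclusion $d(L(M),L(c(M)))\leq 2s+2$ are both wrong. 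Similarly, the colour states are not reached only via $\{a,b\}^{s}v$: the level gadget of Lemma~\ref{lm:kInlevelII} gives them in-level $3s-1$, so even the $e$-letter discrepancies yield error words of length up to $4s>2s+2$. The lemma survives only because the parameters were chosen so that $s+2+\ell<k$ and $4s<k$, a fact your argument never invokes; without it the proof does not go through.

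A second, smaller misreading: in case~(a) you bound $d(\frowny,\bot)$ by claiming $L_M(\frowny)$ is finite with words of length at most $s$ reaching $\smiley$. In fact $\frowny$ has no transitions into the $\smiley$-chain at all; it is equivalent to $\bot$ (both carry the same infinite right-language coming from the cycle of the congruence gadget of Lemma~\ref{lm:kCongII}), so $d(\frowny,\bot)=0$ and redirecting $c(v_2)_0$ on~$e$ to $\bot$ causes no error whatsoever. The paper's proof avoids these pitfalls by a different bookkeeping: it records the exact right-languages of the surviving states ($L_{c(M)}(\bot)=L_M(\bot)=L_M(\frowny)$, $L_{c(M)}(1_j)\cap\Delta^{*}=\{b^{\ell-j},b^{\ell-j+1}\}$, $L_{c(M)}(\smiley_i)\cap\Delta^{*}=\{a,b\}^{s-i}$, etc.) and then combines them with the in-level and distance computations of Lemmas \ref{lm:kTriv}~and~\ref{lm:kB}, which is where the dependence on $\ell=k-2s$ enters. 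To repair your argument you would need to add the $a$-transition and gadget-entry discrepancies to your case analysis and verify explicitly that all resulting error words have length below~$k$ using the chosen values of $\ell$, $s$, and~$k$.
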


\begin{proof}
  Let $\Delta$ be the alphabet of letters without the letters used by
  the gadgets, which are collected in~$\Gamma$.  We first observe the
  equalities
  \begin{align*}
    L_N(\smiley_s) &= L_M(\smiley_s) & L_N(\smiley_s) \cap \Delta^* &=
    \{\varepsilon\} \\
    L_N(\bot) &= L_M(\bot) = L_M(\frowny) & L_N(\bot) \cap \Delta^* &=
    \emptyset \\
    L_N(1_j) &= L_M(1_j) & L_N(1_j) \cap \Delta^* &= \{ b^{\ell - j},
    b^{\ell - j + 1} \} \\
    L_N(2_j) &= L_M(2_j) & L_N(2_j) \cap \Delta^*&= \{ b^{\ell - j +
      1} \} \\[1ex]
    L_N(\smiley_i) &= L_M(\smiley_i) & L_N(\smiley_i) \cap \Delta^* &=
    \{a, b\}^{s - i} \\
    \mu^{-1}(i) &= \delta^{-1}(i) & \mu^{-1}(i) &= \{a, b\}^i \enspace.
  \end{align*}
  In addition, since $\{ i_j \mid i \in [2], j \in [\ell]\}$ forms a
  single class of~$\simeq$ in both $c(M)$~and~$M$ it holds that
  \[ L_{c(M)}(i_j) \cap \Delta^*\Gamma\Sigma^* = L_M (i'_{j'}) \cap
  \Delta^*\Gamma\Sigma^* \] for all $i, i' \in [2]$ and $j, j' \in
  [\ell]$.  From those statements and simple 
  applications of the
  statements of Lemmas \ref{lm:kTriv}~and~\ref{lm:kB} we can easily
  conclude that $M$~and~$c(M)$ are $k$-similar.  \qed
\end{proof}

\begin{figure}
  \centering
  \includegraphics[scale=0.7]{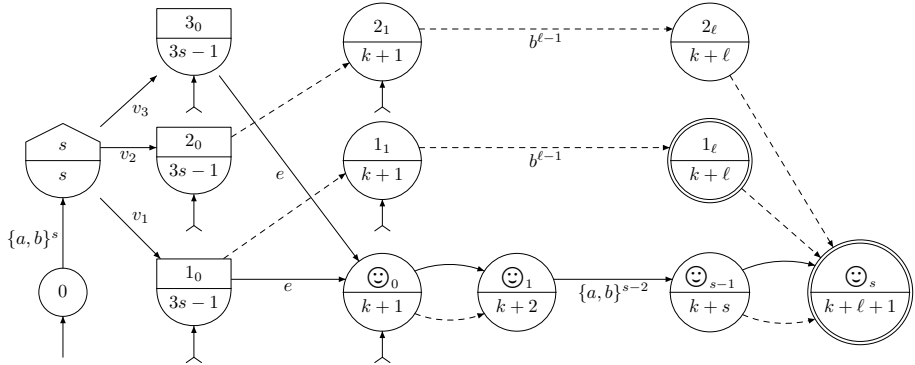}
  \caption{Illustration of the DFA~$c(M)$ of
    Section~\protect{\ref{sec:kCons}}}
  \label{fig:Hyper4}
\end{figure}

Note, that all states in $c(M)$ are pairwise
$k$-dissimilar by Lemma~\ref{lm:kB}.  Consequently, they form a
maximal set of pairwise $k$-dissimilar states in~$M$, which yields
that~$\abs P$ coincides with the number of states of all $k$-minimal
DFA for~$M$ by Lemma~\ref{cor:kSet}. Thus DFA $M$ is $k$-minimal.

We finally show that our construction is correct.

\begin{proof}[of Lemma~\ref{lm:kMain}]
  Let $N = (P, \Sigma, p_0, \mu, F')$ be a DFA that is $k$-minimal
  for~$M$.  We select a maximal set~$Q'$ of pairwise $k$-dissimilar
  states by
  \[ Q' = \{\bot\} \cup \{i_j \mid i \in [2], j \in [\ell]\} \cup [0,
  s] \cup \{\smiley_i \mid 0 \leq i \leq s\} \enspace. \] Let $h
  \colon Q' \to P$ be the bijection of Corollary~\ref{cor:multisets}
  using the maximal set~$Q'$.  Let \[ S = \{ \bot \} \cup \{ \smiley_i
  \mid 0 \leq i \leq s \} \cup \{ i_j \mid i \in [2], j \in [\ell]\}
  \enspace. \] Consequently, $q \equiv h(q)$ for every $q \in S$.
  From $M \sim_k N$, which yields $M \sim N$, we can conclude that
  $\delta(w) \sim \mu(w)$ for every $w \in \Sigma^*$.  Let $w \in \{a,
  b\}^i$ for some $0 \leq i \leq s$.  Then $\mu(w) = h(i)$ because
  $\mu(w) \sim \delta(w) = i$ and $h(i)$~is the only state~$p \in P$
  such that $p \sim i$.  Now, let $w = uv$ with $u \in \{a, b\}^s$ and
  $v \in V$.  Since $\delta(uv) = v$ and $v \sim i_0$ only for~$i \in
  [3]$, we obtain $\mu(uv) \in \{h(1_0), h(2_0), h(3_0)\}$.  From this
  behaviour we deduce a colouring $c \colon V \to [3]$ by $c(v) = i$
  if and only if $\mu(uv) = h(i_0)$ for every $v \in V$.  Note that
  this definition does not depend on the choice of~$u$ because $\mu(u)
  = \mu(u') = h(s)$ for all $u, u' \in \{a, b\}^s$.

  \begin{clm}
    \label{clm:kErrors}
    The DFA~$N$ commits at least
    \[ m = 2^{2s-1} \cdot \abs E \cdot (\abs V - 2) + 3 \cdot 2^{s-1}
    \cdot \abs E \] errors with prefix
    \begin{itemize}
    \item $uve$ where $u \in \{a, b\}^*$, $v \in V$ is a vertex, and
      $e \in E$ such that $v \notin e$, or
    \item $dwe$ where $d$~is the main symbol of the level gadget, $w
      \in \Sigma^*$, and $e \in E$.
    \end{itemize}
    Exactly $m$~such errors are committed if the following two
    conditions, called~$(\dagger)$, are fulfilled:
    \begin{enumerate}
    \item $\mu(uve) \in \{h(\smiley_1), h(\bot) \}$ for every $u \in
      \{a, b\}^*$, $v \in V$~and~$e \in E$ with $v \notin e$.
    \item $\mu(dwe) \in \{h(\smiley_1), h(\bot) \}$ for every $w \in
      \Sigma^*$~and~$e \in E$.
    \end{enumerate}
  \end{clm}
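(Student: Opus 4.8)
The plan is to adapt the proof of Claim~\ref{clm:EV errors} to this more elaborate construction. The key point is that every prefix named in the statement leads in~$M$ to the state~$\blacksmiley$: for a vertex~$v$ and an edge~$e$ with $v \notin e$, every $u \in \{a,b\}^s$ satisfies $\delta(u) = s$, $\delta(s,v) = v$, and $\delta(v,e) = \blacksmiley$; and for each $i \in [3]$ the word $d^{3s-2}d_{i_0}$ (which begins with the main level-gadget symbol~$d$) is the unique word realising $\inlevel_M(i_0) = 3s-1$, so $d^{3s-2}d_{i_0}e$ reaches~$\blacksmiley$ since $\delta(i_0,e) = \blacksmiley$. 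All these prefixes are pairwise incomparable (the first family begins with a letter of~$\{a,b\}$ and has length~$s+2$, the second begins with~$d$ and has length~$3s$), so the associated sets of error words are pairwise disjoint and may be counted separately. Since $L_M(\blacksmiley)$ is infinite (its $t$-transition feeds the $x$-cycle of the congruence gadget), $M \sim_k N$ forces $\mu$ to be defined on each such prefix~$w$, and then the number of errors with prefix~$w$ equals $\abs{L_M(\blacksmiley) \simdiff L_N(\mu(w))}$.

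The core of the argument is to determine~$\mu(w)$. From $M \sim_k N$, hence $M \sim N$, we get $\mu(x) \sim \delta(x)$ for every word~$x$. First, $s$ is hyper-equivalent to no state but itself (its $\simeq$-class is a singleton, so by Lemma~\ref{lm:kCongII} nothing else is hyper-equivalent to~$s$), whence $\mu(u) = h(s)$ for every $u \in \{a,b\}^s$, where $h\colon Q' \to P$ is the bijection of Corollary~\ref{cor:multisets} applied to the maximal $k$-dissimilar set~$Q'$ chosen in the proof of Lemma~\ref{lm:kMain} (which contains $1_0,2_0,3_0$ and no vertex state). Likewise $d^{3s-2}d_{i_0}$ is the unique word realising $\inlevel_M(i_0)$, so $\mu(d^{3s-2}d_{i_0}) = h(i_0)$. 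Hence $\mu(uv) = \mu(h(s),v) \sim v$, and since $[v]_{\simeq} = V \cup \{1_0,2_0,3_0\}$ while $Q' \cap [v]_{\simeq} = \{1_0,2_0,3_0\}$, we obtain $\mu(uv) \in \{h(1_0),h(2_0),h(3_0)\}$; setting $c(v) = i$ whenever $\mu(uv) = h(i_0)$ gives a colouring that does not depend on~$u$. Consequently $\mu(uve) = \mu(h(c(v)_0),e) \sim \blacksmiley$ and $\mu(d^{3s-2}d_{i_0}e) = \mu(h(i_0),e) \sim \blacksmiley$. As the congruence gadget makes hyper-equivalence of $M$-states coincide with~$\simeq$, in both cases $\mu(w) = h(q)$ for some $q \in Q' \cap [\blacksmiley]_{\simeq} = \{\bot\} \cup \{\smiley_i \mid 0 \le i \le s\} \cup \{i_j \mid i \in [2],\, j \in [\ell]\}$.

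Every such~$q$ has $\inlevel_M(q) \ge k$ by Lemma~\ref{lm:kTriv}, so Corollary~\ref{cor:multisets} gives $q \equiv h(q)$, i.e.\ $L_N(\mu(w)) = L_M(q)$; thus the number of errors with prefix~$w$ equals $\abs{L_M(\blacksmiley) \simdiff L_M(q)}$. Because $q$ and~$\blacksmiley$ share a $\simeq$-class, their right-languages agree outside~$\Sigma^*$, so this number is $\abs{(a\{a,b\}^{s-1}) \simdiff (L_M(q) \cap \Sigma^*)}$. A direct evaluation — using that, by the construction, $L_M(\bot) \cap \Sigma^* = \emptyset$, $L_M(\smiley_i) \cap \Sigma^* = \{a,b\}^{s-i}$, $L_M(1_j) \cap \Sigma^* = \{b^{\ell-j},b^{\ell-j+1}\}$ and $L_M(2_j) \cap \Sigma^* = \{b^{\ell-j+1}\}$ — shows this quantity is exactly $2^{s-1}$ for $q \in \{\bot,\smiley_0\}$ and strictly larger otherwise. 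Hence each of the $\abs{E} \cdot (\abs{V}-2) \cdot 2^s$ prefixes of the first family and each of the $3\abs{E}$ prefixes of the second contributes at least $2^{s-1}$ errors, and summing over the two disjoint families yields the lower bound~$m$. For the exact count, $(\dagger)$ pins $\mu(w)$ to one of the cheapest targets on every contributing prefix, so each contributes exactly $2^{s-1}$, while the prefixes $uve$ with $\abs{u} \neq s$ reach~$\bot$ in~$M$ and a short case check shows they add no further error words; the total over all listed prefixes is then exactly~$m$.

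The step I expect to be the main obstacle is the second paragraph: making each identification $\mu(w) = h(q)$ rigorous. It rests on (i)~pinning $\mu$ on the relevant words, using that $s$ is hyper-equivalent only to itself and that $i_0$ has a unique predecessor-word, both consequences of the level gadget (Lemma~\ref{lm:kInlevelII}) and the congruence gadget (Lemma~\ref{lm:kCongII}); (ii)~the fact that the congruence gadget forces hyper-equivalence of $M$-states to coincide with~$\simeq$, so that $\mu(w) \sim \blacksmiley$ really confines~$q$ to the small set above; and (iii)~careful bookkeeping of the gadget letters so that every symmetric difference collapses to the $\Sigma^*$-computation. Once these are in place, the counting and the treatment of~$(\dagger)$ are routine.
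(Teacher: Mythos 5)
Your proof follows essentially the same route as the paper's: split the error words into the two prefix families, use the bijection $h$ of Corollary~\ref{cor:multisets} together with the gadget-enforced hyper-inequivalences to confine $\mu$ on these prefixes to $h(S)$, invoke $q \equiv h(q)$ for $q \in S$, and count the symmetric differences with $L_M(\blacksmiley)$, each at least $2^{s-1}$. Your direct computation that the minimum $2^{s-1}$ is attained exactly for $q \in \{\bot, \smiley_0\}$ matches the paper's proof body (the $\smiley_1$ in the statement of $(\dagger)$ is evidently a slip for $\smiley_0$), and your brief handling of the exact count under $(\dagger)$ is at the same level of detail as the paper's own argument.
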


  \begin{proof}
    We distinguish two cases: whether the error word goes through $i_0$
    for some $i \in [3]$ or it goes through $v$ for some $v \in V$:
    \begin{itemize}
    \item The only way to arrive at~$i_0$ in~$M$ is via the level gadget.  Then by
      Lemma~\ref{lm:kInlevelII} there exists exactly one $w \in
      \delta^{-1}(i_0)$ such that $\abs w = 3s - 1$ and $w$~does not
      end with $v \in V$.  In addition, $\delta(i_0, e) =
      \blacksmiley$ for every $e \in E$.  However, in~$N$ we have
      $\mu(w) \in \{h(1_0), h(2_0), h(3_0)\}$ and $\mu(h(j_0), e) \in
      h(S)$ for every $j \in [3]$ because these are the only
      hyper-equivalent states.  Since $q \equiv h(q)$ for every $q \in
      S$, we have
      \begin{align*}
        \abs{L_N(h(q)) \simdiff L_M(\blacksmiley)} &\geq 2^{s-1} \\
        \abs{L_N(h(\bot)) \simdiff L_M(\blacksmiley)} =
        \abs{L_N(h(\smiley_0)) \simdiff L_M(\blacksmiley)} &= 2^{s-1}
      \end{align*}
      for every $q \in S$ by Lemma~\ref{lm:kB}(ii).  This yields at
      least $2^{s-1}$~error words that start with~$we$, and exactly
      $2^{s-1}$~such error words under condition~$(\dagger)$.
      Consequently, there are at least $3 \cdot 2^{s-1} \cdot \abs
      E$~such error words in total, and exactly that many under
      condition~$(\dagger)$, because the level gadget is reproduced
      exactly.
    \item Next, we consider a word $w = uve$ with $u \in \{a, b\}^s$,
      $v \in V$, and $e \in E$ such that $v \notin e$.  Clearly,
      $\delta(w) = \blacksmiley$, but $\mu(w) \in h(S)$ because~$h(S)$
      are the only states of~$P$ that are hyper-equivalent
      to~$\blacksmiley$.  Since $q \equiv h(q)$ for every $q \in S$
      and $\mu(u) = h(s)$ independently of~$u$, there are $2^s$~such
      words~$u$ and $\abs{L_M(\blacksmiley) \simdiff L_N(h(q))} \geq
      2^{s-1}$ for every $q \in S$ by Lemma~\ref{lm:kB}(ii) because
      $L_N(h(q)) = L_M(q)$.  The same lemma also allows us to conclude
      that $\abs{L_M(\blacksmiley) \simdiff L_N(h(q))} = 2^{s-1}$
      under condition~$(\dagger)$.  Overall, there are $2^s \cdot
      2^{s-1} \cdot \abs E \cdot (\abs V - 2)$ such error words, and
      exactly that many under condition~$(\dagger)$.
    \end{itemize}
    This proves the claim. \qed
  \end{proof}

  Now we investigate the number of errors introduced for a colouring
  violation in the \mbox{$3$-colouring}~$c$.  Let $e = \{v_1, v_2\}
  \in E$ be a violating edge, i.e., $c(v_1) = c(v_2)$.  Then we have
  $\mu(uv_1e) = \mu(uv_2e) \in h(S)$ for every $u \in \{a, b\}^s$.
  However, $\{ \delta(uv_1e), \delta(uv_2e) \} = \{\smiley_0, \frowny
  \}$, which yields at least $2^s \cdot 2^s$ errors by
  Lemma~\ref{lm:kB}(ii).  Since $s > \log_2(\abs V) +2$, we have that
  $2^{2s} > 2^{s+1} \cdot \abs V$, which, together with Claim~\ref{clm:kErrors}
  proves our statement if the graph is not $3$-colourable.

  Finally, let us consider the errors of~$c(M) = \langle P, \Sigma,
  \mu, 0, F \rangle$ provided that $c \colon V \to [3]$ is a proper
  $3$-colouring for~$G$.  Recall the properties mentioned in the proof
  of Lemma~\ref{lm:kYes}.  Since $c(M)$ fulfils property~$(\dagger)$,
  we already identified exactly the errors of Claim~\ref{clm:kErrors}.
  Clearly, all error words pass a state~$h(i_0)$ in $N$ with~$i \in [3]$, which
  yields that we only have to consider error words with prefix
  $uve$~or~$uva$ where $u \in \{a, b\}^*$, $v \in V$, and $e \in E$
  such that $v \in e$.
  \begin{itemize}
  \item We start with the prefix~$uva$.  As already observed we have
    that $\mu^{-1}(s) =$ $\{a, b\}^s =$ $\delta^{-1}(s)$.  Moreover,
    $\mu(s, v) = i_0$ for all $v \in c^{-1}(i)$, whereas we have $\delta(s, v)
    = v$.  Thus, we consider $L_M(v) \simdiff L_{c(M)}(i_0)$.  Since
    $\delta(v, a) = 1_1$ and $\mu(i_0, a) = i_1$, we obtain the
    potential errors
    \[ \{uvab^j \mid j \in \{\ell -1, \ell\}, u \in \{a, b\}^s \}
    \enspace. \] Consequently, we have $2^{s+1} \cdot \abs V$
    potential errors, all of which are of length at most~$s + 2 + \ell
    = k - s + 2 < k$.
  \item Finally, we have to consider the prefixes~$uve$.  If $v \in
    e$, then $\mu(i_0, e) \equiv \delta(v, e)$ by the construction
    of $c(M)$ and so no errors are introduced.
  \end{itemize}
  Summing up all identified error words, we obtain that the DFA~$c(M)$
  commits at most
  \[ 2^{2s-1} \cdot \abs E \cdot (\abs V - 2) + 3 \cdot 2^{s-1} \cdot
  \abs E + 2^{s+1} \cdot \abs V
  \] errors, all of which are of length smaller than~$k$, which also
  proves that the DFA~$c(M)$ is $k$-similar to~$M$.  Moreover, since all
  of its states are pairwise $k$-dissimilar, it is also $k$-minimal.
  \qed
\end{proof}

\end{document}